\tikzset{
  to*/.style={
    shorten >=.25em,#1-to,
    to path={-- node[inner sep=0pt,at end,sloped] {${}^*$} (\tikztotarget) \tikztonodes}
  },
  to*/.default=
}
\newcommand*\printboolean[1]{\ifthenelse{\boolean{#1}}{true}{false}}
\newcommand*\copyboolean[2]{\ifthenelse{\boolean{#2}}{\setboolean{#1}{true}}{\setboolean{#1}{false}}}
\newcommand*\parseboolean[3]{\ifthenelse{\equal{#2}{true} \OR \equal{#2}{false}}{\setboolean{#1}{#2}
  }{\KV@err{Invalid value ``#1'' for key ``#3''.
      Allowed values are ``true'' and ``false''}}}
\newcommand*\jasminfamily{} \newcommand*\jasminfontsize{} \newcommand*\jasminfont{\jasminfamily\jasminfontsize}
\definecolor{jasmindname}{HTML}{0000FF}
\definecolor{jasminkw}{HTML}{008000}
\definecolor{jasmintype}{HTML}{B00040}
\definecolor{jasminstorageclass}{HTML}{B00040}
\definecolor{jasmincomment}{HTML}{408080}
\definecolor{jasminprimitive}{HTML}{000000}
\definecolor{jasminannotation}{HTML}{7D9029}
\definecolor{jasminconstant}{HTML}{801A66}
\definecolor{jasminhighlight}{HTML}{CCFFFF}
\newcommand*\jasmindname[1]{\textcolor{jasmindname}{#1}}
\newcommand*\jasminkw[1]{\textcolor{jasminkw}{\textbf{#1}}}
\newcommand*\jasmintype[1]{\textcolor{jasmintype}{#1}}
\newcommand*\jasmincomment[1]{\textcolor{jasmincomment}{\textit{#1}}}
\newcommand*\jasminconstant[1]{\textcolor{jasminconstant}{#1}}
\newcommand*\jasminopenbrace{\{}
\newcommand*\jasminclosebrace{\}}
\newlength\jasmintab\setlength\jasmintab{1em}
\newcounter{jasmin@indent}
\newcommand*\jasminindent[1]{\hspace*{1ex}\forloop{jasmin@indent}{0}{\value{jasmin@indent} < #1}{\hspace*{\jasmintab}}}
\newcounter{jasmin@lineno}
\newcommand*\jasminlineno@box{\makebox[1.62ex][r]{\scriptsize\arabic{jasmin@lineno}}\kern2pt}
\newcommand*\parselineno[1]{\parseboolean{jasminlineno@print}{#1}{lineno}}
\newcommand*\jasminlineno{\refstepcounter{jasmin@lineno}\ifthenelse{\boolean{jasminlineno@print}}{\jasminlineno@box }{\ifthenelse{\boolean{jasminlineno@global}}{\phantom{\jasminlineno@box}}{}}}
\newcommand*\jasminnewline@before{}
\newcommand*\jasminnewline[1][]{\copyboolean{jasminlineno@global}{jasminlineno@print}\setkeys{jasminnewline}{before=,#1}\newline \jasminnewline@before \jasminlineno \copyboolean{jasminlineno@print}{jasminlineno@global}}
\newcommand*\jasmincode@innerwidth{} \newcommand*\jasmincode@outerwidth{} \newcommand*\jasmincode@outerpos{} 
\newcommand*\jasmincode@rule{\rule{\columnwidth}{0.4pt}}
\newcommand*\jasmincode@drawframe{\ifthenelse{\boolean{jasmincode@frame}}{\jasmincode@rule }{\phantom{\jasmincode@rule}}}
\newenvironment*{jasmincode}[1][]{\setkeys{jasmincode}{frame=true,lineno=true,startingno=1,family=\fontfamily{lmtt},fontsize=\footnotesize,outerwidth=\columnwidth,innerwidth=\columnwidth,outerpos=,#1}\setlength{\parindent}{0pt}\addtocounter{jasmin@lineno}{-1}\begin{minipage}[\jasmincode@outerpos]{\jasmincode@outerwidth}
    \vspace{0pt}\centering \begin{minipage}{\jasmincode@innerwidth}
\jasmincode@drawframe 

      \raggedright \jasminfont \let\\=\jasminnewline \jasminlineno }{

  \raisebox{1ex}{\jasmincode@drawframe}\end{minipage}\end{minipage}}
\newcommand*\jazz[2][]{\setkeys{jasmincode}{family=\fontfamily{lmtt},fontsize=\small,#1}{\jasminfont \mbox{#2}}}
  \crefname{jasmin@lineno}{line}{Line}
\newif\ifNotes\Notestrue
    \newcommand{\colorcomment}[2]{\leavevmode\unskip\space{\color{#1}#2}\xspace\PackageWarning{authorcomments}{Comment:{#2}}}
    \newcommand{\taggedcolorcomment}[3]{\colorcomment{#1}{[\textbf{#2}: #3]}}
\newcommand{\colorcomment}[2]{\leavevmode\unskip\relax\PackageWarning{authorcomments}{Comment:{#2}}}
\newcommand{\taggedcolorcomment}[3]{\leavevmode\unskip\relax}
\newlist{requirelist}{enumerate}{1} \setlist[requirelist]{label=\normalfont\bfseries(\roman*),ref=(\roman*) from Theorem \thetheorem}
\crefname{requirelisti}{requirement}{requirements}
\Crefname{requirelisti}{Requirement}{Requirements}
\newcommand*{\step}{\@ifstar{\step@multistep}{\step@onestep}}
\newcommand*{\step@multistep}[2]{\mathrel{\raisebox{-0.15em}{\ensuremath{{\xrightarrow[{#1}]{\smash{\raisebox{-0.08em}{\ensuremath{{\scriptstyle#2}}}}}\!\!\vphantom{\to}^{\ast}}}}}}
\newcommand*{\step@onestep}[2]{\mathrel{\raisebox{-0.15em}{\ensuremath{{\xrightarrow[{#1}]{\smash{\raisebox{-0.08em}{\ensuremath{{\scriptstyle#2}}}}}}}}}}
\newcommand*{\sem}{\@ifstar{\sem@multistep}{\sem@onestep}}
\newcommand*{\sem@multistep}[4]{\ensuremath{{#1 \step@multistep{#2}{#3} #4}}}
\newcommand*{\sem@onestep}[4]{\ensuremath{{#1 \step@onestep{#2}{#3} #4}}}
\newcommand*{\aux@twoargs}[2]{(\ifx&#1\else{#1}, \fi{#2})}
\newcommand*{\Tobsname@single}{\ensuremath{{T}}}
\newcommand*{\Tobsname@multi}{\ensuremath{{T}}}
\newcommand*{\Tobsname}{\@ifstar{\Tobsname@multi}{\Tobsname@single}}
\newcommand*{\Tobs@single}[2]{\ensuremath{{\Tobsname@single\aux@twoargs{#1}{#2}}}}
\newcommand*{\Tobs@multi}[2]{\ensuremath{{\Tobsname@multi\aux@twoargs{#1}{#2}}}}
\newcommand*{\Tobs}{\@ifstar{\Tobs@multi}{\Tobs@single}}
\NewCommandCopy{\mathphi}{\phi}
\renewcommand*{\phi}{\ensuremath{{\mathphi}}}
\definecolor{rosewater}{HTML}{dc8a78}
\definecolor{flamingo}{HTML}{dd7878}
\definecolor{pink}{HTML}{ea76cb}
\definecolor{mauve}{HTML}{8839ef}
\definecolor{red}{HTML}{d20f39}
\definecolor{maroon}{HTML}{e64553}
\definecolor{peach}{HTML}{fe640b}
\definecolor{yellow}{HTML}{df8e1d}
\definecolor{green}{HTML}{40a02b}
\definecolor{teal}{HTML}{179299}
\definecolor{sky}{HTML}{04a5e5}
\definecolor{sapphire}{HTML}{209fb5}
\definecolor{blue}{HTML}{1e66f5}
\definecolor{lavender}{HTML}{7287fd}
\definecolor{text}{HTML}{4c4f69}
\definecolor{subtext1}{HTML}{5c5f77}
\definecolor{subtext0}{HTML}{6c6f85}
\definecolor{overlay2}{HTML}{7c7f93}
\definecolor{overlay1}{HTML}{8c8fa1}
\definecolor{overlay0}{HTML}{9ca0b0}
\definecolor{surface2}{HTML}{acb0be}
\definecolor{surface1}{HTML}{bcc0cc}
\definecolor{surface0}{HTML}{ccd0da}
\definecolor{base}{HTML}{eff1f5}
\definecolor{mantle}{HTML}{e6e9ef}
\definecolor{crust}{HTML}{dce0e8}
\definecolor{shadecolor}{named}{base}
\colorlet{lightgreen}{green!70}
\colorlet{lightyellow}{peach!70}
\definecolor{darkgreen}{HTML}{dce0e8}
\newcommand{\nonanon}[2]{\if@ACM@anonymous{#2}\else{#1}\fi}
\newif\ifAppendix \Appendixtrue 
\begin{document}

\title{Decompiling for Constant-Time Analysis}

\author{Santiago Arranz-Olmos}
\orcid{0009-0007-7425-570X}
\affiliation{\institution{MPI-SP}
  \city{Bochum}
  \country{Germany}
}
\email{santiago.arranz-olmos@mpi-sp.org}

\author{Gilles Barthe}
\orcid{0000-0002-3853-1777}
\affiliation{\institution{MPI-SP}
  \city{Bochum}
  \country{Germany}
}
\affiliation{\institution{IMDEA Software Institute}
  \city{Madrid}
  \country{Spain}
}
\email{gilles.barthe@mpi-sp.org}

\author{Lionel Blatter}
\orcid{0000-0001-9058-2005}
\affiliation{\institution{MPI-SP}
  \city{Bochum}
  \country{Germany}
}
\email{lionel.blatter@mpi-sp.org}

\author{Youcef Bouzid}
\orcid{0009-0002-1275-1613}
\affiliation{\institution{ENS Paris-Saclay}
  \city{Gif-sur-Yvette}
  \country{France}
}
\email{youcef.bouzid@ens-paris-saclay.fr}

\author{Sören van~der~Wall}
\orcid{0009-0009-4781-8583}
\affiliation{\institution{TU Braunschweig}
  \city{Braunschweig}
  \country{Germany}
}
\email{s.van-der-wall@tu-bs.de}

\author{Zhiyuan Zhang}
\orcid{0009-0000-2669-5654}
\affiliation{\institution{MPI-SP}
  \city{Bochum}
  \country{Germany}
}
\email{zhiyuan.zhang@mpi-sp.org}

\begin{abstract}
The constant-time programming discipline is commonly used to protect
cryptographic libraries against side-channel attacks. However, it is
hard to write constant-time code; moreover, compilers can introduce
constant-time violations. Therefore, it is important to ensure that
assembly code is constant-time. One approach is to show that source
programs are constant-time, and that constant-timeness is preserved by
compilation. In this paper, we explore the methodological soundness
and scalability of the Decompile-then-Analyze approach, a less
conventional alternative that has been suggested in the broader
setting of static analysis. Informally, the Decompile-then-Analyze
approach uses decompilers a front-end for static analysis tools. As a
motivation for our study, we show that current decompilers eliminate
CT vulnerabilities before CT analysis, leading to non-CT programs
being accepted as constant-time. Independently, we provide
\emph{constructed} examples of non-CT, exploitable, programs that are
accepted by two popular CT analysis tools; in both cases the culprit
are program transformations that are used internally prior to CT
analysis and eliminate CT violations.  While our examples do not
invalidate the general approach of these tools, they emphasize the
need for studying the Decompile-then-Analyze approach.

On the methodological side, we define the notion of \emph{CT
transparency}. Informally, a program transformation is CT transparent
if does not eliminate nor introduce CT violations. We also provide
general methods for proving that a transformation is CT transparent,
and show that several transformations of interest are transparent. We
also sketch an extension of CT transparency to speculative
constant-time, which is used by cryptographic software as a
protection against Spectre attacks.

On the practical side, we build a CT-transparent version of the
popular LLVM-based decompiler \textsc{RetDec}{}, and combine it with CT-LLVM,
an existing CT verification tool for LLVM. We evaluate the resulting
tool, called \textsc{CT-RetDec}{} on a benchmark set of real-world
vulnerabilities in binaries, and show that the modifications had
significant impact on how well \textsc{CT-RetDec}{} performs.

\end{abstract}

\maketitle

\section{Introduction}
Decompilers\footnote{See \url{https://decompilation.wiki/} for an overview and pointers to the literature.}
are routinely used in vulnerability analysis to transform binary
programs into source or IR programs on which (manual) analysis can be
carried out. To maximize their benefits, decompilers aim to produce source
or IR programs that are both readable and correctly capture the
behavior of their corresponding binary programs. Of course, achieving
correctness and readability simultaneously is intrinsically hard. Yet,
in spite of the challenge, there has been significant progress towards
this goal, through a combination of technical
developments~\cite{BrumleyLSW13,YakdanEGS15,schulte2018evolving,DBLP:conf/ccs/GussoniFFA20,BurkPKV22,BasqueBGOMBDS024,DBLP:conf/uss/ZouKWGBT24},
and extensive
evaluations~\cite{DBLP:conf/uss/AndriesseCVSB16,yakdan2016helping,dewolf,DBLP:conf/issta/LiuW20,DBLP:conf/acsac/MatteiMKV22,DBLP:conf/issta/CaoZL024,DramkoLSVG24}.

More recently, researchers have started to explore the possibility to
use decompilation to conduct static analysis for properties such as memory
layout and memory safety
violations~\cite{DBLP:conf/sp/LiuYWB22,DBLP:conf/asiaccs/MantovaniCSB22,DBLP:conf/asplos/ZhouYHCZ24}. The
idea is simple: take a binary program, use a decompiler to produce an
IR or source program, and finally run a static analysis on the source
or IR\@. While less direct, this approach offers complementary benefits
over binary-level analysis, in situations where source-level analysis
tools are more precise, more scalable, or offer features not supported
by their binary-level counterparts.

\paragraph*{Problem Statement}
In this paper, we consider the question of using decompilation for
constant-time
analysis---see~\cite{BarbosaBBBCLP21,DBLP:conf/sp/JancarFBSSBFA22,GeimerVRDBM23}
for an overview of the field.  Constant-time (CT) analysis aims to
ensure that programs are protected against timing side-channel
attacks, in which
the attacker recovers secrets by observing the execution of a victim program.
The prominent timing side-channel attacks observe the accessed memory locations \cite{bernstein2005cache} and the control flow of the victim program~\cite{DBLP:conf/icisc/MolnarPSW05,Koc96}.
The constant-time policy mitigates these side-channel attacks:
It mandates that the program does not perform secret dependent memory accesses and does not branch on secrets.

The main reasons to study decompilation for constant-time analysis are:
\begin{enumerate}
\item The class of attacks mitigated by the constant-time policy, namely
  timing side-channel attacks, is devastating.
  In particular, they allow (possibly remote) attackers to recover cryptographic keys very
  efficiently~\cite{Koc96,bernstein2005cache}.
\item The constant-time policy, despite its relative simplicity, is very hard to achieve, even for
  expert cryptographic developers.
\item Many popular CT analysis tools\footnote{See
\url{https://crocs-muni.github.io/ct-tools/} for a list of tools.}
  operate at source or IR level.
These tools can not be used to analyze binaries without a decompiler.
\item Mainstream compilers introduce violations of the CT policy (CT violations) to binary
  programs~\cite{DBLP:conf/eurosp/SimonCA18,DanielBR20,schneider2024breaking,geimer2025fun}, and there is little prospect that the issue will
  be fixed in the near term.
  These CT violations are undetectable in the source code.
\end{enumerate}
Therefore, using decompilation for constant-time analysis has the potential
to improve a pressing and as yet unresolved security problem. In order
to evaluate this potential, our work addresses the following research
questions:
\begin{description}
\item[RQ1] Are existing decompilers suitable for being used in
  combination with constant-time analysis?

\item[RQ2] Are there guiding principles to make decompilers suitable
  to constant-time analysis?

\item[RQ3] Is there a practical decompiler that adheres to these
  principles and can be used for constant-time analysis of real-world
  code bases?
\end{description}
We answer \textbf{RQ1} negatively by exhibiting a real-world binary
program that contains an exploitable CT violation. The program is 
decompiled by RetDec, a state-of-the-art LLVM decompiler, into a
constant-time LLVM IR program that is (rightly) proved secure by LLVM
CT analysis tools. Our binary program stems from the reference
implementation of ML-KEM, a recently standardized key encapsulation
mechanism that is provably secure against quantum adversaries. However, it is
vulnerable to the Clangover (CVE-2024-37880)
compiler-induced side-channel vulnerability.

We answer \textbf{RQ2} positively by introducing the notion of
CT transparent transformations. Informally, a transformation is
CT transparent if it neither introduces nor eliminates CT violations,
generating no false positives or false negatives in the analysis
results.
We develop rigorous techniques for proving that program
transformations are CT transparent by extending CT simulations,
originally developed in~\citet{DBLP:conf/csfw/BartheGL18} to prove
constant-time preservation, which is a weaker property than CT transparency.
Specifically, we identify a condition on simulations,
called \emph{\mbox{\ensuremath{{\mathcal{PC}}}-injectivity}}, which allows us
to reuse existing simulations from~\citet{DBLP:conf/csfw/BartheGL18}
and related works, to prove CT transparency with little additional 
cost. Using our techniques, we show that many common program
transformations are CT transparent; for others, we provide simple
examples to explain where and why they fail to be transparent.

We answer \textbf{RQ3} positively by developing a toolchain, called
\textsc{CT-RetDec}{}, that combines a modified version of the decompiler \textsc{RetDec}{}
and CT-LLVM, a CT analysis tool for LLVM
IR. We then use \textsc{CT-RetDec}{} to analyze binary code to find 
CT violations.

As a contribution of independent interest, we observe that \textsc{CT-Verif}{}~\cite{DBLP:conf/uss/AlmeidaBBDE16}
and \textsc{BinSec}{}~\cite{DanielBR20}, two popular CT analysis tools, use non-CT transparent
converters to transform input programs into another representation on
which the CT analysis is performed. In both cases, we can leverage
non-transparency of converters to craft \emph{constructed} non-CT
examples that are accepted by the tools. While our examples do not
invalidate the general approach of these tools for real-world code, we
recommend that CT analysis tools rely on transparent converters, and
that such converters are clearly exposed in the implementation.

\paragraph*{Summary of Contributions}
In summary, our main contributions are:
\begin{itemize}[topsep=0pt]
\item A demonstration that state-of-the-art decompilers are not transparent and cannot be used
  for constant-time analysis;
\item A formalization of transparency and techniques to prove
  that a transformation is transparent;
\item An analysis of common program transformations, providing for
  each of them a proof of transparency, or a counterexample to
  transparency;
\item A toolchain, called \textsc{CT-RetDec}{}, for analyzing binary code against the CT policy, 
    built with transparent decompilation in mind; and
\item A study of the transparency of converters employed in CT analysis tools.
\end{itemize}

\paragraph*{Concurrent Approaches to CT Analysis}
We refer to the approach of applying CT analysis after decompilation as
the Decompile-then-Analyze approach. 
This approach has been adopted implicitly as an internal component in 
prior CT analysis tools, such as CacheS~\cite{0011BL0ZW19}, and has also 
been used explicitly as an external front-end, for example by decompiling 
binaries with Ghidra when analyzing cryptographic protocol implementations~\cite{abs-2511-11385}.

There are two concurrent approaches to deliver CT guarantees for binary code
next to the Decompile-then-Analyze approach:
Analyze-then-Compile and Binary Analysis.
Analyze-then-Compile performs source- or IR-level analysis in order to prove that the program does not violate the CT policy,
and then compiles the program into the binary.
However, it is well-known that secure source programs can be compiled into binary programs that contain CT violations,
leading to a dangerous compiler security gap~\cite{DBLP:conf/sp/DSilvaPS15,DBLP:conf/eurosp/SimonCA18,DBLP:conf/uss/XuLDDLWPM23}.
Secure compilation aims to address this gap by integrating security
considerations into compiler
developments~\cite{DBLP:conf/csfw/AbateB0HPT19,DBLP:journals/toplas/AbateBCDGHPTT21}. A
specific line of work within secure compilation focuses on preserving
CT by compilation~\cite{DBLP:conf/csfw/BartheGL18,DBLP:journals/pacmpl/BartheBGHLPT20,DBLP:conf/ccs/BartheGLP21}. However, as stated above, there is little
prospect of mainstream compilers to adopt these CT preservation guarantees.

The Binary Analysis approach performs CT analysis on the binary-level, using
e.g., \textsc{BinSec}{}~\cite{DanielBR20}. However, many approaches to CT analysis
depend on general-purpose analyses, e.g., alias analysis, that are
easier to perform at IR level. Furthermore, CT analysis may
involve complex reasoning that goes beyond automated
methods, e.g., to justify declassifying some information, or to
reason about leakage in refined models. For such cases, binary-level
analysis is not an option.\footnote{Admittedly, our paper focuses on the
baseline CT policy and does not deal with declassification nor leakage
in refined models, and leaves these extensions for future work.}

\paragraph*{Paper Structure}
\Cref{sec:motivating-example} presents a motivating example and
empirical test results, demonstrating that transparency failures in
decompilers are a systematic issue.
\Cref{sec:setting,sec:proof-techniques} give a formal definition of
\ensuremath{{\text{CT}}}{} transparency and a technique for proving transparency.
\Cref{sec:passes} investigates the transparency of common
transformations.
Then, \cref{sec:ct-retdec} introduces \textsc{CT-RetDec}{} and evaluates its
performance in finding \ensuremath{{\text{CT}}}{} violations.
\Cref{sec:cttools} presents two CT analysis tools that fail
transparency on constructed input binaries.
Finally, \cref{sec:speculation} extends the formalism to speculative
constant-time.

\paragraph*{Artifact}
We present an artifact containing a mechanization of the general version
of \cref{thm:ct:soundness} and the experiments described in the paper in
\cite{secdec-artifact}.

\ifAppendix\else
  \paragraph*{Extended Version}
  The extended version of this paper includes four appendices and may
  be found in \url{https://doi.org/10.48550/arXiv.2501.04183}.
  \fi
{}

\section{Motivating Example and Impact}\label{sec:motivating-example}

In this section, we investigate the Decompile-then-Analyze approach on a
concrete real-world side-channel vulnerability to motivate our study of CT
transparency. We discover that decompilation removes the vulnerability from
the program, so that subsequent CT analysis (correctly) reports the absence of any
vulnerability. We then empirically test five state-of-the-art decompilers on
artificial, minimal counterexamples to demonstrate that this issue is not
limited to a single decompiler, but happens systematically throughout practical
tools.

\subsection{Clangover: A Vulnerability Hidden by Decompilation}

Clangover is a real-world side-channel vulnerability (CVE-2024-37880),
that exists in the binary program of the reference implementation of ML-KEM,
the standardized post-quantum key encapsulation mechanism.
The vulnerability is in the \jazz{poly\_frommsg(r, msg)}
function that turns a secret message (\jazz{msg})
into the coefficients of a polynomial bit by bit.
To do so, it loops over \jazz{msg},
and for each bit, it sets the corresponding coefficient of the polynomial
pointed by \jazz{r} to either zero or a constant.
The left-hand side of~\Cref{lst:clangover}
displays the vulnerable excerpt of the binary code of \jazz{poly\_frommsg} in
\texttt{x86-64} syntax.
It is the \jazz{for}-loop that iterates over each byte's
bits, extracts the bit, and sets the coefficient for that bit. Line~4 loads
the current byte of \jazz{msg} from memory and Line~6 extracts the
current bit from it. Line~7 branches on the
extracted bit in order to write the correct coefficient to memory in Line~10.
The conditional branch on the secret bits of \jazz{msg}
constitutes the CT violation: it is a branching condition that depends on a secret.

\begin{figure}
  \begin{jasmincode}[fontsize=\scriptsize,outerwidth=36ex,outerpos=t]
    \jasminindent{0}\jasmincomment{; rsi:msg - rax:msg\_offset - r8d:byte}\\
    \jasminindent{0}\jasmincomment{; rcx:j - rdx:coef - rdi:poly\_offset}\\
    \jasminindent{0}.LBB0\_2: \jasmincomment{; for(j = 0; j < 8; j++)}\\
    \jasminindent{1}movzx   r8d, byte ptr [rsi + rax]\\
    \jasminindent{1}xor     edx, edx\\
    \jasminindent{1}bt      r8d, ecx\\
    \jasminindent{1}jae     .LBB0\_4 \jasmincomment{; leaks bit of msg}\\
    \jasminindent{1}mov     edx, \jasminconstant{1665}\\
    \jasminindent{0}.LBB0\_4:\\
    \jasminindent{1}mov     word ptr [rdi + \jasminconstant{2}*rcx], dx\\
    \jasminindent{1}inc     rcx\\
    \jasminindent{1}cmp     rcx, \jasminconstant{8}\\
    \jasminindent{1}jne     .LBB0\_2
  \end{jasmincode}\hfill \begin{jasmincode}[fontsize=\scriptsize,outerwidth=50ex,outerpos=t]
    \jasminindent{0}\\
    \jasminindent{0}\\
    \jasminindent{0}\jasminkw{for} (\jasmintype{int64\_t} j = \jasminconstant{0}; j < \jasminconstant{8}; j++) \jasminopenbrace{}\\
    \jasminindent{1}byte = *(\jasmintype{char*})(msg\_offset + msg);\\
    \jasminindent{0}\\
    \jasminindent{0}\\
    \jasminindent{1}coef = (\jasminconstant{1} <{}< (\jasmintype{int32\_t})j \% \jasminconstant{32} \& (\jasmintype{int32\_t})byte) == \jasminconstant{0} ?\\
    \jasminindent{3}\jasminconstant{0} : \jasminconstant{1665};       \jasmincomment{// no leak}\\
    \jasminindent{0}\\
    \jasminindent{1}*(\jasmintype{int16\_t*})(\jasminconstant{2} * j + poly\_offset) = coef;\\
    \jasminindent{0}\\
    \jasminindent{0}\\
    \jasminindent{0}\jasminclosebrace{}
  \end{jasmincode}
  \caption{Decompiling Clangover removes the CT vulnerability.}\label{lst:clangover}\end{figure}

To employ the Decompile-the-Analyze approach,
we feed the binary program into the off-the-shelf decompiler \textsc{RetDec}{},
a state-of-the-art decompiler that emits LLVM IR\@.
We then analyze the decompiled program using \textsc{CT-Verif}{},
a state-of-the-art CT analysis tool for LLVM IR~\cite{DBLP:conf/uss/AlmeidaBBDE16}.
The corresponding decompiled \jazz{for}-loop is shown on the right-hand side of
\Cref{lst:clangover}, where we manually recovered sensible variable names
for readability. Critically, the decompiler has replaced the conditional
branch on the secret bits by a conditional move instruction using the \jazz{?} operator (Line~7).
This removes the CT violation, as conditional move instructions do not
leak their conditional~\cite{DOIT}.
Indeed, \textsc{CT-Verif}{} correctly identifies that the decompiled program is CT\@.
Hence, using \textsc{RetDec}{} for the Decompile-then-Analyze approach misses CT
vulnerabilities.

\subsection{Transparency Failures in Decompilers}\label{sec:decompiler-eval}

The previous section highlights the risk of the Decompile-then-Analyze approach
on a singular binary program and decompiler.
This naturally raises the question of whether it is an isolated incident, or
whether this is a systematical issue in decompilers.
We address this question by empirically testing state-of-the-art decompilers
on carefully crafted non-CT binary code snippets. They are minimal examples
that contain CT violations.
The CT violations leak a secret through either a dead load/store instruction, or a spurious or simple branch.
We defer the presentation of the examples and the offending transformations to \Cref{sec:nontranspasse}.
Our test spans five decompilers: \textsc{Angr} 9.2.127~\cite{angr}, \textsc{BinaryNinja}
4.1.5747~\cite{binaryninja}, \textsc{Ghidra} 11.2~\cite{ghidra}, \textsc{Hex-Rays}
8.4.0.240320~\cite{hexrays} and \textsc{RetDec}{} 5.0~\cite{retdec} using the
Decompiler Explorer~\cite{dogbolt}.
In \Cref{tab:passes}, we report if a decompiler removes or keeps the CT violation. 
The result is that each of the five decompilers removes CT violations.
Therefore, the issue is indeed systematic throughout practical decompilers.

\begin{table}
  \caption{Analysis of five decompilers. A \textcolor{red}{\Lightning}{} (resp.\ -{})
    means the decompiler removes (resp.\ keeps) the CT violation.
}\label{tab:case_study}
  \centering\small \footnotesize
  \resizebox{\linewidth}{!}{
  \begin{tabular}{ccccccl}
    \toprule
    Example
    & \multicolumn{1}{l}{\textsc{Angr}}
    & \multicolumn{1}{l}{\textsc{BinaryNinja}}
    & \multicolumn{1}{l}{\textsc{Ghidra}}
    & \multicolumn{1}{l}{\textsc{Hex-Rays}}
    & \multicolumn{1}{l}{\textsc{RetDec}}
    & Root cause \\
    \midrule
    \cref{lst:clangover}
    & \textcolor{red}{\Lightning}{}
    & -
    & -
    & -
    & \textcolor{red}{\Lightning}{}
   & If Conversion\\
   \cref{lst:branch_coalescing}
    & \textcolor{red}{\Lightning}{}
    & \textcolor{red}{\Lightning}{}
    & -
    & \textcolor{red}{\Lightning}{}
    & \textcolor{red}{\Lightning}{}
   & Branch Coalescing \\
  \cref{lst:empty_branch_coalescing}
   & -
    & \textcolor{red}{\Lightning}{}
    & \textcolor{red}{\Lightning}{}
    & \textcolor{red}{\Lightning}{}
    & \textcolor{red}{\Lightning}{}
   & Empty Branch Coalescing \\
 \cref{lst:dead_load}
    & \textcolor{red}{\Lightning}{}
    & \textcolor{red}{\Lightning}{}
    & \textcolor{red}{\Lightning}{}
    & \textcolor{red}{\Lightning}{}
    & \textcolor{red}{\Lightning}{}
 &  Dead Load Elimination\\
 \cref{lst:self_store}
    & -
    & -
    & -
    & -
    & \textcolor{red}{\Lightning}{}
  & Dead Store Elimination\\
    \bottomrule
  \end{tabular}
  }
\end{table}

\section{Transparency}\label{sec:setting}
This section reviews the constant-time property using an
abstract model of computation and introduces the notion of transparent
transformations.

We keep the programming language abstract and say that 
\ensuremath{{\mathcal{L}}}{} is a set of \emph{programs}.  The semantics of a program is given
by transitions on \emph{states} \ensuremath{{\mathcal{S}}}{} that capture, for instance,
the values of registers and memory and the current program
point, at a point in time.  These transitions constitute a relation
\ensuremath{{{\xrightarrow{}} \subseteq \ensuremath{{\mathcal{S}}} \times \ensuremath{{\mathcal{S}}}}}, where we
write \ensuremath{{s \xrightarrow{} s'}} to indicate that a state~\(s\)
transitions to state~\(s'\). The language also specifies which states
are \emph{final}, i.e., states where the program has terminated. We
assume that the semantics is deterministic, i.e., if \ensuremath{{s
  \xrightarrow{} s'}} and \ensuremath{{s \xrightarrow{} s''}} then \ensuremath{{s' =
  s''}}, and safe, i.e., for every $s$, either $s$ is final, or
there exists $s'$ such that $\ensuremath{{s \xrightarrow{} s'}}$.

\paragraph*{Leakage}
We extend this model to capture side-channel leakage by instrumenting
transitions with \emph{observations} (denoted \ensuremath{{\mathcal{O}}}{}): the
transition \sem{s}{}{o}{s'} indicates that going from \(s\) to \(s'\)
emits the observation~\(o \in \ensuremath{{\mathcal{O}}}\).
Observations represent what is leaked by the transition, i.e., what an
attacker can learn.
For example, in the constant-time leakage model, a transition that
performs a memory load emits an observation that contains the address of
the load, and a conditional branch emits an observation containing its
condition.
These values are leaked because an attacker who observes the data and instruction caches can compute them, respectively.

In this work, we assume that observations reveal the control flow of a program~\cite{DBLP:conf/icisc/MolnarPSW05}.
To do so, we assume that every state \(s\) has a \emph{program point}
(we write \ensuremath{{\mathcal{PC}}}{} for the set of program points), denoted \ensuremath{{\ensuremath{{\mathsf{pc}}}({s})}}.
For example, the program point in an assembly language is the program
counter, and in a structured language it is the remaining code to be
executed.
Formally, the assumption that observations reveal the program point means:
for every two states at the same program point
\ensuremath{{\ensuremath{{\ensuremath{{\mathsf{pc}}}({s_1})}} = \ensuremath{{\ensuremath{{\mathsf{pc}}}({s_2})}}}}, that step with the
same observation \sem{s_1}{}{o}{s_1'} and
\sem{s_2}{}{o}{s_2'}, we have that the resulting states
are also at the same program point, i.e.,
\ensuremath{{\ensuremath{{\ensuremath{{\mathsf{pc}}}({s_1'})}} = \ensuremath{{\ensuremath{{\mathsf{pc}}}({s_2'})}}}}.

\newcommand*{\lblnil}{refl}
\newcommand*{\lblcons}{trans}
\newcommand*{\refnil}{\hyperref[fig:sem:multi-step]{\textsc{\lblnil}}}
\newcommand*{\refcons}{\hyperref[fig:sem:multi-step]{\textsc{\lblcons}}}
\begin{figure}
  \begin{gather*}
    \inferrule[\lblnil]{
    }{
      \sem*{s}{}{\ensuremath{{\epsilon}}}{s}
    }
    \qquad
    \inferrule[\lblcons]{
      \sem{s}{}{o}{s'}\\
      \sem*{s'}{}{\ensuremath{{\boldsymbol{o}}}}{s''}
    }{
      \sem*{s}{}{o \mathop{\cdot} \ensuremath{{\boldsymbol{o}}}}{s''}
    }
  \end{gather*}
  \caption{Multi-step execution semantics.}\label{fig:sem:multi-step}
\end{figure}

\paragraph*{Program Behavior}
We define the behavior of programs in terms of executions, which
comprise multiple steps.
\Cref{fig:sem:multi-step} defines executions as the reflexive,
transitive closure \sem*{s}{}{\ensuremath{{\boldsymbol{o}}}}{s'}
of~\({}\xrightarrow{}\).
We accumulate the observations leaked during the execution in a list, denoted with a bold~\ensuremath{{\boldsymbol{o}}}{}.
The length of \ensuremath{{\boldsymbol{o}}}{}, denoted \ensuremath{{\left|\ensuremath{{\boldsymbol{o}}}\right|}}, matches the number of steps in the execution.

A program $P\in \ensuremath{{\mathcal{L}}}$ starts its
execution with a list of \emph{input values}, taken from an input
domain~\ensuremath{{\mathcal{I}}}{}.
Inputs capture the interaction of the program with
its environment, e.g., the arguments given to the entry point of a
program. The inputs fully determine the initial state of the program,
thus we write \ensuremath{{P(i) \in \ensuremath{{\mathcal{S}}}}} for
the initial state of~\(P\) on inputs~\(i \in \ensuremath{{\mathcal{I}}}\).
The behavior of a program are the observations along all its executions,
as follows.

\begin{definition}[Program Behavior]\label{def:beh}
  The behavior \ensuremath{{\mathit{Beh}({P}, {i})}} of program~\(P\) on input~\(i\) is the set of
  leakage traces starting from~\(P(i)\), i.e.,
  \ensuremath{{
    \ensuremath{{\mathit{Beh}({P}, {i})}} \triangleq
    \{\ensuremath{{\boldsymbol{o}}}\mid\sem*{P(i)}{}{\ensuremath{{\boldsymbol{o}}}}{s}\}
  }}.
\end{definition}

\paragraph*{Constant-Time}
As usual with noninterference definitions, our security property is
based on an \emph{indistinguishability} relation
\ensuremath{{\phi \subseteq \ensuremath{{\mathcal{I}}} \times \ensuremath{{\mathcal{I}}}}} on inputs.
Indistinguishability expresses what part of the input is public.
For example, if the inputs to a program are a secret message \(m\) and
its public length \(n\), it is appropriate to define
\ensuremath{{(m, n) \mathrel{\mathphi} (m', n') \triangleq n = n'}}, as the secret
message is not known to an attacker, while the length of the message
is known.
The intuition is that an adversary cannot distinguish the two
inputs \ensuremath{{(m, n)}} and \ensuremath{{(m', n')}} if the public part coincides,
i.e., the lengths \(n\)~and~\(n'\) are the same.

A program \(P\) is \emph{constant-time} w.r.t.\ an indistinguishability
relation \phi{} (denoted \(P\) is \ensuremath{{\phi\text{-CT}}}) if the observations
generated by executions starting from indistinguishable inputs produce
the same observations.
We use the term \emph{\ensuremath{{\phi\text{-CT}}}{} violation} to refer to a difference in
the observations generated by a program.

\begin{definition}[\ensuremath{{\phi\text{-CT}}}]\label{def:ct}
  \(P\) is \ensuremath{{\phi\text{-CT}}}{} if for all pairs of inputs \ensuremath{{i_1 \mathrel{\mathphi} i_2}}
  we have \ensuremath{{\ensuremath{{\mathit{Beh}({P}, {i_1})}} = \ensuremath{{\mathit{Beh}({P}, {i_2})}}}}.
\end{definition}

\paragraph*{Transparency}
We now turn to the interactions between program transformation and
constant-time.
Consider a transformation \ensuremath{{\ensuremath{{\llparenthesis\,{\cdot}\,\rrparenthesis}} :
  \ensuremath{{\mathcal{L}}}_s \to \ensuremath{{\mathcal{L}}}_t}} that maps programs in an \emph{input}
language, denoted \(\ensuremath{{\mathcal{L}}}_s\), to programs in an \emph{output}
language, denoted \(\ensuremath{{\mathcal{L}}}_t\) (throughout the paper, we use these
subscripts to refer to input and output languages). We want to reason
about whether \ensuremath{{\llparenthesis\,{\cdot}\,\rrparenthesis}} introduces or removes \ensuremath{{\phi\text{-CT}}}{} violations.
We consider three relevant properties, as follows.

\begin{definition}[Reflection, Preservation, and Transparency]\label{def:ct:transparency}
  We say that a program transformation \ensuremath{{\ensuremath{{\llparenthesis\,{\cdot}\,\rrparenthesis}} : \ensuremath{{\mathcal{L}}}_s \to
    \ensuremath{{\mathcal{L}}}_t}} between an input language $\ensuremath{{\mathcal{L}}}_s$ and an output language $\ensuremath{{\mathcal{L}}}_t$
  \begin{itemize}
      \item Reflects \ensuremath{{\text{CT}}}{} if for each \(P\) and $\phi$,
    \ensuremath{{\llparenthesis\,{P}\,\rrparenthesis}} is \ensuremath{{\phi\text{-CT}}}{} implies \(P\) is \ensuremath{{\phi\text{-CT}}}{};
\item Preserves \ensuremath{{\text{CT}}}{} if for each $P$ and $\phi$,
    \(P\) is \ensuremath{{\phi\text{-CT}}}{} implies \ensuremath{{\llparenthesis\,{P}\,\rrparenthesis}} is \ensuremath{{\phi\text{-CT}}}{}; and
\item Is \ensuremath{{\text{CT}}}{} transparent if it both reflects and preserves \ensuremath{{\text{CT}}}{}.
  \end{itemize}
\end{definition}

Indeed, if $\ensuremath{{\llparenthesis\,{\cdot}\,\rrparenthesis}}$ reflects $\ensuremath{{\text{CT}}}{}$, it does not remove a $\ensuremath{{\phi\text{-CT}}}{}$ violation for any indistinguishability relation~$\phi$ and input program $P$.
Similarly, if $\ensuremath{{\llparenthesis\,{\cdot}\,\rrparenthesis}}$ preserves $\ensuremath{{\text{CT}}}{}$, it does not introduce any $\ensuremath{{\phi\text{-CT}}}{}$ violations.

\section{Proof Techniques}\label{sec:proof-techniques}
In this section, we present a method to prove the CT transparency of a
program transformation, drawing on existing work in the area of secure
compilation that targets CT preservation, for instance~\cite{DBLP:conf/csfw/BartheGL18,DBLP:journals/pacmpl/BartheBGHLPT20,DBLP:conf/ccs/BartheGLP21}.

We aim to establish a \emph{simulation} between the input program and
output program of the transformation.  A simulation links states of
the input and the output programs so that the transitions of the
output program can be mimicked by the input program.  In order to
preserve CT, this approach additionally relates input and output
observations: output observations must be expressible as a function of
input observations, which is called the observation transformer.  To
extend this approach to CT transparency, our simulations pose an
additional requirement on the observation transformers.

We start our presentation with a simplified version of our simulations
to build some intuition, and present the general version afterward.

\paragraph{Lock-Step Simulations for CT Transparency}
The basic form of simulations are \emph{lock-step} simulations,
where every step of the output program is mimicked by precisely one step
of the input program.

\begin{definition}[Lock-Step Simulation Diagram]\label{def:ct:lock-step-diagram}
  A relation \ensuremath{{{\sim} \subseteq \ensuremath{{\mathcal{S}}}_s \times \ensuremath{{\mathcal{S}}}_t}}
  satisfies a lock-step simulation diagram w.r.t.\ an observation
  transformer \ensuremath{{\Tobsname : \ensuremath{{\mathcal{O}}}_s \to \ensuremath{{\mathcal{O}}}_t}} if
  for every pair of related states \ensuremath{{s \sim t}} in which the output
  program steps \sem{t}{}{o_t}{t'}, then the input program
  steps \sem{s}{}{o_s}{s'} such that \ensuremath{{s' \sim t'}} and
  \ensuremath{{o_t = \Tobs{}{o_s}}}.

  \noindent \begin{minipage}{0.70\linewidth}
    In the diagram, the relation on the left and the step at the bottom
    (in \textcolor{RoyalBlue}{thin blue}) are premises, and the relation
    on the right and the step at the top (in
    \textcolor{RedOrange}{\textbf{bold red}}) are conclusions.
  \end{minipage}\hfill \begin{minipage}{0.28\linewidth}
    \centering \begin{tikzpicture}
      \node[RoyalBlue] (s)  at (0   , 1.3) {\(s\)};
      \node[RedOrange] (s') at (2.3 , 1.3) {\(\bm{s'}\)};
      \node[RoyalBlue] (t)  at (0   , 0)   {\(t\)};
      \node[RoyalBlue] (t') at (2.3 , 0)   {\(t'\)};

      \draw[dotted,RoyalBlue] (s) -- (t) node[midway,left] {\(\sim\)};
      \draw[dotted,thick,RedOrange] (s') -- (t') node[midway,left] {\(\bm{\sim}\)};

      \draw[->,thick,RedOrange] (s) -- (s') node[midway,above] {\(o_s\)};
      \draw[->,RoyalBlue] (t) -- (t') node[midway,above]
        {\(o_t \textcolor{RedOrange}{{} = \Tobs{}{o_s}}\)};
    \end{tikzpicture}
  \end{minipage}
\end{definition}

Intuitively, the observation transformer~\Tobsname{} explains the
observations of the output program as a function of the
observations of the input program.
This guarantees preservation because if the input program is \ensuremath{{\phi\text{-CT}}}{},
indistinguishable inputs give equal observations \ensuremath{{o_s = o_s'}},
and, thus, also the output program produces the same observations
\ensuremath{{\Tobs{}{o_s} = \Tobs{}{o_s'}}}.

CT reflection, however, aims for the dual of preservation: we want that
if the output program is \ensuremath{{\phi\text{-CT}}}{}, then the input program should be
\ensuremath{{\phi\text{-CT}}}{} as well (recall \cref{def:ct:transparency}).
Rather than crafting a new kind of simulation, we identify an additional
requirement on \Tobsname{} that guarantees CT reflection: injectivity.
If the observation transformer is injective, we can flip the argument
around: when the output program's observations are equal,
\ensuremath{{o_t = \Tobs{}{o_s} = \Tobs{}{o_s'} = o_t'}},
injectivity makes the input program's observations equal as well,
\ensuremath{{o_s = o_s'}}.
The following theorem identifies the sufficient conditions for using a
lock-step simulation to prove that a transformation also reflects CT, i.e., that it is CT transparent.

\begin{theorem}[Soundness of Lock-Step Simulations]\label{thm:ct:lock-step-soundness}
  A program transformation
  \ensuremath{{\ensuremath{{\llparenthesis\,{\cdot}\,\rrparenthesis}}{} : \ensuremath{{\mathcal{L}}}_s \to \ensuremath{{\mathcal{L}}}_t}} is CT transparent if
  for every input program~\(P\) there exist \(\sim\) and \Tobsname{}
  such that
  \begin{requirelist}
  \item $\sim$ satisfies a lock-step simulation diagram
    w.r.t.\ $\Tobsname$;\label{thm:ct:lock-step-soundness:simulation}
  \item Initial states are related: \ensuremath{{P(i) \sim \ensuremath{{\llparenthesis\,{P}\,\rrparenthesis}}(i)}} for
    every \(i\); and\label{thm:ct:lock-step-soundness:initial}
  \item \Tobsname{} is injective.\label{thm:ct:lock-step-soundness:injectivity}
  \end{requirelist}\end{theorem}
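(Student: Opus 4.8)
The plan is to derive both halves of CT transparency (\cref{def:ct:transparency}) --- reflection and preservation --- from one auxiliary lemma that lifts the lock-step diagram of \cref{def:ct:lock-step-diagram} from single steps to whole executions. Fix an input program $P$ and an arbitrary indistinguishability relation $\phi$, and take the relation $\sim$ and observation transformer $\Tobsname$ that the hypotheses supply for $P$. First I would extend $\Tobsname$ to leakage traces by applying it position by position, written $\widehat{\Tobsname}$; this map preserves length, and --- the key fact for the reflection direction --- it is injective as soon as $\Tobsname$ is, because $\widehat{\Tobsname}(\obs) = \widehat{\Tobsname}(\obs')$ forces the two traces to have equal length and equal entries.

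The core step is the lifting lemma: for every output execution $\sem*{t}{}{\obs_t}{t'}$ and every input state $s$ with $s \sim t$, there is an input execution $\sem*{s}{}{\obs_s}{s'}$ with $s' \sim t'$ and $\obs_t = \widehat{\Tobsname}(\obs_s)$. I would prove this by induction on the derivation in \cref{fig:sem:multi-step}: the reflexive case takes the empty input execution; in the transitive case, split the output run into a first step $\sem{t}{}{o_t}{t_1}$ followed by a tail $\sem*{t_1}{}{\obs}{t'}$, apply \cref{def:ct:lock-step-diagram} to the first step to obtain $\sem{s}{}{o_s}{s_1}$ with $s_1 \sim t_1$ and $o_t = \Tobs{}{o_s}$, apply the induction hypothesis to $s_1 \sim t_1$, and prepend $o_s$ to the trace. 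Determinism of the input semantics makes the input execution produced this way the unique one of its length out of $s$.

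Combining the lemma with the hypothesis $P(i) \sim \comp{P}(i)$ and determinism of both semantics, I would show that for every input $i$ the maximal leakage trace $U_i$ of $\comp{P}$ on $i$ is the image under $\widehat{\Tobsname}$ of the maximal leakage trace $V_i$ of $P$ on $i$ (with ``maximal trace'' meaning the run is either infinite or ends in a final state --- this is exactly safety --- and the statement read on every finite prefix when the runs are infinite). The lemma by itself gives $U_i = \widehat{\Tobsname}$ of only a \emph{prefix} of $V_i$; to obtain all of $V_i$ one also needs that $\sim$ relates final input states to final output states and conversely, so that the output run gets stuck exactly when the input run does. I would record this final-state clause explicitly as part of the simulation: the bare diagram does not force it, and it is what keeps the two executions the same length. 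By \cref{def:beh}, $\beh{P}{i}$ and $\beh{\comp{P}}{i}$ are then precisely the prefix-closures of $V_i$ and $U_i$.

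Both directions now follow. If $P$ is \phiCT{} and $i_1 \relphi i_2$, then $V_{i_1} = V_{i_2}$, hence $U_{i_1} = \widehat{\Tobsname}(V_{i_1}) = \widehat{\Tobsname}(V_{i_2}) = U_{i_2}$, so $\beh{\comp{P}}{i_1} = \beh{\comp{P}}{i_2}$ and $\comp{P}$ is \phiCT{} (preservation). If $\comp{P}$ is \phiCT{} and $i_1 \relphi i_2$, then $U_{i_1} = U_{i_2}$, i.e.\ $\widehat{\Tobsname}(V_{i_1}) = \widehat{\Tobsname}(V_{i_2})$, and injectivity of $\Tobsname$ --- hence of $\widehat{\Tobsname}$ --- yields $V_{i_1} = V_{i_2}$, so $\beh{P}{i_1} = \beh{P}{i_2}$ and $P$ is \phiCT{} (reflection). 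I expect the lifting induction to be routine; the delicate part is the bookkeeping of the third paragraph that upgrades ``equal behaviors'' to ``equal maximal traces modulo $\widehat{\Tobsname}$'', since this is where determinism, safety, and the final-state matching of $\sim$ all come in --- and it is exactly that upgrade which makes injectivity of $\Tobsname$ the right extra hypothesis for running the preservation argument backwards.
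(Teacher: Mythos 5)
Your proof follows essentially the same route the paper takes (the paper only sketches it, deferring the details of the general version, \cref{thm:ct:soundness}, to its \rocq{} artifact): lift the one-step diagram to whole executions by induction on the multi-step derivation, observe that the pointwise extension of \Tobsname{} to traces inherits injectivity, and then read off preservation directly and reflection by running the equality backwards through the injective transformer. The one place where you go beyond the paper's text is your third paragraph, and you are right to insist on it: the lock-step diagram of \cref{def:ct:lock-step-diagram} only forces ``if the output state can step then so can the input state,'' so together with safety and determinism it gives that a final input state is matched by a final output state, but \emph{not} the converse. Without the converse, the output run can terminate strictly earlier than the input run, and then both directions of the theorem genuinely fail (e.g.\ an output program that halts immediately is vacuously \phiCT{} and satisfies all three listed hypotheses with \({\sim}\) total and \Tobsname{} the identity, yet tells you nothing about \(P\); a dual construction where the output's run length depends on the secret breaks preservation). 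So the final-state clause you add is not optional bookkeeping but a missing hypothesis of the lock-step presentation. It is precisely the content of requirement~\labelcref{def:ct:diagram:injectivity} of the relaxed \cref{def:ct:diagram}, which demands that when \(t\) is final the input state reaches a final state emitting exactly the prescribed suffix; the paper's mechanized proof relies on that clause, and your patched statement is the correct lock-step specialization of it. With that clause in place, the rest of your argument (determinism giving a unique maximal trace per input, behaviors as prefix-closures, injectivity of the lifted transformer) is sound.
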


Note that this theorem ensures transparency with respect to \emph{every}
indistinguishability relation~\phi{} even though the simulation relation
\({\sim}\) and the observation transformer \Tobsname{} do not depend on
\phi{}.

The critical requirement for CT reflection in
\cref{thm:ct:lock-step-soundness} is that the observation transformer
\Tobsname{} is injective, that is, that equal output observations imply
equal input observations.
Since prior work has shown that lock-step simulations guarantee
preservation, and our condition guarantees reflection, we obtain transparency.

Lock-step simulations are too constraining for many standard transformations.
We introduce a relaxed version of simulations that allows us to
prove reflection for all transformations considered in the remaining paper.

\paragraph{Relaxed Simulations for CT Transparency}
In contrast to lock-step simulations,
relaxed simulations allow the input and output programs to perform
different number of steps.
To this end, when the output program steps, we allow the input program
to make any positive number of steps.
We introduce a function \ensuremath{{\ensuremath{{\mathsf{ns}}} : \ensuremath{{\mathcal{PC}}} \to \mathbb{N}_{> 0}}} that
determines the \emph{number of steps} the input program performs to
``catch up'' with the output program, i.e., such that the states are
related by $\sim$ again.
Requiring that the number of steps is nonzero simplifies our
definitions, and is easily satisfied in all our use cases, since
decompiler transformations rarely introduce instructions in the output
program---their aim is to simplify the program.
Consequently, some transformations remove instructions from the input
program, which means that the output program reaches a final state
before the input program does.
In these cases, we must justify the leakage of the input program without
an output program step: we introduce a \emph{suffix} function
\ensuremath{{\ensuremath{{\mathsf{sf}}}: \ensuremath{{\mathcal{PC}}} \to \ensuremath{{\mathcal{O}}}_s^*}}, which determines the
remaining observations of the input program when it reaches a program
point where the output program has terminated.

In addition to these relaxations, the observation transformer $\Tobsname{}$ becomes
a partial function that may additionally inspect program point of the source state to
transform observations, i.e., it takes an extra argument.
These extensions allow relaxed simulations to accommodate all
transformations of this work.
We now recast \Cref{def:ct:lock-step-diagram} and \Cref{thm:ct:lock-step-soundness}.

\begin{definition}[Simulation Diagram]\label{def:ct:diagram}
    A relation \ensuremath{{{\sim} \subseteq \ensuremath{{\mathcal{S}}}_s \times \ensuremath{{\mathcal{S}}}_t}}
    satisfies a simulation diagram w.r.t.\
    a partial observation transformer
    \ensuremath{{\Tobsname : \ensuremath{{\mathcal{PC}}} \times \ensuremath{{\mathcal{O}}}_s^* \rightharpoonup \ensuremath{{\mathcal{O}}}_t}},
    a number-of-steps function \ensuremath{{\mathsf{ns}}}{},
    and a suffix function~\ensuremath{{\mathsf{sf}}}{},
    if for all related states \ensuremath{{s \sim t}},
    the input program can execute \sem*{s}{}{\ensuremath{{\boldsymbol{o}}}_{s}}{s'}
    such that
    \begin{requirelist}
    \item \begin{minipage}[t]{0.68\linewidth}
            If the output program steps \sem{t}{}{o_t}{t'},
            then the input program takes
            \ensuremath{{\ensuremath{{\left|\ensuremath{{\boldsymbol{o}}}_s\right|}} = \ensuremath{{\ensuremath{{\mathsf{ns}}}(\ensuremath{{\ensuremath{{\mathsf{pc}}}({s})}})}}}} steps from
            \sem*{s}{}{\ensuremath{{\boldsymbol{o}}}_s}{s'},
            so that \ensuremath{{o_t = \Tobs{\ensuremath{{\ensuremath{{\mathsf{pc}}}({s})}}}{\ensuremath{{\boldsymbol{o}}}_s}}}
            and $s' \sim t'$.
        \end{minipage}\hfill \begin{minipage}[t]{0.3\linewidth}
            \makebox[\textwidth][c]{
              \small
              \begin{tikzpicture}[baseline=5ex]
                \node[RoyalBlue] (s)  at (0   , 1.1) {\(s\)};
                \node[RedOrange] (s') at (2.8 , 1.1) {\(\bm{s'}\)};
                \node[RoyalBlue] (t)  at (0   , 0)   {\(t\)};
                \node[RoyalBlue] (t') at (2.8 , 0)   {\(t'\)};

                \draw[dotted,RoyalBlue] (s) -- (t) node[midway,left] {\(\sim\)};
                \draw[dotted,thick,RedOrange] (s') -- (t') node[midway,left] {\(\bm{\sim}\)};
                \draw[->,thick,RedOrange] (s) -- (s')
                node[midway,above] {\(\ensuremath{{\boldsymbol{o}}}_s\)}
                node[above right, xshift=-3.0ex,yshift=-.5ex]
                {\scriptsize $*$};

                \draw[->,RoyalBlue] (t) -- (t')
                node[midway,below]{\(o_t \textcolor{RedOrange}{{} = \Tobs{\ensuremath{{\ensuremath{{\mathsf{pc}}}({s})}}}{\ensuremath{{\boldsymbol{o}}}_s}}\)};
              \end{tikzpicture}
            }
        \end{minipage}
        \item\label{def:ct:diagram:injectivity}
            \begin{minipage}[t]{0.68\linewidth}
            If~$t$ is final instead, then
            the source observations \(\ensuremath{{\boldsymbol{o}}}_s\) are
            \ensuremath{{\ensuremath{{\mathsf{sf}}}({\ensuremath{{\ensuremath{{\mathsf{pc}}}({s})}}})}},
            \(s'\) is final, and
            $s' \sim t$.
            \end{minipage}\hfill \begin{minipage}[t]{0.3\linewidth}
              \makebox[\textwidth][c]{
                \small
                \begin{tikzpicture}[baseline=5ex]
                  \node[RoyalBlue] (s)  at (0   , 1.1) {\(s\)};
                  \node[RedOrange] (s') at (2.8 , 1.1) {\(\bm{s'}\)};
                  \node[RoyalBlue] (t)  at (0   , 0)   {\(t\)};

                \draw[dotted,RoyalBlue] (s) -- (t) node[midway,left] {\(\sim\)};
                \draw[dotted,thick,RedOrange] (s'.south west) -- (t)
                  node[midway,left] {\(\sim\)};
                \draw[->,thick,RedOrange] (s) -- (s')
                    node[midway,above] {\(\ensuremath{{\boldsymbol{o}}}_s = \ensuremath{{\ensuremath{{\mathsf{sf}}}({\ensuremath{{\ensuremath{{\mathsf{pc}}}({s})}}})}}\)}
                    node[above right, xshift=-3.0ex,yshift=-.5ex]  {\scriptsize $*$};
            \end{tikzpicture}
            }
            \end{minipage}
    \end{requirelist}
\end{definition}

We must overcome a final challenge to achieve transparency in this
relaxed setting: we redefine the injectivity requirement
\labelcref{thm:ct:lock-step-soundness:injectivity},
since now the observation transformer \Tobsname{}
inspects program points and is partial.
First we require injectivity on~\Tobsname{} \emph{per program point}.
Second, we require injectivity only where~\Tobsname{} is defined.
We call this new condition \emph{\ensuremath{{\mathcal{PC}}}{}-injectivity}.

\begin{definition}[\ensuremath{{\mathcal{PC}}}{}-Injectivity]\label{def:ct:inj}
  A partial observation transformer $\Tobsname{} : \ensuremath{{\mathcal{PC}}} \times \ensuremath{{\mathcal{O}}}_s^* \to \ensuremath{{\mathcal{O}}}_t$ is \ensuremath{{\ensuremath{{\mathcal{PC}}}\text{-injective}}}{},
  when for each $\ensuremath{{\mathsf{pc}}} \in \ensuremath{{\mathcal{PC}}}$,
  \begin{equation*}
    \Tobs{\ensuremath{{\mathsf{pc}}}}{\ensuremath{{\boldsymbol{o}}}_{1}} = \Tobs{\ensuremath{{\mathsf{pc}}}}{\ensuremath{{\boldsymbol{o}}}_{2}} \neq \bot \implies
    \ensuremath{{\boldsymbol{o}}}_{1} = \ensuremath{{\boldsymbol{o}}}_{2}\text.
  \end{equation*}
\end{definition}

\begin{theorem}[Soundness of Simulations]\label{thm:ct:soundness}
  A program transformation
  \ensuremath{{{\ensuremath{{\llparenthesis\,{\cdot}\,\rrparenthesis}}} : \ensuremath{{\mathcal{L}}}_s \to \ensuremath{{\mathcal{L}}}_t}}
  is \ensuremath{{\text{CT}}}{} transparent if for every input program \(P\) there exist
  a relation \(\sim\),
  a number-of-steps function~\ensuremath{{\mathsf{ns}}}{},
  an observation transformer~\Tobsname{}, and
  a suffix function~\ensuremath{{\mathsf{sf}}}{},
  such that
  \begin{requirelist}
  \item\label{thm:ct:soundness:simulation} $\sim$ satisfies a simulation
    diagram w.r.t.\ \Tobsname{}, \ensuremath{{\mathsf{ns}}}{}, and \ensuremath{{\mathsf{sf}}}{};
  \item\label{thm:ct:soundness:initial} Initial states are related:
    \ensuremath{{P(i) \sim \ensuremath{{\llparenthesis\,{P}\,\rrparenthesis}}(i)}} for every \(i\); and
  \item\label{thm:ct:soundness:injectivity} \Tobsname{} is \ensuremath{{\ensuremath{{\mathcal{PC}}}\text{-injective}}}.
  \end{requirelist}
\end{theorem}

For a detailed proof of this theorem, we refer the reader to our \textsc{Rocq}{}
development in the artifact.
In this mechanized proof, we split \cref{thm:ct:soundness} into
two parts: Theorem \texttt{th2\_preserves} states that simulations
ensure preservation, and Theorem \texttt{th2\_reflect} states that
simulations guarantee reflection.
In our formal development, we further generalized some of the
definitions, which we simplified in the text for better presentation.

\section{Transparent and Nontransparent Transformations}\label{sec:passes}
This section considers the CT transparency of ten common
transformations.
We prove that seven of them are transparent, provide
counterexamples for the rest, and summarize the results in
\cref{tab:passes}.

\begin{table}
  \caption{Summary of decompilation passes in this section.
    A~\textcolor{green}{\ding{51}}{} means that we prove the pass transparent in
    \Cref{sec:prog-optimizations}, a~\textcolor{red}{\Lightning}{} that we present a
    counterexample to \ensuremath{{\text{CT}}}{} reflection in \Cref{sec:nontranspasse},
    and~\(\bullet\){} means the pass is \ensuremath{{\text{CT}}}{} preserving, but we omit
    the proof.}\label{tab:passes}
  \small \begin{tabular}{l c c}
    \toprule
    Pass & \multicolumn{1}{l}{Reflection} & \multicolumn{1}{l}{Preservation}\\
    \midrule
    Branch Coalescing & \textcolor{red}{\Lightning}{} & \(\bullet\){}\\
If Conversion & \textcolor{red}{\Lightning}{} & \(\bullet\){}\\
    Memory Access Elimination & \textcolor{red}{\Lightning}{} & \(\bullet\){}\\
Constant Folding & \textcolor{green}{\ding{51}}{} & \textcolor{green}{\ding{51}}{}\\
    Untiling & \textcolor{green}{\ding{51}}{} & \textcolor{green}{\ding{51}}{}\\
    Dead Branch Elimination & \textcolor{green}{\ding{51}}{} & \textcolor{green}{\ding{51}}{}\\
    Dead Assignment Elimination & \textcolor{green}{\ding{51}}{} & \textcolor{green}{\ding{51}}{}\\
    Unspilling & \textcolor{green}{\ding{51}}{} & \textcolor{green}{\ding{51}}{}\\
    Structural Analysis & \textcolor{green}{\ding{51}}{} & \textcolor{green}{\ding{51}}{}\\
    Loop Rotation & \textcolor{green}{\ding{51}}{} & \textcolor{green}{\ding{51}}{}\\
    \bottomrule
  \end{tabular}
\end{table}

\subsection{Language and Leakage Model}\label{sec:passes:language}
To prove CT transparency of the program transformations,
we instantiate our framework from \Cref{sec:setting} with a
core language of register assignments, memory operations, and loops.
The syntax of the language is as follows:
\begin{gather*}
    \begin{align*}
        \ensuremath{{\mathit{e}}} & \Coloneqq z \mid b \mid x \mid \oplus(\ensuremath{{\mathit{e}}}, \cdots, \ensuremath{{\mathit{e}}}), &
        \ensuremath{{\mathit{a}}} & \Coloneqq
        \ensuremath{{{x}\mathcolor{cyan}{=}{e}}} \mid \ensuremath{{{x}\mathcolor{cyan}{=}{\left[e\right]}}} \mid \ensuremath{{{\left[e\right]}\mathcolor{cyan}{=}{x}}},
                &
    \end{align*}
    \\
    \begin{align*}
        \ensuremath{{\mathit{c}}} & \Coloneqq
        \ensuremath{{\mathcolor{mauve}{\mathtt{skip}}}}
        \mid \ensuremath{{\mathit{c}}} \mathop{;} \ensuremath{{\mathit{c}}}
        \mid \ensuremath{{\mathit{a}}}
        \mid \ensuremath{{\mathcolor{mauve}{\mathtt{if}}~{e}~\mathcolor{mauve}{\mathtt{then}}~\ensuremath{{\mathit{c}}}~\mathcolor{mauve}{\mathtt{else}}~\ensuremath{{\mathit{c}}}}}
        \mid \ensuremath{{\mathcolor{mauve}{\mathtt{while}}~{e}~\mathcolor{mauve}{\mathtt{do}}~{\ensuremath{{\mathit{c}}}}}},
                &
    \end{align*}
\end{gather*}
where
\(z\)~is an integer,
\(b\)~is a boolean,
\(x\)~is a register variable, and
\(\oplus\)~an operation (such as negation \(\neg\) or addition \(+\)).
Expressions~\(e\) are built from constants and register variables
but may not invoke memory accesses,
this means that they have no side effects and no leakage.
Atomic commands~\ensuremath{{\mathit{a}}}{} are assignments, loads, and stores.
Commands~\ensuremath{{\mathit{c}}}{} are the standard while language constructs.
We make the usual assumptions that the empty program \ensuremath{{\mathcolor{mauve}{\mathtt{skip}}}}{} is neutral for sequentialization
\ensuremath{{\ensuremath{{\mathcolor{mauve}{\mathtt{skip}}}} \mathop{;} \ensuremath{{\mathit{c}}} = \ensuremath{{\mathit{c}}} \mathop{;} \ensuremath{{\mathcolor{mauve}{\mathtt{skip}}}} = \ensuremath{{\mathit{c}}}}}, and that
sequencing is associative \ensuremath{{
  (\ensuremath{{\mathit{c}}} \mathop{;} \ensuremath{{\mathit{c}}}') \mathop{;} \ensuremath{{\mathit{c}}}''
  =
  \ensuremath{{\mathit{c}}} \mathop{;} (\ensuremath{{\mathit{c}}}' \mathop{;} \ensuremath{{\mathit{c}}}'')
}}.

\newcommand*{\lblassign}{Assign}
\newcommand*{\lblload}{Load}
\newcommand*{\lblstore}{Store}
\newcommand*{\lblseqcond}{Cond}
\newcommand*{\lblwhile}{While}
\newcommand*{\lblseq}{Seq}
\newcommand*{\refassign}{\hyperref[fig:sem:single-step]{\textsc{\normalfont\textsc{\lblassign}}}}
\newcommand*{\refload}{\hyperref[fig:sem:single-step]{\textsc{\normalfont\textsc{\lblload}}}}
\newcommand*{\refstore}{\hyperref[fig:sem:single-step]{\textsc{\normalfont\textsc{\lblstore}}}}
\newcommand*{\refif}{\hyperref[fig:sem:single-step]{\textsc{\normalfont\textsc{\lblseqcond}}}}
\newcommand*{\refwhile}{\hyperref[fig:sem:single-step]{\textsc{\normalfont\textsc{\lblwhile}}}}
\newcommand*{\refseq}{\hyperref[fig:sem:single-step]{\textsc{\normalfont\textsc{\lblseq}}}}

\begin{figure}
  \small
  \begin{mathpar}
    \inferrule[\lblassign]{
      \rho' = \rho[x \leftarrow \ensuremath{{\llbracket\, e \,\rrbracket_{\rho}}}]
    }{
        \ensuremath{{\langle \ensuremath{{{x}\mathcolor{cyan}{=}{e}}}, \rho, \mu \rangle}}
        \step{}{\ensuremath{{\bullet}}}
        \ensuremath{{\langle \ensuremath{{\mathcolor{mauve}{\mathtt{skip}}}}, \rho', \mu \rangle}}
    }

    \inferrule[\lblload]{
      n = \ensuremath{{\llbracket\, e \,\rrbracket_{\rho}}}\\
      \rho' = \rho[x \leftarrow \mu(n)]
    }{
        \ensuremath{{\langle \ensuremath{{{x}\mathcolor{cyan}{=}{\left[e\right]}}}, \rho, \mu \rangle}}
        \step{}{\ensuremath{{\mathsf{adr}\; n}}}
        \ensuremath{{\langle \ensuremath{{\mathcolor{mauve}{\mathtt{skip}}}}, \rho', \mu \rangle}}
    }

    \inferrule[\lblstore]{
      n = \ensuremath{{\llbracket\, e \,\rrbracket_{\rho}}}\\
      \mu' = \mu[n \leftarrow \rho(x)]
    }{
        \ensuremath{{\langle \ensuremath{{{\left[e\right]}\mathcolor{cyan}{=}{x}}}, \rho, \mu \rangle}}
        \step{}{\ensuremath{{\mathsf{adr}\; n}}}
        \ensuremath{{\langle \ensuremath{{\mathcolor{mauve}{\mathtt{skip}}}}, \rho, \mu' \rangle}}
    }

    \inferrule[\lblseqcond]{
        \ensuremath{{\llbracket\, e \,\rrbracket_{\rho}}} = b
    }{
        \ensuremath{{\langle \ensuremath{{\mathcolor{mauve}{\mathtt{if}}~{e}~\mathcolor{mauve}{\mathtt{then}}~\ensuremath{{\mathit{c}}}_{\mathsf{t\kern-0.9ptt}}~\mathcolor{mauve}{\mathtt{else}}~\ensuremath{{\mathit{c}}}_{\mathsf{f\kern-1.1ptf}}}}, \rho, \mu \rangle}}
        \step{}{\ensuremath{{\mathsf{br}\; b}}}
        \ensuremath{{\langle \ensuremath{{\mathit{c}}}_{b}, \rho, \mu \rangle}}
    }

    \inferrule[\lblwhile]{
        \ensuremath{{\llbracket\, e \,\rrbracket_{\rho}}} = b
        \\
        \ensuremath{{\mathit{c}}}_{\mathsf{t\kern-0.9ptt}} = \ensuremath{{\mathit{c}}};\ensuremath{{\mathcolor{mauve}{\mathtt{while}}~{e}~\mathcolor{mauve}{\mathtt{do}}~{\ensuremath{{\mathit{c}}}}}}
        \\
        \ensuremath{{\mathit{c}}}_{\mathsf{f\kern-1.1ptf}} = \ensuremath{{\mathcolor{mauve}{\mathtt{skip}}}}
    }{
        \ensuremath{{\langle \ensuremath{{\mathcolor{mauve}{\mathtt{while}}~{e}~\mathcolor{mauve}{\mathtt{do}}~{\ensuremath{{\mathit{c}}}}}}, \rho, \mu \rangle}}
        \step{}{\ensuremath{{\mathsf{br}\; b}}}
        \ensuremath{{\langle \ensuremath{{\mathit{c}}}_b, \rho, \mu \rangle}}
    }

    \inferrule[\lblseq]{
        \ensuremath{{\langle \ensuremath{{\mathit{c}}}, \rho, \mu \rangle}}
        \step{}{o}
        \ensuremath{{\langle \ensuremath{{\mathit{c}}}', \rho', \mu' \rangle}}
    }{
        \ensuremath{{\langle \ensuremath{{\mathit{c}}} \mathop{;} \ensuremath{{\mathit{c}}}'', \rho, \mu \rangle}}
        \step{}{o}
        \ensuremath{{\langle \ensuremath{{\mathit{c}}}'\mathop{;} \ensuremath{{\mathit{c}}}'', \rho', \mu' \rangle}}
    }
  \end{mathpar}
  \caption{Single-step execution semantics.}\label{fig:sem:single-step}
\end{figure}

The semantics operates on states \ensuremath{{\langle \ensuremath{{\mathit{c}}}, \rho, \mu \rangle}}
consisting of the command~\ensuremath{{\mathit{c}}}{} remaining to be executed,
a register map~\ensuremath{{\rho}}{} that associates each register variable with
a value, and
a memory~\ensuremath{{\mu}}{} that associates each address with a value.
The program point of a state is its code, i.e.,
\ensuremath{{\ensuremath{{\ensuremath{{\mathsf{pc}}}({\ensuremath{{\langle c, \ensuremath{{\rho}}, \ensuremath{{\mu}} \rangle}}})}} \triangleq c}},
and a state is final if its code is $\ensuremath{{\mathcolor{mauve}{\mathtt{skip}}}}$.
\Cref{fig:sem:single-step} presents the small-step
semantics \sem{s}{}{o}{s'} of commands.
We choose the standard constant-time leakage model~\cite{DBLP:conf/csfw/BartheGL18} for our semantics, i.e.,
memory operations leak their address (\ensuremath{{\mathsf{adr}\; n}} in \textsc{\lblload}, \textsc{\lblstore}) and
branch instructions leak their condition (\ensuremath{{\mathsf{br}\; b}} in \textsc{\lblseqcond}, \textsc{\lblwhile}).
The semantics and leakage model satisfy the requirements of \Cref{sec:setting} (it is deterministic, and programs are safe).

\subsection{Constant-Time Transparent Transformations}\label{sec:prog-optimizations}
We now examine the seven transparent transformations in
\Cref{tab:passes} and provide proof sketches for their transparency with
the framework from \Cref{sec:proof-techniques}.
We will keep the transformations as abstract as possible and focus on
their transparency rather than the details of their functionality.
When convenient, we split passes into an analysis, which annotates the program,
followed by a program transformation, which expects these annotations in the input program.
We present detailed definitions of the transformations and
proofs of transparency in \ifAppendix{\cref{appendix:proofs-prog-optimizations}}\else{Appendix {A}}\fi{}.

\paragraph*{Expression Substitution}
This transformation replaces expressions with other expressions that
produce the same values (it generalizes Constant Folding and
Untiling).
For example, it may replace \ensuremath{{3 * x - x}} with \ensuremath{{2 * x}}.
It may, as well, rely on the program's structure and replace
the instruction sequence \ensuremath{{\ensuremath{{{x}\mathcolor{cyan}{=}{y*8}}} \mathop{;} \ensuremath{{{x}\mathcolor{cyan}{=}{\left[z + x\right]}}}}}
with \ensuremath{{\ensuremath{{{x}\mathcolor{cyan}{=}{y*8}}} \mathop{;} \ensuremath{{{x}\mathcolor{cyan}{=}{\left[z + (y*8)\right]}}}}}.
In general, Expression Substitution is preceded by an analysis that identifies
and annotates all expressions that is to be replaced.
We write an annotated expression as $\ensuremath{{\{{e_s}\mathbin{\text{\normalfont\guilsinglright}}{e_t}\}}}e$, which means that
every expression $e_s$ occuring in $e$ are to be replaced by $e_t$.
Whichever analysis is used, as long as it guarantees the following criterion,
that states that the expressions $e_s$ and $e_t$ indeed yield the same values,
we can prove the actual transformation transparent.
\begin{proposition}\label{prop:expr-subst-guarantee}
  For all inputs $i$ and executions
  $\sem*{P(i)}{}{\ensuremath{{\boldsymbol{o}}}}{\ensuremath{{\langle c, \rho, \mu \rangle}}}$,
  if the next instruction of $c$ contains an annotated expression
  $\ensuremath{{\{{\ensuremath{{\mathit{e}}}_s}\mathbin{\text{\normalfont\guilsinglright}}{\ensuremath{{\mathit{e}}}_t}\}}}\ensuremath{{\mathit{e}}}$, then
  $\ensuremath{{\llbracket\, \ensuremath{{\mathit{e}}}_s \,\rrbracket_{\rho}}} = \ensuremath{{\llbracket\, \ensuremath{{\mathit{e}}}_t \,\rrbracket_{\rho}}}$.
\end{proposition}
A few typical passes in static analysis tools and decompilers are subsumed
by this definition. Among them are Constant Folding and Untiling.
Constant Folding replaces expressions without variables by an appropriate
constant (our first example). Untiling is a decompilation pass that typically
sequences of instructions into a single, more complex instruction.
This is desirable when compilers split complex expressions into multiple
instructions in order to compute the expression on hardware,
but the complex expression is more readable.
Our second example from above is a case of untiling:
the first instruction $\ensuremath{{{x}\mathcolor{cyan}{=}{y*8}}}$ computes the address offset for the
second instruction $\ensuremath{{{x}\mathcolor{cyan}{=}{\left[z+x\right]}}}$,
which are merged into a single instruction $\ensuremath{{{x}\mathcolor{cyan}{=}{\left[z+(y*8)\right]}}}$.
Notice we define Expression Substitution to not remove the first instruction \ensuremath{{{x}\mathcolor{cyan}{=}{y*8}}}.
This step is left to Dead Assignment Elminination, for
which we prove transparency in a seperate transformation.

Let us turn to proving the transparency of Expression Substitution.
The transformation maintains the structure of the program and the
number of statements.
Therefore, we define a lock-step simulation and apply
\Cref{thm:ct:lock-step-soundness}.
This involves defining a relation~\({\sim}\) and a
function~\Tobsname{} such that
\begin{requirelist}
\item \({\sim}\) satisfies a lock-step simulation diagram
  w.r.t.\ \Tobsname{};
\item \ensuremath{{P(i) \sim \ensuremath{{\llparenthesis\,{P}\,\rrparenthesis}}(i)}} for every \(i\); and
\item \Tobsname{} is injective.
\end{requirelist}

\begin{theorem}
  Expression Substitution is CT transparent.
\end{theorem}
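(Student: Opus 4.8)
The plan is to apply \Cref{thm:ct:lock-step-soundness}: Expression Substitution keeps the control structure and the number of atomic commands intact, so a lock-step simulation suffices, and once its three requirements are discharged we obtain reflection and preservation, i.e., transparency. I take the observation transformer \Tobsname{} to be the identity on \observations{}. The input and output languages are the same core language and in particular share the leakage alphabet (the silent observation \onone{}, address observations \oaddr{n}, and branch observations \obranch{b}), so the identity is well-defined and trivially injective; this settles requirement~\labelcref{thm:ct:lock-step-soundness:injectivity} for free, so the proof is essentially a standard lock-step preservation argument. For the simulation relation I put $\seqmach{c_s}{\rho}{\mu} \sim \seqmach{c_t}{\rho'}{\mu'}$ exactly when $\rho = \rho'$, $\mu = \mu'$, the command $c_t$ is $c_s$ with the annotated substitutions carried out, and $\seqmach{c_s}{\rho}{\mu}$ is reachable from $P(i)$ for some input~$i$. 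Building reachability into \({\sim}\) is what will let us invoke \Cref{prop:expr-subst-guarantee} inside the diagram. Requirement~\labelcref{thm:ct:lock-step-soundness:initial} is then immediate: $P(i)$ and $\comp{P}(i)$ agree on registers and memory, their codes are related by the definition of the transformation, and $P(i)$ is reachable from itself.

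The substance is the lock-step diagram, requirement~\labelcref{thm:ct:lock-step-soundness:simulation}. The key lemma I will prove is that at any \emph{reachable} source state, evaluating the next instruction's expression after substitution equals evaluating it before: this follows from \Cref{prop:expr-subst-guarantee} together with the fact that expression evaluation is a congruence---since expressions are side-effect free, replacing a subexpression by one of equal value leaves the value of the whole unchanged---by induction on the nesting of annotations. Because the transformation is defined compositionally on commands (it commutes with sequencing, with the branches of a conditional, and with the body of a loop), the leading atomic command of $c_t$ has the same shape as that of $c_s$. So, peeling off leading commands with rule~\refseq{}, I proceed by case analysis: in cases \refassign{}, \refload{}, and \refstore{} the key lemma gives that the value assigned, respectively the address accessed, is the same on both sides, so registers and memory after the step stay equal and the emitted observation is literally identical, $o_t = o_s = \Tobs{}{o_s}$; in cases \refif{} and \refwhile{} it gives that the branch condition is the same, so the same branch is taken, the emitted \obranch{b} is identical, and the residual codes are again related by the transformation. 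The successor source state is reachable since its predecessor was, so $s' \sim t'$, which closes the diagram.

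The main obstacle is bookkeeping rather than anything deep: one must verify that the annotations placed on the whole program are exactly the ones relevant to every residual command encountered during execution---so that applying the transformation to a residual command coincides with taking the residual of the transformed program---and one must handle possibly nested annotations $\substonly{e_s}{e_t}(\substonly{f_s}{f_t}e)$ when proving the key lemma, where each layer preserves the value and hence so does their composition. Both points are settled by routine structural inductions, which we carry out in \Cref{appendix:proofs-prog-optimizations}.
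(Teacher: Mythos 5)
Your proposal is correct and follows essentially the same route as the paper: a lock-step simulation with the identity observation transformer, relating states with equal registers and memory whose codes are linked by the transformation, and discharging the diagram by case analysis using \Cref{prop:expr-subst-guarantee}. Your one refinement---building reachability from $P(i)$ into $\sim$ so that \Cref{prop:expr-subst-guarantee} is actually applicable at every use---is a sensible tightening of a point the paper's sketch leaves implicit, but it does not change the argument.
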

\begin{proof}[Proof Sketch]
  We define the simulation relation $\sim$ such that \ensuremath{{\langle c, \ensuremath{{\rho}}, \ensuremath{{\mu}} \rangle}} is
  related only to \ensuremath{{\langle \ensuremath{{\llparenthesis\,{c}\,\rrparenthesis}}, \ensuremath{{\rho}}, \ensuremath{{\mu}} \rangle}}.
  That is, the code in the output program's state is the transformed input program's code at that state,
  and the register map as well as the memory coincides.
  The leakage transformer is the identity: \ensuremath{{\Tobs{}{o_s} = o_s}}.
  This satisfies the requirements of \Cref{thm:ct:lock-step-soundness}
  as follows:
  \begin{requirelist}
  \item Since the evaluated expressions produce the same value
    (\Cref{prop:expr-subst-guarantee}),
    every step in the output program matches exactly one step in the input
    program, and they produce the same register map and memory.
    Since the register map and memory are identical, the input and
    output observations are equal.
    For instance, the instructions \ensuremath{{{x}\mathcolor{cyan}{=}{\left[z + y * 8\right]}}} and \ensuremath{{{x}\mathcolor{cyan}{=}{\left[z + x\right]}}}
    from the above example produce the same
    observation, namely \ensuremath{{\mathsf{adr}\; \ensuremath{{\llbracket\, z+y*8 \,\rrbracket_{}}}}}.
  \item Initial states are trivially related.
  \item \Tobsname{} is injective because it is the identity.\qedhere
  \end{requirelist}
\end{proof}

\paragraph*{Dead Branch Elimination}\label{subsec:dead-branch-elimination}
This transformation eliminates conditional branches that are never taken.
That is, it transforms conditionals \ensuremath{{\mathcolor{mauve}{\mathtt{if}}~{b}~\mathcolor{mauve}{\mathtt{then}}~c_\mathsf{t\kern-0.9ptt}~\mathcolor{mauve}{\mathtt{else}}~c_\mathsf{f\kern-1.1ptf}}}, where
\(b \in \ensuremath{{\{\mathsf{t\kern-0.9ptt},\mathsf{f\kern-1.1ptf}\}}}\) is a
constant, into the corresponding branch \(c_b\).
The transformation modifies no other instructions. Intuitively, this pass is
transparent because a dead branch is never executed, and, therefore,
removing it does not affect leakage.
Also, removing the leak that stems from the branching instruction itself is not an issue,
because the constant $b$ is secret-independent.

\begin{figure}
  \begin{align*}
    \Tobs{c}{\ensuremath{{\boldsymbol{o}}}} &\triangleq
      \begin{cases}
        \Tobs{c_b \mathop{;} c'}{\ensuremath{{\boldsymbol{o}}}'}
        & \ensuremath{{\text{if } c \text{ is } \ensuremath{{\mathcolor{mauve}{\mathtt{if}}~{b}~\mathcolor{mauve}{\mathtt{then}}~c_\mathsf{t\kern-0.9ptt}~\mathcolor{mauve}{\mathtt{else}}~c_\mathsf{f\kern-1.1ptf}}} \mathop{;} c'\text{and } \ensuremath{{\boldsymbol{o}}} = \ensuremath{{\mathsf{br}\; b}} \mathop{\cdot} \ensuremath{{\boldsymbol{o}}}',}}
        \\
        \ensuremath{{\boldsymbol{o}}}
        & \text{if $c$ is not $\ensuremath{{\mathcolor{mauve}{\mathtt{if}}~{b}~\mathcolor{mauve}{\mathtt{then}}~c_\mathsf{t\kern-0.9ptt}~\mathcolor{mauve}{\mathtt{else}}~c_\mathsf{f\kern-1.1ptf}}} \mathop{;} c'$ and  $\ensuremath{{\left|\ensuremath{{\boldsymbol{o}}}\right|}} = 1$,}
        \\
        \bot & \ensuremath{{\text{otherwise}}}\text,
      \end{cases}
    \\
    \ensuremath{{\ensuremath{{\mathsf{ns}}}(c)}} &\triangleq
      \begin{cases}
        \ensuremath{{\ensuremath{{\mathsf{ns}}}(c_b; c')}} + 1 &
          \ensuremath{{\text{if } c \text{ is } \ensuremath{{\mathcolor{mauve}{\mathtt{if}}~{b}~\mathcolor{mauve}{\mathtt{then}}~c_\mathsf{t\kern-0.9ptt}~\mathcolor{mauve}{\mathtt{else}}~c_\mathsf{f\kern-1.1ptf}}} \mathop{;} c'}}\text,\\
        1 & \ensuremath{{\text{otherwise}}}\text,
      \end{cases}
    \\
    \ensuremath{{\ensuremath{{\mathsf{sf}}}({c})}} &\triangleq
      \begin{cases}
        \ensuremath{{\mathsf{br}\; b}} \mathop{\cdot} \ensuremath{{\ensuremath{{\mathsf{sf}}}({c_b; c'})}} &
          \ensuremath{{\text{if } c \text{ is } \ensuremath{{\mathcolor{mauve}{\mathtt{if}}~{b}~\mathcolor{mauve}{\mathtt{then}}~c_\mathsf{t\kern-0.9ptt}~\mathcolor{mauve}{\mathtt{else}}~c_\mathsf{f\kern-1.1ptf}}} \mathop{;} c'}}\text,\\
        \ensuremath{{\epsilon}} &\ensuremath{{\text{otherwise}}}\text.
      \end{cases}
  \end{align*}
  \caption{Relaxed simulation instantiations of $\protect\Tobsname$,
    $\ensuremath{{\mathsf{ns}}}$, and $\ensuremath{{\mathsf{sf}}}$ for Dead Branch Elimination.}\label{fig:sim-dbe}
\end{figure}

\begin{theorem}
  Dead Branch Elimination is CT transparent.
\end{theorem}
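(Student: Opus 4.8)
The plan is to apply \Cref{thm:ct:soundness} using the relaxed simulation whose instantiations of $\Tobsname$, $\nstepsname$, and $\Tobsfname$ are given in \Cref{fig:sim-dbe}. First I would fix the simulation relation: a source state $\seqmach{c_s}{\rho}{\mu}$ is related to a target state $\seqmach{c_t}{\rho}{\mu}$ exactly when $c_t = \comp{c_s}$ — the register map and memory coincide, and the target code is the transformed source code. Requirement~\labelcref{thm:ct:soundness:initial} is then immediate, since the transformation leaves $\rho$ and $\mu$ untouched and $\comp{P}(i)$ has the transformed code of $P(i)$. The bulk of the work is establishing the simulation diagram \labelcref{thm:ct:soundness:simulation} and \ppoints-injectivity \labelcref{thm:ct:soundness:injectivity}.

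For the simulation diagram, I would proceed by case analysis on the head of the source command $c_s$, peeling off any prefix of constant-conditional branches $\iif{b}{c_\btrue}{c_\bfalse}$ that the transformation eliminates. If $c_s$ begins with such a branch, the source takes the branch step (leaking $\obranch{b}$) and recurses, which is exactly why $\nstepsname$ counts $\nsteps{c_b; c'} + 1$ and why $\Tobsname$ strips the leading $\obranch{b}$ before recursing; the target performs zero steps on this prefix since $\comp{\cdot}$ has already deleted it. Once the head is a non-eliminated instruction, source and target step in lock-step with identical observation (since $\rho$, $\mu$ agree), giving the base case $\nsteps{c} = 1$ and $\Tobs{c}{\obs} = \obs$ for a single-element trace. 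If the target has already reached $\cnil$, the remaining source instructions are precisely a nested stack of dead branches, whose leakage is $\Tobsf{c_s}$ by an easy induction, and the source also terminates; this discharges case~\labelcref{def:ct:diagram:injectivity}. One should also check that $\nstepsname$ is well-defined (terminates): each recursive call strips one syntactic branch constructor, so it bottoms out.

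The main obstacle — and the genuinely new obligation compared to plain CT preservation — is \ppoints-injectivity of $\Tobsname$. Fix a program point $c$. The issue is that $\Tobsname$ both inspects the code and recurses, so I must show that whenever $\Tobs{c}{\obs_1} = \Tobs{c}{\obs_2} \ne \bot$ we get $\obs_1 = \obs_2$. The key observation is that the code $c$ deterministically dictates the shape of any trace in the domain of $\Tobs{c}{\cdot}$: if $c = \iif{b}{c_\btrue}{c_\bfalse} \capp c'$ then both $\obs_1$ and $\obs_2$ must begin with $\obranch{b}$ — with the \emph{same} $b$, the constant taken from the code — so the leading elements agree and I can invoke the induction hypothesis at program point $c_b \capp c'$ on the tails; if $c$ is not a leading dead branch, the domain forces $\lsize{\obs_1} = \lsize{\obs_2} = 1$ and $\Tobsname$ is the identity, so $\obs_1 = \obs_2$ directly. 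Thus injectivity follows by induction on the derivation of $\nsteps{c}$ (equivalently, on the number of leading dead-branch constructors in $c$). With all three requirements of \Cref{thm:ct:soundness} met, CT transparency follows.
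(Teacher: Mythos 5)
Your proposal is correct and follows essentially the same route as the paper: the same simulation relation, the same instantiations of the observation transformer, step-count, and suffix functions, and the same case split on whether the code begins with a dead branch. The only cosmetic difference is the induction measure for injectivity (number of leading dead-branch constructors versus the paper's induction on the length of the observation list), which amounts to the same argument since each leading dead branch consumes exactly one observation.
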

\begin{proof}[Proof Sketch]
  Since Dead Branch Elimination removes code, the input and output
  programs perform a different number of steps.
  This is the setting to use our relaxed version of simulations and
  apply \Cref{thm:ct:soundness}.
  It requires us to provide $\sim$ and $\Tobsname$ as before,
  but additionally requires the number of steps function $\ensuremath{{\mathsf{ns}}}$
  and the suffix transformer $\ensuremath{{\mathsf{sf}}}$.
  We define $\sim$ to only hold on $\ensuremath{{\langle c, \ensuremath{{\rho}}, \ensuremath{{\mu}} \rangle}} \sim \ensuremath{{\langle \ensuremath{{\llparenthesis\,{c}\,\rrparenthesis}}, \ensuremath{{\rho}}, \ensuremath{{\mu}} \rangle}}$ as before,
  and we provide the remaining instantiations in \Cref{fig:sim-dbe}.

  Our definitions satisfy the requirements of \Cref{thm:ct:soundness}:
  \begin{requirelist}
  \item We prove that $\sim$ satisfies \Cref{def:ct:diagram} w.r.t. $\Tobsname$,
    $\ensuremath{{\mathsf{ns}}}$, and $\ensuremath{{\mathsf{sf}}}$,
      so let $s \sim t$ be given.
    If \(t\) is not a final state, \(s\)~needs to perform zero
    or more steps corresponding to the constant branches
    before performing the same step as~\(t\).
    This number of steps is provided by \ensuremath{{\ensuremath{{\mathsf{ns}}}(\ensuremath{{\ensuremath{{\mathsf{pc}}}({s})}})}}.
    Consequently, the input program's observations are of the form
    \ensuremath{{\ensuremath{{\mathsf{br}\; b_1}} \mathop{\cdot} \dots \mathop{\cdot} \ensuremath{{\mathsf{br}\; b_n}} \mathop{\cdot} o_t}},
    where \(b_i\) is the \(i\)-th dead branch of \(s\).
    The observation transformer \Tobsname{} deletes the branching observations.
    If $t$ is final, there might remain conditional branches to be executed
    in~$s$. Therefore, $\ensuremath{{\mathsf{sf}}}$ provides \ensuremath{{\mathsf{br}\; b_i}} observations
    corresponding to the branches at $s$.
  \item As with Expression Substitution, initial states are trivially related
    by the definition of $\sim$.
  \item To show \ensuremath{{
      \Tobs{c}{\ensuremath{{\boldsymbol{o}}}_1} = \Tobs{c}{\ensuremath{{\boldsymbol{o}}}_2} \neq \bot \implies
      \ensuremath{{\boldsymbol{o}}}_1 = \ensuremath{{\boldsymbol{o}}}_2
    }},
    we proceed by induction on the length of \(\ensuremath{{\boldsymbol{o}}}_1\).
    There are only two significant cases, corresponding to whether the
    code $c$ starts with a dead branch.
    If it does not, the transformer is the identity which makes
    $\ensuremath{{\boldsymbol{o}}}_1 = \ensuremath{{\boldsymbol{o}}}_2$ hold trivially. Otherwise, the first observations of
    both \(\ensuremath{{\boldsymbol{o}}}_1\) and \(\ensuremath{{\boldsymbol{o}}}_2\) must be \ensuremath{{\mathsf{br}\; b}},
    and the remaining observations match by inductive hypothesis. \qedhere
  \end{requirelist}
\end{proof}

\paragraph*{Dead Assignment Elimination}
This transformation removes assignments to dead variables,
i.e., when the assignments have no impact on the execution of the
program.
For example, it replaces \ensuremath{{\ensuremath{{{x}\mathcolor{cyan}{=}{e_1}}} \mathop{;} \ensuremath{{{x}\mathcolor{cyan}{=}{\left[e_2\right]}}}}}
with \ensuremath{{{x}\mathcolor{cyan}{=}{\left[e_2\right]}}} because the first assignment is never used.
This transformation removes only assignments, \emph{but not loads or stores}.
Indeed, removing memory accesses would break CT reflection, as their
addresses may leak (see Dead Load Elimination in
\Cref{sec:nontranspasse}).
We assume that a previous analysis annotated assignment instructions with a set
\ensuremath{{\mathsf{D}}}{} of the dead variables at that point (we omit the correctness guarantee).
The transformation then removes all assignments \ensuremath{{\ensuremath{{{x}\mathcolor{cyan}{=}{e}}}\ensuremath{{\{\ensuremath{{\mathsf{D}}}\}}}}}
where \ensuremath{{x \in \ensuremath{{\mathsf{D}}}}}, i.e., the assigned variable is marked dead.

\begin{figure}
  \begin{align*}
    \Tobs{c}{\ensuremath{{\boldsymbol{o}}}} &\triangleq
    \begin{cases}
      \Tobs{c'}{\ensuremath{{\boldsymbol{o}}}'}
      & \ensuremath{{\text{if } c \text{ is } \text{$\ensuremath{{{x}\mathcolor{cyan}{=}{e}}}\ensuremath{{\{\ensuremath{{\mathsf{D}}}\}}} \mathop{;} c'$ with
        $x \in \ensuremath{{\mathsf{D}}}$ and $\ensuremath{{\boldsymbol{o}}} = \ensuremath{{\bullet}} \mathop{\cdot} \ensuremath{{\boldsymbol{o}}}'$,}}}
      \\
      \ensuremath{{\boldsymbol{o}}}
      & \text{if $c$  is not $\ensuremath{{{x}\mathcolor{cyan}{=}{e}}}\ensuremath{{\{\ensuremath{{\mathsf{D}}}\}}} \mathop{;} c'$  with
        $x \in \ensuremath{{\mathsf{D}}}$ and $\ensuremath{{\left|\ensuremath{{\boldsymbol{o}}}\right|}} = 1$,}
      \\
      \bot & \ensuremath{{\text{otherwise}}}\text,
    \end{cases}
    \\
    \ensuremath{{\ensuremath{{\mathsf{ns}}}(c)}} &\triangleq
    \begin{cases}
      \ensuremath{{\ensuremath{{\mathsf{ns}}}(c')}} + 1
      & \ensuremath{{\text{if } c \text{ is } \ensuremath{{{x}\mathcolor{cyan}{=}{e}}}\ensuremath{{\{\ensuremath{{\mathsf{D}}}\}}} \mathop{;} c'}}
        \text{ and } x \in \ensuremath{{\mathsf{D}}}\text,\\
      1 & \ensuremath{{\text{otherwise}}}\text,
    \end{cases}\\
    \ensuremath{{\ensuremath{{\mathsf{sf}}}({c})}} &\triangleq
    \begin{cases}
      \ensuremath{{\bullet}} \mathop{\cdot} \ensuremath{{\ensuremath{{\mathsf{sf}}}({c'})}}
        & \ensuremath{{\text{if } c \text{ is } \ensuremath{{{x}\mathcolor{cyan}{=}{e}}} \mathop{;} c'}}\text,\\
      \ensuremath{{\epsilon}} & \ensuremath{{\text{otherwise}}}\text.
    \end{cases}
  \end{align*}
  \caption{Relaxed simulation instantiations of $\protect\Tobsname{}$,
    $\ensuremath{{\mathsf{ns}}}$, and $\ensuremath{{\mathsf{sf}}}$ for Dead Assignment Elimination.}\label{fig:sim-dae}
\end{figure}

\begin{theorem}
  Dead Assignment Elimination is CT transparent.
\end{theorem}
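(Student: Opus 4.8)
The plan is to reuse the route taken for Dead Branch Elimination: build a relaxed simulation and invoke \Cref{thm:ct:soundness} with the instantiations of $\Tobsname$, $\nstepsname$, and $\Tobsfname$ already pinned down in \Cref{fig:sim-dae}. What remains is to pick the simulation relation $\sim$ and discharge the theorem's three hypotheses. The one genuinely new point compared to Dead Branch Elimination is that a dead assignment mutates the register map---it writes to a dead variable---so the simple choice $\seqmach{c}{\vm}{\mem} \sim \seqmach{\comp{c}}{\vm}{\mem}$ with identical register maps is no longer an invariant. Instead I would let $\seqmach{c}{\vm_s}{\mem} \sim \seqmach{\comp{c}}{\vm_t}{\mem}$ hold precisely when the memories agree and the register maps $\vm_s$ and $\vm_t$ agree on every variable that is live at the program point $c$, where liveness has its usual meaning (a variable is live at $c$ when some continuation of the execution from $c$ reads it before overwriting it). I would then invoke the assumed correctness of the analysis producing the $\deadvars$ annotations, in the single form it is needed: for an annotated dead assignment $\iassign{x}{e}\set{\deadvars}$ with $x \in \deadvars$, the variable $x$ is not live just after the assignment, and the variables of $e$ together with the variables live just after the assignment are all live just before it.

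With this $\sim$ in hand I would discharge the hypotheses of \Cref{thm:ct:soundness} as follows. For~\labelcref{thm:ct:soundness:initial}, the two initial states share register map and memory, hence are $\sim$-related. For~\labelcref{thm:ct:soundness:simulation}, fix $s \sim t$ with $s = \seqmach{c}{\vm_s}{\mem}$ and $t = \seqmach{\comp{c}}{\vm_t}{\mem}$. If $t$ is not final, then $c$ is a run of $k \ge 0$ dead assignments followed by a statement that the transformation leaves in place; the input program executes those $k$ dead assignments---each emitting $\onone$ and changing only a variable that is not live afterwards, hence preserving agreement on live variables---and then the retained statement, for a total of exactly $\nsteps{\mpc{s}}$ steps with trace $\onone \lcons \dots \lcons \onone \lcons o_s$. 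The transformer $\Tobs{\mpc{s}}{\cdot}$ erases the $\onone$ prefix, and since $\vm_s$ and $\vm_t$ agree on the variables read by the retained statement we get $o_s = o_t$; the successor states are again $\sim$-related by the liveness bookkeeping (this is the only place that needs a case split on the kind of retained statement, and it is the standard dead-code-elimination argument). If $t$ is final, then $\comp{c} = \cnil$, so $c$ is a run consisting only of dead assignments; the input program runs all of them, emitting $\onone \lcons \dots \lcons \onone = \Tobsf{\mpc{s}}$ and reaching $\cnil$, which is $\sim$-related to $t$ since the live set at $\cnil$ is empty. For~\labelcref{thm:ct:soundness:injectivity}, I would show that $\Tobsname$ is \phiinj{} by induction on $\lsize{\obs_1}$, mirroring the Dead Branch Elimination argument: if $c$ does not start with a dead assignment then $\Tobs{c}{\cdot}$ is the identity on length-one traces and injectivity is immediate, and if it does, then any trace on which $\Tobs{c}{\cdot}$ is defined must begin with $\onone$, so the leading symbols coincide and the remaining traces coincide by the inductive hypothesis taken at the program point $c'$.

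I expect the only real effort to lie in hypothesis~\labelcref{thm:ct:soundness:simulation}, and precisely in showing that the ``agreement on live variables'' relation is at once strong enough to force $o_s = o_t$ on every retained load, store, or branch and weak enough to be re-established after each removed assignment. That balance is exactly what the correctness statement for the liveness annotations must deliver, and phrasing it so that both directions close is the delicate part; everything else---initial states, the shape of the emitted trace, and the $\ppoints$-injectivity of $\Tobsname$---is a routine transcription of the Dead Branch Elimination proof.
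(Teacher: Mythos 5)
Your proposal is correct and follows essentially the same route as the paper: a relaxed simulation via \Cref{thm:ct:soundness} with the instantiations of \Cref{fig:sim-dae}, a simulation relation requiring equal memories and register maps that agree off the dead (i.e.\ on the live) variables, re-establishment of the relation across each removed assignment using the correctness of the liveness/deadness annotations, and \ppoints{}-injectivity by induction on the trace length exactly as for Dead Branch Elimination. The only cosmetic difference is that you phrase the annotation correctness as the kill/gen conditions on live sets, whereas the paper assumes a semantic guarantee (related states step with equal observations to related states) from which it derives the same kill/gen inequality.
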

\begin{proof}[Proof Sketch]
  The proof of reflection for this pass is similar to Dead Branch
  Elimination, i.e., we apply \Cref{thm:ct:soundness}.
  We define the simulation relation
  \ensuremath{{\ensuremath{{\langle c, \ensuremath{{\rho}}, \ensuremath{{\mu}} \rangle}} \sim \ensuremath{{\langle c', \ensuremath{{\rho}}', \ensuremath{{\mu}}' \rangle}}}}
  when \ensuremath{{c' = \ensuremath{{\llparenthesis\,{c}\,\rrparenthesis}}}}, \ensuremath{{\rho}}{} and \(\ensuremath{{\rho}}'\) coincide everywhere
  except on dead variables, and \ensuremath{{\ensuremath{{\mu}} = \ensuremath{{\mu}}'}}.
  We define the remaining instantiations in \Cref{fig:sim-dae}.
  The proof of the requirements of \Cref{thm:ct:soundness} is analogous
  to Dead Branch Elimination.
\end{proof}

\paragraph*{Unspilling}
This  transformation removes spill-unspill pairs.
Compilers emit spill instructions to temporarily store a value to the stack
before reloading it later with an unspill.
This transformation avoids using the memory by moving such values into fresh
variables instead.
It allows us to translate an assembly-like language with
a limited set of registers to a high-level language with unlimited
abstract variables,
potentially improving the accuracy of static analyses.
A spill-unspill pair consists of a store and a load instructions with
the same constant offset from the stack pointer.
We denote the stack pointer as~\(\mathsf{sp}\).
For instance, unspilling transforms
\ensuremath{{
  \ensuremath{{{\left[\mathsf{sp} + 4\right]}\mathcolor{cyan}{=}{x}}} \mathop{;} c \mathop{;} \ensuremath{{{x}\mathcolor{cyan}{=}{\left[\mathsf{sp} + 4\right]}}}
}}
into \ensuremath{{\ensuremath{{{y}\mathcolor{cyan}{=}{x}}} \mathop{;} c \mathop{;} \ensuremath{{{x}\mathcolor{cyan}{=}{y}}}}}, where
\(c\) does not touch the memory location \ensuremath{{\mathsf{sp} + 4}} and
\(y\) is a fresh temporary variable.
Previously, we mentioned that removing memory accesses does not reflect CT\@.
Fortunately, this pass is CT transparent because the addresses leaked by spills and unspills are
constant offsets from the stack pointer, which in turn is constant
throughout execution---recall that we do not consider function calls in
our language.

\begin{theorem}
  Unspilling is CT transparent.
\end{theorem}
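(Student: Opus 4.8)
The plan is to build a simulation that happens to be lock-step but to conclude via the relaxed \Cref{thm:ct:soundness} rather than \Cref{thm:ct:lock-step-soundness}. Unspilling preserves the control-flow structure and the instruction count: it only rewrites each spill \istore{\mathsf{sp}+d}{x} into an assignment \iassign{y}{x} to a fresh variable~\(y\), and the matching unspill \iload{x}{\mathsf{sp}+d} into \iassign{x}{y}, recursing into the code between them. So the input and output programs step one-for-one, and I would take \(\nsteps{\cdot}\equiv 1\) and \(\Tobsf{\cdot}\equiv\lnil\) (the output reaches \cnil{} exactly when the input does, since \(\comp{c}=\cnil\) iff \(c=\cnil\)). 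The reason the lock-step theorem does not suffice is worth stating: a spill leaks the address \oaddr{\evale{\mathsf{sp}+d}{\vm}}, whereas the assignment replacing it leaks \onone{}, so any \emph{global} observation transformer would have to send both an address observation (at a spill) and \onone{} (at an ordinary assignment elsewhere) to \onone{}, and could not be injective. Indexing the transformer by the source program point, as \Cref{thm:ct:soundness} allows, removes the clash: it collapses the spilled address into \onone{} only at spill/unspill program points.

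For the simulation relation, I would relate \seqmach{c}{\vm}{\mem} with \seqmach{c'}{\vm'}{\mem'} when \(c'=\comp{c}\); the register maps \(\vm\)~and~\(\vm'\) agree on every non-fresh variable; the memories \(\mem\)~and~\(\mem'\) agree off the statically fixed set of spill slots; and, for every spill pair that is currently live, the input's spilled value \(\mem(n)\) at its slot~\(n\) equals the output's fresh variable~\(\vm'(y)\). Initial states are related (no live pairs yet), giving the initial-states requirement. The diagram is then a case analysis on the output step: a step firing \iassign{y}{x} (resp.\ \iassign{x}{y}) that replaces a (un)spill is matched by the input's \istore{\mathsf{sp}+d}{x} (resp.\ \iload{x}{\mathsf{sp}+d}), where the per-live-pair invariant --- preserved through the intervening code, which by assumption of the annotating analysis touches neither that slot nor the fresh variable --- shows the successors are again related; every other output step is matched by the identical input instruction with the identical observation. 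Accordingly I would define \(\Tobs{c}{\obs}\) as \(\onone\lcons\lnil\) when \(c\)'s next instruction is a (un)spill and \(\obs\) is the single address observation it can emit, as \(\obs\) itself when \(c\)'s next instruction is not a (un)spill and \(\lsize{\obs}=1\), and as \(\bot\) otherwise.

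It then remains to check \ppoints{}-injectivity of \(\Tobsname{}\). Fix a program point~\(c\). If its next instruction is not a (un)spill, \(\Tobs{c}{\cdot}\) is the identity on length-one traces, hence injective. If it is a (un)spill at offset~\(d\), then --- and this is the crux of the argument --- the only observation that instruction can ever emit is \oaddr{\mathsf{sp}_0+d}: since the language has no function calls, no instruction assigns~\(\mathsf{sp}\), so \(\vm(\mathsf{sp})\) equals its initial value \(\mathsf{sp}_0\) in every reachable state. Hence \(\Tobs{c}{\cdot}\) has a singleton domain at such a point and is vacuously injective. With the three hypotheses of \Cref{thm:ct:soundness} in place, CT transparency follows.

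I expect the main obstacle to be precisely this last point, and in particular its reliance on \(\mathsf{sp}_0\) being the \emph{same} in every execution: if \(\mathsf{sp}\) could be input-dependent --- worse, secret --- then the input program would leak distinct addresses at a spill on indistinguishable inputs while the output assignment would not, so Unspilling would fail to reflect CT. The proof must therefore make the fixed-stack-layout assumption explicit (the stack pointer is initialised to a fixed constant and never modified) and use it to pin down the per-program-point domain of the observation transformer. The remaining work --- tracking the set of live spill slots in the simulation relation so that it survives nested spill pairs, and the routine diagram case analysis --- is comparatively mechanical.
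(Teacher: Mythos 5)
Your proposal matches the paper's proof essentially step for step: the same simulation relation (code related by the transformation, register maps agreeing off the fresh variables, memories agreeing off the spill slots, and each live slot holding the value of its fresh variable), the same choices $\nsteps{\cdot}\equiv 1$ and $\Tobsf{\cdot}\equiv\lnil$, the same program-point-indexed observation transformer collapsing $\oaddr{(\mathsf{sp}+n)}$ to $\onone$ only at spill/unspill points, and the same injectivity argument resting on the stack pointer being a fixed constant throughout execution (which the paper also flags, noting the absence of function calls). Your explicit remark that $\ppoints$-injectivity would fail if the stack pointer were input-dependent is a correct and slightly more careful articulation of the assumption the paper states informally, but it does not change the argument.
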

\begin{proof}[Proof Sketch]
    Even though this pass maintains the control flow structure of the
    program, we cannot use \Cref{thm:ct:lock-step-soundness}
    since we need the relaxed version of the observation transformer: we
    need to inspect the program point in order to transform the observation.

    We define
    $\ensuremath{{\langle \ensuremath{{\mathit{c}}}, \rho, \mu \rangle}} \sim \ensuremath{{\langle \ensuremath{{\mathit{c}}}', \rho', \mu' \rangle}}$
    to hold when
    the output program's code is the transformed input program's code,
    \ensuremath{{\ensuremath{{\mathit{c}}}' = \ensuremath{{\llparenthesis\,{\ensuremath{{\mathit{c}}}}\,\rrparenthesis}}}};
    the variable maps \ensuremath{{\rho}}{} and \(\ensuremath{{\rho}}'\) coincide except on the fresh temporary
    variables introduced by the transformation;
    the memories \ensuremath{{\mu}}{} and \(\ensuremath{{\mu}}'\) coincide except on spilled stack offsets;
    and each spilled offset $n$ holds the value of its fresh variable
    $y$, i.e., \ensuremath{{\ensuremath{{\mu}}(\mathsf{sp} + n) = \ensuremath{{\rho}}'(y)}}.

    Since each output program step is simulated by exactly one input
    program step, we define \ensuremath{{\ensuremath{{\ensuremath{{\mathsf{ns}}}(s)}} \triangleq 1}} and
    \ensuremath{{\ensuremath{{\ensuremath{{\mathsf{sf}}}({c})}} \triangleq \ensuremath{{\epsilon}}}}.
    Finally, the observation transformer produces \ensuremath{{\bullet}}{} instead of
    \ensuremath{{\mathsf{adr}\; (\mathsf{sp} + n)}} observations.
    \begin{gather*}
        \Tobs{\ensuremath{{\mathit{c}}}}{\ensuremath{{\boldsymbol{o}}}} \triangleq
        \begin{cases}
            \ensuremath{{\bullet}}
          & \ensuremath{{\text{if } \ensuremath{{\mathit{c}}} \text{ is } \ensuremath{{{x}\mathcolor{cyan}{=}{\left[\mathsf{sp} + n\right]}}} \mathop{;} \ensuremath{{\mathit{c}}}'
            \text{ or } \ensuremath{{{\left[\mathsf{sp} + n\right]}\mathcolor{cyan}{=}{x}}} \mathop{;} \ensuremath{{\mathit{c}}}'}}
            \text{ and } \ensuremath{{\boldsymbol{o}}} = \ensuremath{{\mathsf{adr}\; (\mathsf{sp} + n)}}\text,
        \\
        \ensuremath{{\boldsymbol{o}}}
          & \text{if $c$ is not $\ensuremath{{{x}\mathcolor{cyan}{=}{\left[\mathsf{sp} + n\right]}}} \mathop{;} \ensuremath{{\mathit{c}}}'$ or
            $\ensuremath{{{\left[\mathsf{sp} + n\right]}\mathcolor{cyan}{=}{x}}} \mathop{;} \ensuremath{{\mathit{c}}}'$ and $\ensuremath{{\left|\ensuremath{{\boldsymbol{o}}}\right|}} = 1$,}
        \\
            \bot & \ensuremath{{\text{otherwise}}}\text.
        \end{cases}
    \end{gather*}

    The first two requirements hold straightforwardly.
    Injectivity on instruction spills and unspills instructions holds
    since the program point is the same, and otherwise trivially since the
    transformer returns its argument.
\end{proof}

\paragraph*{Structural Analysis}
This transformation is the core transformation in binary lifters,
which have the goal to take a binary input program and output a structured
program.
It takes an input program in control flow graph syntax
and identifies patterns that correspond to structures such as conditionals
and while loops.
Then, it replaces the structures found and outputs a program in structured syntax.
Structural Analysis sometimes fails when the input CFG program
contains patterns that originated from complex control flow structures such
as break statements.
In this work, we focus on the basic patterns of loops and conditional as
branches displayed in \Cref{fig:example-patterns}.
More involved analyses, like single-exit-single-successor
analysis \cite{DBLP:conf/scopes/EngelLAFB11}, can
accommodate more complex control flow, based on ideas such as tail
regions and iterative analysis \cite{DBLP:conf/uss/SchwartzL13}.

Structural Analysis is an example, where $\ensuremath{{\llparenthesis\,{\cdot}\,\rrparenthesis}} :
\ensuremath{{\mathcal{L}}}_s \to \ensuremath{{\mathcal{L}}}_t$ transforms the syntax of the programs, i.e.,
$\ensuremath{{\mathcal{L}}}_s \neq \ensuremath{{\mathcal{L}}}_t$.
For the output program's syntax $\ensuremath{{\mathcal{L}}}_t$, we use the structured programs
from \Cref{sec:setting}.
The input language $\ensuremath{{\mathcal{L}}}_s$ are control flow graphs that we sketch briefly.

\newcommand*{\refcfgassign}{\hyperref[fig:sem-cfg:single-step]{\textsc{\lblassign}}}
\newcommand*{\reflcfgoad}{\hyperref[fig:sem-cfg:single-step]{\textsc{\lblload}}}
\newcommand*{\refscfgtore}{\hyperref[fig:sem-cfg:single-step]{\textsc{\lblstore}}}
\newcommand*{\refscfgeqcond}{\hyperref[fig:sem-cfg:single-step]{\textsc{\lblseqcond}}}
\newcommand*{\refscfgpeccond}{\hyperref[fig:sem-cfg:single-step]{\textsc{\lblspeccond}}}

\begin{figure}
  \begin{subfigure}{0.3\linewidth}
    \centering
    \begin{tikzpicture}[>=stealth,thick,font=\small]
      \node at ( 0.0,  0.8) (start) {};
      \node at ( 0.0,  0.0) (cond)  {\(i < N\)};
      \node at ( 2.0,  0.5) (inc)   {\ensuremath{{{i}\mathcolor{cyan}{=}{i + 1}}}};
      \node at ( 2.0, -0.5) (dots1) {\dots};
      \node at ( 0.0, -0.8) (dots2) {\dots};

      \draw[->] (start) -- (cond);
      \draw[->] (cond) to[bend right] (dots1.west);
      \draw[->] (dots1) -- (inc);
      \draw[->] (inc.west) to[bend right] (cond);
      \draw[->] (cond) -- (dots2);
    \end{tikzpicture}

    \vspace{1em}

    \begin{tikzpicture}[>=stealth,thick]
      \node at ( 0.0,  0.8) (start) {};
      \node at ( 0.0,  0.0) (cond)  {\(x == 0\)};
      \node at ( 1.0, -0.8) (dots1) {\dots};
      \node at (-1.0, -0.8) (dots2) {\dots};

      \draw[->] (start) -- (cond);
      \draw[->] (cond) -- (dots1);
      \draw[->] (cond) -- (dots2);
    \end{tikzpicture}
  \caption{}\label{fig:example-patterns}
  \end{subfigure}\hfill \begin{subfigure}{0.65\linewidth}
  \small
  \begin{mathpar}
    \inferrule[\lblassign]{
      \ensuremath{{\ell: \ensuremath{{{x}\mathcolor{cyan}{=}{e}}} \mathrel{\text{\normalfont\guilsinglright}}\ell'}}
      \\\\
      \rho' = \rho[x \leftarrow \ensuremath{{\llbracket\, e \,\rrbracket_{\rho}}}]
    }{
      \ensuremath{{\langle \ell, \rho, \mu \rangle}}
      \step{}{\ensuremath{{\bullet}}}
      \ensuremath{{\langle \ell', \rho', \mu \rangle}}
    }

    \inferrule[\lblload]{
      \ensuremath{{\ell: \ensuremath{{{x}\mathcolor{cyan}{=}{\left[e\right]}}} \mathrel{\text{\normalfont\guilsinglright}}\ell'}}
      \\
      a = \ensuremath{{\llbracket\, e \,\rrbracket_{\rho}}}
      \\\\
      \rho' = \rho[x \leftarrow \mu(a)]
    }{
      \ensuremath{{\langle \ell, \rho, \mu \rangle}}
      \step{}{\ensuremath{{\mathsf{adr}\; a}}}
      \ensuremath{{\langle \ell', \rho', \mu \rangle}}
    }

    \inferrule[\lblstore]{
      \ensuremath{{\ell: \ensuremath{{{\left[e\right]}\mathcolor{cyan}{=}{x}}} \mathrel{\text{\normalfont\guilsinglright}}\ell'}}
      \\
      a = \ensuremath{{\llbracket\, e \,\rrbracket_{\rho}}}
      \\\\
      \mu' = \mu[a \leftarrow \rho(x)]
    }{
      \ensuremath{{\langle \ell, \rho, \mu \rangle}}
      \step{}{\ensuremath{{\mathsf{adr}\; a}}}
      \ensuremath{{\langle \ell', \rho, \mu' \rangle}}
    }

    \inferrule[\lblseqcond]{
      \ensuremath{{\ell: e \mathrel{\text{\normalfont\guilsinglright}}\ell_{\mathsf{t\kern-0.9ptt}}, \ell_{\mathsf{f\kern-1.1ptf}}}}
      \\
      b = \ensuremath{{\llbracket\, e \,\rrbracket_{\rho}}}
    }{
      \ensuremath{{\langle \ell, \rho, \mu \rangle}}
      \step{}{\ensuremath{{\mathsf{br}\; b}}}
      \ensuremath{{\langle \ell_{b}, \rho, \mu \rangle}}
    }
  \end{mathpar}
  \caption{}\label{fig:sem-cfg:single-step}
  \end{subfigure}
  \caption{
    (a) Example CFG patterns.
    The top pattern corresponds to a while loop and the bottom one to a
    conditional.
    (b) Semantics of the CFG language.}
\end{figure}

\paragraph*{CFG Language}
Our input language consists of assembly-like CFG programs.
A CFG program is a set $G$ of labelled nodes.
There are two types of nodes:
instruction nodes are triples $(\ell, a, \ell')$ where $\ell$ is the label of
the node, $a$ is an atomic command (i.e., it is an assignment
\ensuremath{{{x}\mathcolor{cyan}{=}{e}}}, a load \ensuremath{{{x}\mathcolor{cyan}{=}{\left[e\right]}}}, or a store \ensuremath{{{\left[e\right]}\mathcolor{cyan}{=}{x}}}),
and $\ell'$ is the label of the successor node.
Branching nodes $(\ell, e, \ell_\mathsf{t\kern-0.9ptt}, \ell_\mathsf{f\kern-1.1ptf})$ instead contain a
branching condition $e$, and two successors.
Each program has a distinguished initial and final label.
When $G$ is clear from context, we write \ensuremath{{\ell: i \mathrel{\text{\normalfont\guilsinglright}}\ell'}} when the
program contains the instruction node $(\ell, i, \ell') \in G$,
and \ensuremath{{\ell: e \mathrel{\text{\normalfont\guilsinglright}}\ell_{\mathsf{t\kern-0.9ptt}}, \ell_{\mathsf{f\kern-1.1ptf}}}} when it contains the
branching node $(\ell, e, \ell_{\mathsf{t\kern-0.9ptt}}, \ell_\mathsf{f\kern-1.1ptf}) \in G$.

The semantics of this language operates on states
\ensuremath{{\langle \ell, \ensuremath{{\rho}}, \ensuremath{{\mu}} \rangle}} consisting of
the label~$\ell$ which is the current program point,
$\ensuremath{{\ensuremath{{\mathsf{pc}}}({\ensuremath{{\langle \ell, \ensuremath{{\rho}}, \ensuremath{{\mu}} \rangle}}})}} = \ell$,
a register map~$\rho$ that associates register variables with values,
and a memory~$\mu$ that associates addresses to values.
We provide its leakage semantics \sem{s}{}{o}{s'} in
\Cref{fig:sem-cfg:single-step}. It is straightforward to see that the
CFG semantics satisfies the
constraints from our framework in \Cref{sec:setting}.

\begin{theorem}
    Structural Analysis is CT transparent.
\end{theorem}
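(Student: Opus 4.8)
The plan is to apply \cref{thm:ct:lock-step-soundness}: Structural Analysis neither adds nor removes instructions, it only re-expresses the control flow of the CFG in structured syntax, so every step of the output program is matched by exactly one step of the input CFG. An instruction node step is mirrored by a single atomic-command step (through rule \textsc{Seq}), and a branching node step is mirrored by a single step of rule \textsc{Cond} or rule \textsc{While}; in each case the leaked observation ($\onone$, $\oaddr{a}$, or $\obranch{b}$) is literally the same on both sides. Consequently the observation transformer $\Tobsname{}$ is the identity, which is trivially injective, and the initial states carry identical register maps and memories since the transformation does not touch data; this discharges requirements \labelcref{thm:ct:lock-step-soundness:initial} and \labelcref{thm:ct:lock-step-soundness:injectivity} immediately. (Equivalently, use \cref{thm:ct:soundness} with $\nstepsname$ constantly~$1$ and $\Tobsfname$ constantly~$\lnil$.)

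The real content is the simulation relation $\sim$. The difficulty is that the input and output languages use incompatible notions of program point --- a CFG label versus a structured ``remaining command'' --- and, worse, a structured \textbf{while} command unfolds its body on each iteration, so a fixed map from labels to commands cannot be stable across loop back-edges. I would resolve this by reading off the recursion of the analysis a structured command $\mathsf{res}_{\ell}$ for each label $\ell$ reachable in the CFG, representing ``the structured code to be run from $\ell$''. For an instruction node $\ginstr{\ell}{a}{\ell'}$, put $\mathsf{res}_{\ell} = a \capp \mathsf{res}_{\ell'}$. For a conditional pattern at $\ell$ with join label $\ell_j$, put $\mathsf{res}_{\ell} = \iif{e}{c_\btrue}{c_\bfalse} \capp \mathsf{res}_{\ell_j}$, where $c_b$ is the structured rendering of the $b$-successor region truncated at $\ell_j$, so that $\mathsf{res}_{\ell_b} = c_b \capp \mathsf{res}_{\ell_j}$. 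For a loop pattern with header $\ell$, body region, and exit $\ell_e$, put $\mathsf{res}_{\ell} = \iwhile{e}{c_{\mathrm{body}}} \capp \mathsf{res}_{\ell_e}$, where $c_{\mathrm{body}}$ renders the body region truncated at the back-edge; a label inside the body region then gets its residual truncated at the header and sequenced with $\mathsf{res}_{\ell}$ itself. The simulation is $\seqmach{\ell}{\rho}{\mu} \sim \seqmach{\mathsf{res}_{\ell}}{\rho}{\mu}$, and $\mathsf{res}_{\ell_0}$ for the initial label $\ell_0$ is exactly the program emitted by the transformation.

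With $\mathsf{res}$ fixed, the lock-step diagram is a routine case analysis on the CFG step from $\seqmach{\ell}{\rho}{\mu}$: unfold $\mathsf{res}_{\ell}$ by one level, apply the matching structured rule (using neutrality and associativity of $\capp$ to put the head instruction in position), and check that the resulting command equals $\mathsf{res}_{\ell'}$ for the successor label $\ell'$ while the observation coincides. The loop case is where the ``sequence with $\mathsf{res}_{\ell}$ itself'' clause earns its keep: taking the true branch unfolds the \textbf{while} to $c_{\mathrm{body}} \capp (\iwhile{e}{c_{\mathrm{body}}} \capp \mathsf{res}_{\ell_e})$, which is by construction $\mathsf{res}_{\ell_{\mathrm{body}}}$; the empty-body and empty-branch sub-cases degenerate cleanly because $\cnil$ is neutral for $\capp$.

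The main obstacle is showing that $\mathsf{res}_{\cdot}$ is well-defined, i.e., that the CFG decomposes into a finite tree of the loop and conditional patterns of \cref{fig:example-patterns} with single-entry regions and unique join points. This is exactly the class of inputs on which the analysis is stated to operate; on such inputs the recursion over the region hierarchy terminates and the ``truncate at $\ell_j$ / at the header'' operations are unambiguous. Once this is in place, the three requirements of \cref{thm:ct:lock-step-soundness} follow directly and yield CT transparency.
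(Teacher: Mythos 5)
Your proposal is correct and follows essentially the same route as the paper: a lock-step simulation with the identity observation transformer, built on a per-label residual command (your $\mathsf{res}_{\ell}$ is exactly the paper's $\hoflsimple{G}{\ell}$, including the trick of truncating a loop-body label's translation at the header and sequencing it with the header's own residual so that \textbf{while}-unfolding lands back in the relation). The step-equivalence case analysis and the well-nestedness assumption on the annotated regions match the paper's appendix argument as well.
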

\begin{proof}[Proof Sketch]
We write \ensuremath{{\ensuremath{{\mathit{struct}_{G}}}(\ell)}} for the structured code
that Structural Analysis identified to be the matching structured program
for \(\ell\) in the CFG program \(G\).
As mentioned above, the precise definition of $\ensuremath{{\ensuremath{{\mathit{struct}_{G}}}(\ell)}}$ can
be found in \ifAppendix{\cref{appendix:structural-analysis}}\else{Appendix {A.4}}\fi{}.

In order to prove transparency we define a lock-step simulation
and apply \Cref{thm:ct:lock-step-soundness}.
For a input CFG $G$,
the simulation relation~\(\sim\) is equality on register and memory contents,
while the label of the input program is replaced by its structured program,
$\ensuremath{{\langle \ell, \rho, \mu \rangle}} \sim \ensuremath{{\langle \ensuremath{{\ensuremath{{\mathit{struct}_{G}}}(\ell)}}, \rho, \mu \rangle}}$.
We choose the identity as the observation transformer.

\Cref{thm:ct:lock-step-soundness:simulation} follows from the fact that 
$\ensuremath{{\mathit{struct}_{G}}}$ ensures the next instruction to be executed in $\ell$ always coincides with $\ensuremath{{\ensuremath{{\mathit{struct}_{G}}}(\ell)}}$.
That fact gives rise to the following proposition that immediately implies \Cref{thm:ct:lock-step-soundness:simulation}.

\begin{proposition}
  For any pair of related states \ensuremath{{s \sim t}},
  the input state steps \sem{s}{}{o}{s'} if and only if
  the output state steps \sem{t}{}{o}{t'}, and,
  furthermore, \ensuremath{{s' \sim t'}}.
\end{proposition}

The other two requirements are satisfied as well:
initial states that share the same inputs are related trivially,
because the initial label $\ell_\ensuremath{{\mathit{init}}}$ of $G$ maps to the initial program code of $\ensuremath{{\llparenthesis\,{G}\,\rrparenthesis}}$
and the contents of registers and memory coincide;
and \Tobsname{} is injective because it is the identity.
\end{proof}

\paragraph*{Loop Rotation}
This transformation is used to move the entry point of a loop to a
desired location in the loop body.
\Cref{fig:loop-rotate} illustrates the transformation:
the left-hand side presents the input program, where the boxed
instruction is the loop entry, and the right-hand side is the output
program.
The transformation duplicates the loop entry, $\textit{cond}$, and makes
its successor, $\textit{inst}_1$, the new loop entry.
In practice, loops are rotated multiple times in succession in
order to move the entry node to a desired position.

This transformation operates on CFG programs, i.e., its input and output
are CFG programs.
Formally, a loop of an input CFG program $G$ is a triple $(\ell_{\textit{pred}}, \ell_{\textit{entry}}, L)$,
where $L$ is the set of node labels that form the loop,
and $\ell_{\textit{pred}}$ is the label of the only node outside of $L$ that has a
successor in~$L$, namely the entry point $\ell_{\textit{entry}} \in L$.
For simplicity, we require that the loop predecessor $\ell_{\textit{pred}}$ has only $\ell_{\textit{entry}}$
as a successor, i.e., it is an instruction node $\ensuremath{{\ell_{\textit{pred}}: i \mathrel{\text{\normalfont\guilsinglright}}\ell_{\textit{entry}}}}$.
Given a loop $(\ell_{\textit{pred}}, \ell_{\textit{entry}}, L)$ to be rotated, the transformation
$\ensuremath{{\llparenthesis\,{G}\,\rrparenthesis}}$ extends the CFG program $G$ with a node labeled $\ell_{\textit{copy}}$
which is a copy of the $\ell_{\textit{entry}}$ node apart from its label.
It also changes the \ensuremath{{\ell_{\textit{pred}}: i \mathrel{\text{\normalfont\guilsinglright}}\ell_{\textit{entry}}}} node to
$\ensuremath{{\ell_{\textit{pred}}: i \mathrel{\text{\normalfont\guilsinglright}}\ell_{\textit{copy}}}}$, so that $\ell_{\textit{copy}}$ is executed before the loop.

\begin{theorem}
    Loop Rotation is CT transparent.
\end{theorem}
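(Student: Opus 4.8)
The plan is to establish a \emph{lock-step} simulation and invoke \Cref{thm:ct:lock-step-soundness}. The key insight is that, although $\comp{G}$ has one node more than $G$, the two programs run in lock step: the fresh node $\lcopy$ is a copy of $\lentry$ — same branching condition $e$, same two successors — and it is reached in $\comp{G}$ exactly after a step through $\lbef$; since $\lbef$ is an instruction node $\ginstr{\lbef}{i}{\lentry}$ whose only successor is $\lentry$, every $\lcopy$-step of the output is matched, with identical register map and memory, by an $\lentry$-step of the input, emitting the same branch observation. Every other node of $\comp{G}$ — in particular $\lentry$ itself, still reached through the untouched back edge — is literally the corresponding node of $G$, so from any such point the two programs proceed identically.

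Concretely, I would take the observation transformer to be the identity, $\Tobs{}{o_s} \eqdef o_s$, which makes requirement~\labelcref{thm:ct:lock-step-soundness:injectivity} immediate, and define the simulation relation to keep registers and memories equal and program points equal, with one extra family of pairs for the phase shift:
\begin{equation*}
  \seqmach{\ell}{\rho}{\mu} \sim \seqmach{\ell'}{\rho'}{\mu'}
  \quad\eqdef\quad
  \rho = \rho' \text{ and } \mu = \mu' \text{ and }
  \bigl(\ell = \ell' \text{ or } (\ell = \lentry \text{ and } \ell' = \lcopy)\bigr)\text.
\end{equation*}
This relation is deliberately non-functional — an input $\lentry$-state is related both to the output $\lentry$-state (reached via the back edge) and to the output $\lcopy$-state (reached via $\lbef$) — which is harmless, since the simulation diagram only quantifies over pairs in the relation. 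Requirement~\labelcref{thm:ct:lock-step-soundness:initial} then holds because the transformation changes neither the distinguished initial label nor the initial state, so $P(i) = \comp{P}(i)$ and the $\ell = \ell'$ clause applies.

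The substance is requirement~\labelcref{thm:ct:lock-step-soundness:simulation}, which I would prove by a case split on a related pair $s \sim t$ with $t = \seqmach{\ell'}{\rho}{\mu}$ taking a step. If $\ell' \notin \{\lbef, \lcopy\}$ (which forces $\ell = \ell'$), the node at $\ell'$ is identical in $G$ and $\comp{G}$, so $s$ takes the same step with the same observation and the successors land in the $\ell = \ell'$ clause, even when the successor is $\lentry$. If $\ell' = \lbef$ (again forcing $\ell = \lbef$), then $s$ steps along $\ginstr{\lbef}{i}{\lentry}$ and $t$ along $\ginstr{\lbef}{i}{\lcopy}$, emitting the same observation of the atomic command $i$ and producing the same registers and memory, so the result pair $\seqmach{\lentry}{\rho'}{\mu'} \sim \seqmach{\lcopy}{\rho'}{\mu'}$ is precisely the extra family of $\sim$. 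Finally, if $t = \seqmach{\lcopy}{\rho}{\mu}$ (forcing $s = \seqmach{\lentry}{\rho}{\mu}$), then $t$ steps along the copied branching node and $s$ along $\lentry$, both with condition $e$; they emit the same observation $\obranch{\evale{e}{\rho}}$ and move to the same successor state, related by the $\ell = \ell'$ clause. Termination is matched because the final label is unchanged and $\lcopy$ is fresh, so a final output state forces a final input state.

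The only genuine obstacle — the step I would spend the most care on — is choosing the simulation relation so that it cleanly absorbs this one phase shift between $\lcopy$ and $\lentry$, i.e., relating a single input program point to two output program points, and verifying that the redirected $\lbef$-edge is the \emph{only} place the shift is introduced. This relies on the hypotheses that $\lbef$ is the sole node outside $L$ with a successor in $L$ and that its unique successor is $\lentry$, which together guarantee that the rest of the control-flow graph is untouched and that $\lcopy$ is reached only from $\lbef$. Everything else is routine given the CFG semantics and \Cref{thm:ct:lock-step-soundness}.
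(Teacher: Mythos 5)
Your proposal is correct and follows essentially the same route as the paper: a lock-step simulation with the identity observation transformer and the relation that equates program points except for the single extra family $\ell = \lentry$, $\ell' = \lcopy$, discharged by the same three-way case split on whether the output state is at $\lbef$, at $\lcopy$, or elsewhere. The only cosmetic difference is that you phrase the $\lcopy$ case as if $\lentry$ were necessarily a branching node, whereas the paper's definition of $\comp{G}$ also covers the instruction-node case, but the identical argument applies there.
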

\begin{proof}[Proof Sketch]
  We prove the transparency of Loop Rotation with a lock-step
  simulation and \Cref{thm:ct:lock-step-soundness}.
  We instantiate the simulation relation $\sim$ so that it holds
  on $\ensuremath{{\langle \ell_s, \ensuremath{{\rho}}, \ensuremath{{\mu}} \rangle}} \sim \ensuremath{{\langle \ell_t, \ensuremath{{\rho}}, \ensuremath{{\mu}} \rangle}}$,
  when either $\ell_s = \ell_t$, or $\ell_s = \ell_{\textit{entry}}$ and $\ell_t = \ell_{\textit{copy}}$.
  This means that the register map $\ensuremath{{\rho}}$ and memory $\ensuremath{{\mu}}$ coincide
  exactly.
  The observation transformer is the identity.
  \Cref{thm:ct:lock-step-soundness:simulation} holds because program
  locations also coincide except for when the output program is at label
  $\ell_{\textit{copy}}$: in that case, the input program is at $\ell_{\textit{entry}}$, which has
  the same behavior by definition of \(\ell_{\textit{copy}}\).
  \Cref{thm:ct:lock-step-soundness:initial,thm:ct:lock-step-soundness:injectivity} hold trivially.
\end{proof}

\begin{figure}
    \begin{subfigure}{0.45\linewidth}
        \begin{center}
            \begin{tikzpicture}[
                scale=0.8,every node/.style={transform shape},
                >=stealth,thick
              ]
                \node at ( -3, 1.5) (entry) {};
                \node[draw,minimum width=1.5cm] at ( 0, 0) (header)  {\textit{cond}};
                \node at ( 2, 1.5) (i1)   {$\textit{inst}_1$};
                \node at ( 2, 0.5) (i2) {$\textit{inst}_2$};
                \node at ( 2, -0.5) (idots) {\dots};
                \node at ( 2, -1.5) (in) {$\textit{inst}_n$};
                \node at ( -2, -1.5) (exit) {\textit{exit}};

                \draw[->] (entry) -| ([shift={(-0.3,0)}]cond.north);
                \draw[->] (i1) -- (i2);
                \draw[->] (i2) -- (idots);
                \draw[->] (idots) -- (in);
                \draw[->] (in.west) -| ([shift={(0.3,0)}]header.south);
                \draw[->] (header.north) ++ (0.3,0) |- (i1);
                \draw[->] (header.south) ++ (-0.3,0) |- (exit.east);
            \end{tikzpicture}
        \end{center}
    \end{subfigure}
    \begin{subfigure}{0.45\linewidth}
        \begin{center}
            \begin{tikzpicture}[
                scale=0.8,every node/.style={transform shape},
                >=stealth,thick
              ]
                \node at ( -3, 1.5) (entry) {};
                \node at ( -2, 1.5) (cond-h) {\textit{cond}};
                \node[draw,minimum width=1.5cm] at ( 0, 0) (header)  {$\textit{inst}_1$};
                \node at ( 2, 1.5) (i1)  {$\textit{inst}_2$} ;
                \node at ( 2, 0.5) (i2) {\dots};
                \node at ( 2, -0.5) (idots) {$\textit{inst}_n$};
                \node at ( 2, -1.5) (in) {\textit{cond}};
                \node at ( -2, -1.5) (exit) {\textit{exit}};

                \draw[->] (entry) -- (cond-h);
                \draw[->] (cond-h) -| ([shift={(-0.3,0)}]cond.north);
                \draw[->] (cond-h) -- (exit);
                \draw[->] (i1) -- (i2);
                \draw[->] (i2) -- (idots);
                \draw[->] (idots) -- (in);
                \draw[->] (in.west) ++ (0,0.05) -| ([shift={(0.3,0)}]header.south);
                \draw[->] (header.north) ++ (0.3,0) |- (i1);
                \draw[->] (in.west) ++ (0,-0.08) -- ([shift={(0,-0.08)}]exit.east);
            \end{tikzpicture}
        \end{center}
    \end{subfigure}
    \caption{Loop Rotation duplicates the loop entry and then moves the entry point of the loop by one instruction.}
    \label{fig:loop-rotate}
\end{figure}

\subsection{Nontransparent Transformations}\label{sec:nontranspasse}
This section presents the program transformations from
\cref{tab:passes} that do not reflect CT, providing minimal
counterexamples to illustrate the kind of removed CT violations
removed.
In all examples, the input program to the transformation leaks the value of a
secret variable \jazz{sec} and the transformation removes this CT violation.
As discussed in \cref{sec:decompiler-eval}, we used the counterexamples to test
real-world decompilers for reflection failures.

\paragraph*{If Conversion}
If Conversion simplifies the control flow of a program by converting a
conditional branch into an equivalent branchless conditional move statement,
which are supported by many ISAs.
For example, \texttt{x86-64} provides the \jazz{CMOVcc} instruction to
conditionally update a register when a condition holds.
Contrary to conditional branches, conditional moves do not leak their condition.
\Cref{lst:clangover} presented the counterexample to CT reflection,
as explained in \cref{sec:motivating-example}.

\paragraph*{Branch Coalescing}
This transformation removes conditional branches when their branches are
identical.
\Cref{lst:branch_coalescing} presents an example of Branch Coalescing:
the input program leaks the value of \texttt{sec} via the conditional branch,
but the in output program the branch was coalesced.
This means that the output program does not leak \jazz{sec}.
A special case of this transformation present in many decompilers is
Empty Branch Coalescing (\cref{lst:empty_branch_coalescing}), which
coalesces only on empty branches.

\begin{figure}
  \begin{subfigure}[t]{0.49\linewidth}
    \centering
    \begin{jasmincode}[outerpos=t,outerwidth=26ex]
      \jasminindent{0}\jasminkw{if} (sec) \jasminopenbrace{} x = \jasminconstant{42}; \jasminclosebrace{}\\
      \jasminindent{0}\jasminkw{else} \jasminopenbrace{} x = \jasminconstant{42}; \jasminclosebrace{}
    \end{jasmincode}
    \begin{jasmincode}[outerpos=t,outerwidth=12ex]
      \jasminindent{0}\\
      \jasminindent{0}x = \jasminconstant{42};
    \end{jasmincode}
    \setcounter{subfigure}{0}
    \subcaption{Branch Coalescing.}\label{lst:branch_coalescing}
  \end{subfigure}\hfill \begin{subfigure}[t]{0.48\linewidth}
    \centering
    \begin{jasmincode}[outerpos=t,outerwidth=16ex]
      \jasminindent{0}x = [sec];\\
      \jasminindent{0}x = \jasminconstant{42};
    \end{jasmincode}
    \begin{jasmincode}[outerpos=t,outerwidth=12ex]
      \jasminindent{0}\\
      \jasminindent{0}x = \jasminconstant{42};
    \end{jasmincode}
    \setcounter{subfigure}{2}
    \subcaption{Dead Load Elimination.}\label{lst:dead_load}
  \end{subfigure}

  \bigskip

  \begin{subfigure}[t]{0.49\linewidth}
    \centering
    \begin{jasmincode}[outerpos=t,outerwidth=26ex]
      \jasminindent{0}\jasminkw{if} (sec) \jasminopenbrace{} \jasminclosebrace{}
      \jasminkw{else} \jasminopenbrace{} \jasminclosebrace{}
    \end{jasmincode}
    \begin{jasmincode}[outerpos=t,outerwidth=12ex]
      \jasminindent{0}
    \end{jasmincode}
    \setcounter{subfigure}{1}
    \subcaption{Empty Branch Coalescing.}\label{lst:empty_branch_coalescing}
  \end{subfigure}\hfill \begin{subfigure}[t]{0.48\linewidth}
    \centering
    \begin{jasmincode}[outerpos=t,outerwidth=16ex]
      \jasminindent{0}z = [sec];\\
      \jasminindent{0}y = \jasminconstant{42};\\
      \jasminindent{0}[sec] = z;
    \end{jasmincode}
    \begin{jasmincode}[outerpos=t,outerwidth=12ex]
      \jasminindent{0}\\
      \jasminindent{0}y = \jasminconstant{42};\\
      \jasminindent{0}
    \end{jasmincode}
    \setcounter{subfigure}{3}
    \subcaption{Dead Store Elimination.}\label{lst:self_store}
  \end{subfigure}
  \caption{Counterexamples for nontransparent passes.}\label{fig:counterexamples}
\end{figure}

\paragraph*{Memory Access Elimination}
This is a family of transformations that remove unnecessary memory
accesses.
One example from this family is Dead Load Elimination
(\Cref{lst:dead_load}), where a value loaded from memory is never used.
Another example is Dead-Store Elimination (\Cref{lst:self_store}),
where a store is guaranteed to not change the value of its location.
In both listings, \jazz{sec} leaks due to a memory access, but the
output programs have no such access.
Thus, the value is no longer leaked.

\section{Improving the Transparency of Practical Decompilers}
\label{sec:ct-retdec}

In this section, we aim to improve a practical decompiler, \textsc{RetDec}{},
in order to enable better detection of CT violations.
We modify \textsc{RetDec}{} by disabling passes that violate CT transparency.
To guide our search for the passes to keep and to discard we leverage our theoretical findings from \Cref{sec:passes}.
Because \textsc{RetDec}{} employs a total of 62 passes, we fall back to empirical methods to decide which passes to disable for the passes not covered in \Cref{sec:passes}.
This means that there are two reasons that our modified version of \textsc{RetDec}{} may still not be transparent:
the empirical tests do not guarantee that passes are transparent, and even the passes shown to be transparent in \Cref{sec:passes} might contain implementation bugs.
However, as we will see in our evaluation of the modified version of \textsc{RetDec}{}, this fact does not diminish the usefulness of improving the transparency of \textsc{RetDec}{}.

We first give an overview of \textsc{RetDec}{}, describe the transformations it
performs and discuss their coverage by passes from~\Cref{sec:passes}.
We then turn to our empirical methods for other passes.
Finally, we combine our modified version of \textsc{RetDec}{} with
CT-LLVM~\cite{cryptoeprint:2025/338},
a CT analysis tool for LLVM-IR, to detect CT violations in binaries.

\subsection{RetDec Transformations}\label{sec:retdec-overview}
\textsc{RetDec}{} is a state-of-the-art machine code decompiler that outputs LLVM IR.
It generates well-readable code due to its reliance on LLVM's analysis
and optimization passes. We chose to modify \textsc{RetDec}{} over other
decompilers, because it has a well-structured decompilation pass management.
All transformation passes are organized in a configuration JSON file.
This makes it a good fit to enable or disable decompiler passes at will.

\textsc{RetDec}{}'s decompilation process takes three steps: convert binary code to
LLVM IR, optimize LLVM IR, and convert LLVM IR to C code.
Overall, \textsc{RetDec}{} uses 62 distinct passes to decompile and optimize the code:
27 passes are implemented by \textsc{RetDec}{} itself and the remaining 35 passes are
borrowed from LLVM\@.
Most \textsc{RetDec}{} builtin passes do not perform extensive program transformation.
The big exception is the lifting/structural analysis from binary to LLVM IR (\textit{retdec-decoder}),
and some minor exceptions being refinement of the decompilation output (e.g., \textit{retdec-simple-types})
or user-configurable passes (e.g., \textit{retdec-select-funcs}).
We categorize the remaining LLVM passes into six groups,
and discuss, which of them and how they are covered by our theoretical findings from \cref{sec:passes}.
A complete list of all passes, their inclusion in our modified version of \textsc{RetDec}{},
and which passes are covered by \cref{sec:passes} is available in \ifAppendix{\cref{appendix:retdec-passes}}\else{Appendix {B}}\fi{}.

\paragraph*{Code Simplification/Elimination}
This category contains ten passes that simplify or eliminate code, 
such as \textit{adce}, \textit{dse}, \textit{early-cse} and \textit{bdce}, which remove dead code.
This includes the removal of dead load/store operations and dead branches.
We demonstrated that removing dead memory operations makes these passes nontransparent (\Cref{lst:dead_load,lst:self_store}).
One exception is the \textit{strip-dead-prototypes} pass, which removes
unused function prototypes. This transformation is similar to Dead Branch
Elimination from~\Cref{sec:passes}, which we proved transparent.

\paragraph*{Loop Optimizations}
This category contains ten passes that optimize loops.
Some passes are used to canonicalize loops or replace them with non-loop forms.
For example, \textit{loop-rotate} is used to convert do-while loops to while
loops, and \textit{loop-idiom} is used to transform simple loops into a
non-loop form (e.g., replace a loop with a \texttt{memcpy} call).
We have proved in \Cref{sec:passes} that Loop Rotation is transparent.
However, some of the other passes are not transparent. For example, \textsc{RetDec}{}
invokes \textit{loop-deletion} to remove loops that have
no side-effects to memory and do not contribute to the program output.
The removal of such loops can remove CT violations similar to Memory Access
Elimination from \Cref{sec:passes}.

\paragraph*{Expression Substitution}
This category contains four passes that perform various expression substitution
optimizations. Specifically, \textsc{RetDec}{} invokes \textit{constprop},
\textit{correlated-propagation} and \textit{scp} to propagate constants,
and \textit{constmerge} to merge duplicate constants.
These passes are special cases of Expression Substitution from \Cref{sec:passes}, which is transparent.

\paragraph*{Control Flow Optimization}
This category contains four passes that simplify the control flow graph of
the program. Two passes, \textit{simplifycfg} and \textit{loop-simplifycfg},
are used to remove unreachable blocks and empty branches,
which we have shown to not be transparent (\Cref{lst:branch_coalescing}).
Another pass, \textit{jump-threading}, is used to remove redundant branches,
whose conditions are known to be constant at compile time.
Removing branches with constant conditions is precisely Dead Branch Elimination
from~\Cref{sec:passes}, which we have proven transparent.
Finally, \textit{sink} is used to move instructions to the blocks where they
are really used. This pass is not transparent as it could potentially move
memory accesses into a dead branch,
so the memory access is effectively eliminated (\Cref{lst:dead_load,lst:self_store}).

\paragraph*{Stack Optimization}
This category contains one pass, \textit{mem2reg}, that promotes stack
variables to register variables. This is similar to our Unspilling pass,
which we have proved to be transparent.

\paragraph*{LLVM Analysis}
This category contains six passes that do not transform the program but
provide analysis results to other passes.
Since they neither remove nor introduce code, they are transparent.

\subsection{Empirical Transparency}

We now focus on our empirical methodology to test the transparency of the
passes that are not covered by the transformations in~\Cref{sec:passes},
and present how we modified \textsc{RetDec}{}.

\paragraph*{Methodology}
In a nutshell, we empirically test the transparency of \textsc{RetDec}{} passes on
carefully crafted binaries that contain CT violations.
Technically, we select which passes are enabled in \textsc{RetDec}{},
feed it with the binaries,
and run CT-LLVM~\cite{cryptoeprint:2025/338} to check for CT violations
in the decompiled code.
CT-LLVM is a recently published CT analysis tool for LLVM-IR, which we chose
due to its precision and usability.
In principle, however, any source-level or LLVM-level CT analysis could be used.
We call a set of transformation passes \emph{empirically transparent} if the
CT analysis tool finds the same CT violations that were contained in the
original binaries.

\paragraph*{Empirical Test Cases}
Our set of empirical test cases checks for both reflection and preservation
of the \textsc{RetDec}{} passes not covered by \Cref{sec:passes}.
For reflection, we use our test cases from
\Cref{lst:clangover,lst:branch_coalescing,lst:empty_branch_coalescing,lst:dead_load,lst:self_store}.
These cover three common failures to reflection: Branch Coalescing,
If Conversion and Dead Load/Store Elimination.
For preservation, we construct one test case according to a non-preservation 
case reported by a recent work~\cite{ZhouDZ24}, which we present in~\cref{lst:invif_conv}.
Specifically, this test case checks whether \textsc{RetDec}{} reverts If Conversion 
by replacing a conditional move by a conditional branch, which 
introduces a false positive in CT analysis.

\begin{figure}
  \centering
  \begin{jasmincode}[outerpos=t,outerwidth=24ex]
    \jasminindent{0}mov    rax, \jasminconstant{0x3}\\
    \jasminindent{0}mov    rcx, \jasminconstant{0x5}\\
    \jasminindent{0}cmp    rdi, \jasminconstant{0x1}\\
    \jasminindent{0}cmovne rax, rcx\\
    \jasminindent{0}ret
  \end{jasmincode}\hspace{4em}\begin{jasmincode}[outerpos=t,outerwidth=24ex]
    \jasminindent{0}\phantom{T1: }cmp rdi, \jasminconstant{0x1}\\
    \jasminindent{0}\phantom{T1: }jne T1\\
    \jasminindent{0}\phantom{T1: }mov rax, \jasminconstant{0x3}\\
    \jasminindent{0}\phantom{T1: }ret\\
    \jasminindent{0}T1: mov rax, \jasminconstant{0x5}\\
    \jasminindent{0}\phantom{T1: }ret
  \end{jasmincode}
  \caption{Inverse If Conversion.}\label{lst:invif_conv}
\end{figure}

\paragraph*{Modifying \textsc{RetDec}{}}
We modify \textsc{RetDec}{} in two steps.
First, we build a minimal version of \textsc{RetDec}{} containing only
version by removing all passes that are not required for \textsc{RetDec}{} to function.
Seven necessary passes remain:
\textit{retdec-provider-init}, which is used to initialize the decompiler,
\textit{retdec-decoder}, an actual transforming pass, which translates assembly
ISA instructions to LLVM-IR instructions,
and four passes \textit{retdec-write-ll}, \textit{retdec-write-bc},
\textit{retdec-write-dsm}, and \textit{retdec-llvmir2hll}
that are essential for generating outputs.
Besides, we keep the \textit{retdec-param-return} pass in the minimal \textsc{RetDec}{}
because it reconstructs function arguments
so that CT analyses can be applied.
Minimal \textsc{RetDec}{} is empirically transparent with our test cases.

In order to improve minimal \textsc{RetDec}{}, we employ an incremental testing strategy.
We iteratively enable passes to test their empirical transparency.
When we identify a pass that violates CT transparency, we discard it.
We also test the passes that we have proven transparent,
because a ``buggy'' implementation could also invalidate our theoretical findings.
We do not find any pass that we have shown to be transparent fail any test case,
though. Meanwhile, we identify 10 passes that are not transparent.
Half of them, \textit{adce}, \textit{dse}, \textit{simplifycfg},
\textit{early-cse} and \textit{bdce},
have already been discussed in~\Cref{sec:retdec-overview}.
Among the other half, we find that using \textit{globalopt}, \textit{gvn} or
\textit{instcombine} removes
the empty branch in~\Cref{lst:empty_branch_coalescing}, while using
\textit{retdec-inst-opt-rda} or
\textit{reassociate} removes the dead load operation in~\Cref{lst:dead_load}.
As a result, we discard these ten nontransparent passes.

\subsection{Evaluation: Finding CT Violations with CT-RetDec}
We use our modified version of \textsc{RetDec}{} for the Decompile-then-Analyze approach:
We build a new binary-level CT analysis tool, called \textsc{CT-RetDec}{},
which combines the modified \textsc{RetDec}{} version with the analysis tool CT-LLVM.
In this section, we evaluate the performance of \textsc{CT-RetDec}{} on a benchmark of
binaries.

\paragraph*{Benchmark Set}
We construct the benchmark set with timing side-channel leakages reported by 
previous works~\cite{clangover,schneider2024breaking,DanielBR20,DBLP:conf/eurosp/SimonCA18}.
Specifically, we include the Clangover vulnerability~\cite{clangover},
five vulnerabilities in selection algorithms~\cite{DBLP:conf/eurosp/SimonCA18}, 
two vulnerabilities in sorting algorithms~\cite{DanielBR20}, 
the check scalar vulnerability in BearSSL~\cite{schneider2024breaking} and the 
CMOVNZ vulnerability in HACL*~\cite{schneider2024breaking}.
Our benchmark set takes advantage of the fact that we are employing the
Decompile-then-Analyze approach:
eight of the listed vulnerabilities are compiler-induced,
i.e., the original source code does not exhibit the vulnerability, but only
the compiled binary.

We create our benchmark by compiling the source code of the vulnerabilities
with Clang in different versions, under different optimization parameters,
and different target architectures.
Different from previous works, we compile the source code with newer compiler versions, namely Clang-10, Clang-14, Clang-18 and Clang-21,
two optimization levels (no optimization \texttt{-O0} and optimization for space \texttt{-Os}),
and two different platforms (\texttt{x86-32} and \texttt{x86-64}).
This results in 160 distinct binaries to analyze.
We then use \textsc{CT-RetDec}{} to detect which configurations of version,
optimization parameters, and target architecture induce the vulnerabilities.

\paragraph*{Experiment Results}
We evaluated the 160 binaries using \textsc{CT-RetDec}{}.
First, to obtain ground truth, we manually inspected the assembly code of all
binaries to establish whether they contain a CT violation.
We present the analysis results in~\Cref{tab:ctretdec}.
The prefix \texttt{ct\_} of the test cases indicates that the source code is CT,
while those prefixed \texttt{nct\_} have a CT violation in the source code.
We mark the result of \textsc{CT-RetDec}{}
with \textcolor{red}{\ding{55}}{} if it reports a CT violation,
and with \textcolor{green}{\ding{51}}{} if it reports no CT violation (the background shades can
be ignored for now).
We confirm that \textsc{CT-RetDec}{} correctly find all CT violations while not 
reporting any false positives.
Hence, the results of \textsc{CT-RetDec}{} match the established ground truth.

The results show that \textsc{CT-RetDec}{} correctly captures the difference among
different compiler versions. For example, it shows that the Clangover
vulnerability is not present in binaries compiled with Clang-10 and Clang-14,
while it is present in those compiled with Clang-18 and Clang-20.
Second, \textsc{CT-RetDec}{} correctly captures the difference between 32-bit and 64-bit
binaries. For example, it shows the constant-time selection algorithms
(e.g., \texttt{ct\_select\_v1}) are always constant-time for 64-bit binaries,
but not for 32-bit binaries when using \texttt{-Os} optimization.
This is likely due to architecture-specific optimizations performed by the compiler.
Third, \textsc{CT-RetDec}{} is precise: for all the CT binaries, \textsc{CT-RetDec}{} preserves
CT to the decompiled program. For example, it shows the vulnerabilities in 
\texttt{nct\_select} and \texttt{nct\_sort} disappear in 64-bit binaries obtained with 
\texttt{-Os} optimizations. These vulnerabilities disappear because compiler replaces 
the conditional branches with conditional moves, which are constant-time.
To summarize, we show that \textsc{CT-RetDec}{} is effective in analyzing binaries and
finding compiler-induced CT violations.

\paragraph*{Comparison with Unmodified RetDec+CT-LLVM}
We now show that \textsc{CT-RetDec}{} outperforms the combination of unmodified \textsc{RetDec}{} and CT-LLVM.
Specifically, we find that the nontransparent \textsc{RetDec}{} does miss most CT violations in the benchmark set.
We make the comparison by running the unmodified \textsc{RetDec}{} + CT-LLVM on the benchmark set.
We highlighted all CT violations missed by nontransparent \textsc{RetDec}{} in \Cref{tab:ctretdec} 
with a shaded green cell \colorbox{green!15}{\textcolor{red}{\ding{55}}}.
As we can see, it can only find leakages in the sorting algorithms, \texttt{ct\_check\_scalar} and \texttt{ct\_hacl\_cmovzn4} under some configurations.
Our comparison results highlight the importance of transparent decompilation for finding CT violations.

\begin{table}
\caption{\textsc{CT-RetDec}{} analysis results. 
CT-RetDEc correctly finds all CT violations (marked with \textcolor{red}{\ding{55}}) and does not 
report false positives for CT binaries (marked with \textcolor{green}{\ding{51}}).}
\label{tab:ctretdec}
\centering
\small
\footnotesize
\begin{tabular}{lcccccccccccccccc}
\toprule

& \multicolumn{4}{c}{\textbf{Clang-10.0.0}} &
\multicolumn{4}{c}{\textbf{Clang-14.0.6}} &
\multicolumn{4}{c}{\textbf{Clang-18.1.8}} &
\multicolumn{4}{c}{\textbf{Clang-21.1.1}} \\

\cmidrule(lr){2-5}\cmidrule(lr){6-9}\cmidrule(lr){10-13}\cmidrule(lr){14-17}

& \multicolumn{2}{c}{\textbf{64-bit}} & \multicolumn{2}{c}{\textbf{32-bit}} &
\multicolumn{2}{c}{\textbf{64-bit}} & \multicolumn{2}{c}{\textbf{32-bit}} &
\multicolumn{2}{c}{\textbf{64-bit}} & \multicolumn{2}{c}{\textbf{32-bit}} &
\multicolumn{2}{c}{\textbf{64-bit}} & \multicolumn{2}{c}{\textbf{32-bit}} \\
\cmidrule(lr){2-3}\cmidrule(lr){4-5}\cmidrule(lr){6-7}\cmidrule(lr){8-9}\cmidrule(lr){10-11}\cmidrule(lr){12-13}\cmidrule(lr){14-15}\cmidrule(lr){16-17}

Test case
& \texttt{O0}{} & \texttt{Os}{} & \texttt{O0}{} & \texttt{Os}{} &
\texttt{O0}{} & \texttt{Os}{} & \texttt{O0}{} & \texttt{Os}{} &
\texttt{O0}{} & \texttt{Os}{} & \texttt{O0}{} & \texttt{Os}{} &
\texttt{O0}{} & \texttt{Os}{} & \texttt{O0}{} & \texttt{Os}{} \\
\midrule
\texttt{ct\_clangover}  & \textcolor{green}{\ding{51}} & \textcolor{green}{\ding{51}} & \textcolor{green}{\ding{51}} & \textcolor{green}{\ding{51}} & \textcolor{green}{\ding{51}} & \textcolor{green}{\ding{51}} & \textcolor{green}{\ding{51}} & \textcolor{green}{\ding{51}} & \textcolor{green}{\ding{51}} & \cellcolor{green!15}\textcolor{red}{\ding{55}} & \textcolor{green}{\ding{51}} & \cellcolor{green!15}\textcolor{red}{\ding{55}} & \textcolor{green}{\ding{51}} & \cellcolor{green!15}\textcolor{red}{\ding{55}} & \textcolor{green}{\ding{51}} & \cellcolor{green!15}\textcolor{red}{\ding{55}} \\
\texttt{ct\_select\_v1}   & \textcolor{green}{\ding{51}} & \textcolor{green}{\ding{51}} & \textcolor{green}{\ding{51}} & \cellcolor{green!15}\cellcolor{green!15}\textcolor{red}{\ding{55}} & \textcolor{green}{\ding{51}} & \textcolor{green}{\ding{51}} & \textcolor{green}{\ding{51}} & \cellcolor{green!15}\textcolor{red}{\ding{55}} & \textcolor{green}{\ding{51}} & \textcolor{green}{\ding{51}} & \textcolor{green}{\ding{51}} & \cellcolor{green!15}\textcolor{red}{\ding{55}} & \textcolor{green}{\ding{51}} & \textcolor{green}{\ding{51}} & \textcolor{green}{\ding{51}} & \cellcolor{green!15}\textcolor{red}{\ding{55}} \\
\texttt{ct\_select\_v2}   & \textcolor{green}{\ding{51}} & \textcolor{green}{\ding{51}} & \textcolor{green}{\ding{51}} & \cellcolor{green!15}\textcolor{red}{\ding{55}} & \textcolor{green}{\ding{51}} & \textcolor{green}{\ding{51}} & \textcolor{green}{\ding{51}} & \cellcolor{green!15}\textcolor{red}{\ding{55}} & \textcolor{green}{\ding{51}} & \textcolor{green}{\ding{51}} & \textcolor{green}{\ding{51}} & \cellcolor{green!15}\textcolor{red}{\ding{55}} & \textcolor{green}{\ding{51}} & \textcolor{green}{\ding{51}} & \textcolor{green}{\ding{51}} & \cellcolor{green!15}\textcolor{red}{\ding{55}} \\
\texttt{ct\_select\_v3}   & \textcolor{green}{\ding{51}} & \textcolor{green}{\ding{51}} & \textcolor{green}{\ding{51}} & \cellcolor{green!15}\textcolor{red}{\ding{55}} & \textcolor{green}{\ding{51}} & \textcolor{green}{\ding{51}} & \textcolor{green}{\ding{51}} & \cellcolor{green!15}\textcolor{red}{\ding{55}} & \textcolor{green}{\ding{51}} & \textcolor{green}{\ding{51}} & \textcolor{green}{\ding{51}} & \cellcolor{green!15}\textcolor{red}{\ding{55}} & \textcolor{green}{\ding{51}} & \textcolor{green}{\ding{51}} & \textcolor{green}{\ding{51}} & \cellcolor{green!15}\textcolor{red}{\ding{55}} \\
\texttt{ct\_select\_v4}   & \textcolor{green}{\ding{51}} & \textcolor{green}{\ding{51}} & \textcolor{green}{\ding{51}} & \cellcolor{green!15}\textcolor{red}{\ding{55}} & \textcolor{green}{\ding{51}} & \textcolor{green}{\ding{51}} & \textcolor{green}{\ding{51}} & \cellcolor{green!15}\textcolor{red}{\ding{55}} & \textcolor{green}{\ding{51}} & \textcolor{green}{\ding{51}} & \textcolor{green}{\ding{51}} & \cellcolor{green!15}\textcolor{red}{\ding{55}} & \textcolor{green}{\ding{51}} & \textcolor{green}{\ding{51}} & \textcolor{green}{\ding{51}} & \cellcolor{green!15}\textcolor{red}{\ding{55}} \\
\texttt{ct\_sort}	 & \textcolor{green}{\ding{51}} & \textcolor{green}{\ding{51}} & \textcolor{green}{\ding{51}} & \cellcolor{green!15}\textcolor{red}{\ding{55}} & \textcolor{green}{\ding{51}} & \textcolor{green}{\ding{51}} & \textcolor{green}{\ding{51}} & \cellcolor{green!15}\textcolor{red}{\ding{55}} & \textcolor{green}{\ding{51}} & \textcolor{red}{\ding{55}} & \textcolor{green}{\ding{51}} & \textcolor{red}{\ding{55}} & \textcolor{green}{\ding{51}} & \textcolor{green}{\ding{51}} & \textcolor{green}{\ding{51}} & \cellcolor{green!15}\textcolor{red}{\ding{55}} \\
\texttt{ct\_check\_scalar} & \textcolor{green}{\ding{51}} & \textcolor{green}{\ding{51}} & \textcolor{green}{\ding{51}} & \textcolor{green}{\ding{51}} & \textcolor{green}{\ding{51}} & \cellcolor{green!15}\textcolor{red}{\ding{55}} & \textcolor{green}{\ding{51}} &  \cellcolor{green!15}\textcolor{red}{\ding{55}} & \textcolor{green}{\ding{51}} & \cellcolor{green!15}\textcolor{red}{\ding{55}} & \textcolor{green}{\ding{51}} & \cellcolor{green!15}\textcolor{red}{\ding{55}} & \textcolor{green}{\ding{51}} & \textcolor{red}{\ding{55}} & \textcolor{green}{\ding{51}} & \textcolor{red}{\ding{55}} \\
\texttt{ct\_hacl\_cmovzn4} & \textcolor{green}{\ding{51}} & \textcolor{green}{\ding{51}} & \textcolor{green}{\ding{51}} & \textcolor{green}{\ding{51}} & \textcolor{green}{\ding{51}}  & \textcolor{green}{\ding{51}} & \textcolor{green}{\ding{51}} & \cellcolor{green!15}\textcolor{red}{\ding{55}} & \textcolor{green}{\ding{51}} & \textcolor{green}{\ding{51}} & \textcolor{green}{\ding{51}} & \textcolor{red}{\ding{55}} & \textcolor{green}{\ding{51}} & \textcolor{green}{\ding{51}} & \textcolor{green}{\ding{51}} & \cellcolor{green!15}\textcolor{red}{\ding{55}} \\
\texttt{nct\_select} & \cellcolor{green!15}\textcolor{red}{\ding{55}} & \textcolor{green}{\ding{51}} & \cellcolor{green!15}\textcolor{red}{\ding{55}} & \cellcolor{green!15}\textcolor{red}{\ding{55}} & \cellcolor{green!15}\textcolor{red}{\ding{55}} & \textcolor{green}{\ding{51}} & \cellcolor{green!15}\textcolor{red}{\ding{55}} & \cellcolor{green!15}\textcolor{red}{\ding{55}} & \cellcolor{green!15}\textcolor{red}{\ding{55}} & \textcolor{green}{\ding{51}} & \cellcolor{green!15}\textcolor{red}{\ding{55}} & \cellcolor{green!15}\textcolor{red}{\ding{55}} & \cellcolor{green!15}\textcolor{red}{\ding{55}} & \textcolor{green}{\ding{51}} & \cellcolor{green!15}\textcolor{red}{\ding{55}} & \cellcolor{green!15}\textcolor{red}{\ding{55}} \\
\texttt{nct\_sort} & \textcolor{red}{\ding{55}} & \textcolor{green}{\ding{51}} & \textcolor{red}{\ding{55}} & \cellcolor{green!15}\textcolor{red}{\ding{55}} & \textcolor{red}{\ding{55}} & \textcolor{green}{\ding{51}} & \textcolor{red}{\ding{55}} & \cellcolor{green!15}\textcolor{red}{\ding{55}} & \textcolor{red}{\ding{55}} & \textcolor{green}{\ding{51}} & \textcolor{red}{\ding{55}} & \cellcolor{green!15}\textcolor{red}{\ding{55}} & \textcolor{red}{\ding{55}} & \textcolor{green}{\ding{51}} & \textcolor{red}{\ding{55}} & \cellcolor{green!15}\textcolor{red}{\ding{55}} \\

\bottomrule
\end{tabular}
\end{table}

\paragraph*{Performance Overhead of CT-RetDec}
By selectively disabling optimization passes for \textsc{CT-RetDec}{}, 
we can expect increases in both the size of the decompiled code and 
the cost of subsequent analysis.
On average, using \textsc{CT-RetDec}{} incurs a 73\% increase in analysis time 
and a 316\% increase in decompiled code size. \Cref{fig:overhead} 
summarizes these overheads across different binary types and compiler 
optimization levels.
Binaries compiled with \texttt{-O0} have the largest overhead, 
as they contain more unoptimized code and therefore offer greater 
opportunities for decompilation optimizations that are disabled in \textsc{CT-RetDec}{}.
In contrast, binaries compiled with \texttt{-Os} are already compact and 
leave less room for further optimization, resulting in lower performance 
overhead. Given the benefit of using a more transparent \textsc{RetDec}{}, 
this performance overhead is generally acceptable.

\begin{figure}
    \centering
    \includegraphics[width=0.9\linewidth]{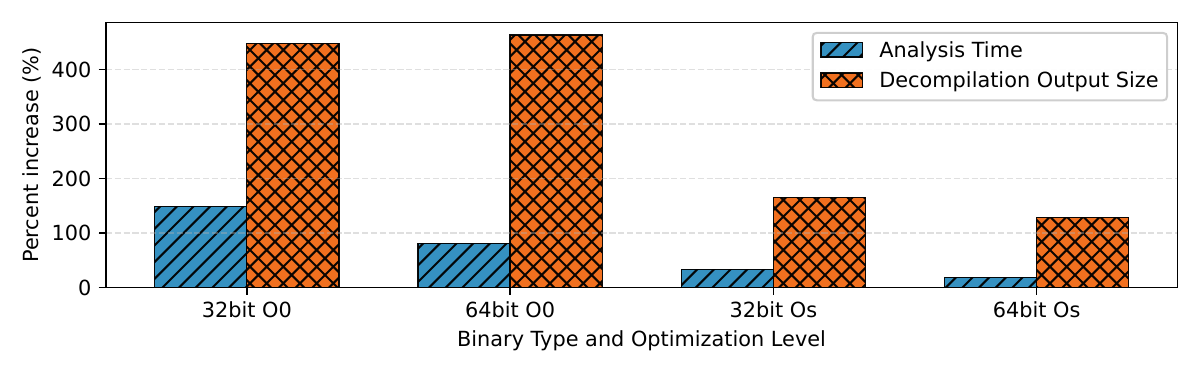}
    \caption{Performance Overhead of CT-RetDec compared to the combination of unmodified RetDec+CT-LLVM.}
    \label{fig:overhead}
\end{figure}

\section{Nontransparent Transformations in CT Analysis Tools}\label{sec:cttools}
So far, we have focused on the Decompile-then-Analyze approach,
ensuring that all program transformations applied by the decompiler
previous to calling the CT analysis tool are transparent.
However, there is a caveat:
Many CT analysis tools start their own analysis by also transforming programs
into an alternative representation previous to the actual analysis.
\Cref{tab:tool_comparison} provides a partial
list of CT tools, indicating for each of them their input language,
the language on which the analysis is carried out,
and which tool is used for the conversion.
CacheS~\cite{0011BL0ZW19} implicitly uses the Decompile-then-Analyze 
approach as it integrates the BINNAVI decompiler to decompile binaries to 
REIL IR before applying the CT analysis. Similarly, 
\textsc{BinSec}{}~\cite{DanielBR20} and 
\textsc{BinSec/Haunted}{}~\cite{DBLP:conf/ndss/DanielBR21} internally lift 
binary programs to DBA IR, which is an IR close to but richer than 
assembly language, before performing CT analysis.
To ensure the soundness of the listed CT analysis
tools, they must guarantee that their employed converters are transparent.
In this section, we examine the converters used by \textsc{CT-Verif}{} and
\textsc{BinSec}{}. 

\begin{table}
  \caption{Program transformations in CT analysis tools.}\label{tab:tool_comparison}
  \centering\small \begin{tabular}{llll}
  \toprule
    \textbf{\makecell{CT Analysis Tool}}
    &\textbf{\makecell{Input \\ Language}}
    & \textbf{\makecell{Output \\ Language}}
    & \textbf{\makecell{Converter}} \\
  \midrule
  FlowTracker~\cite{RodriguesPA16} & C & LLVM IR & LLVM \\
  CANAL~\cite{sung2018canal} & C & LLVM IR & LLVM \\
  \citet{disselkoen2020finding} & C & LLVM IR & LLVM \\
  \citet{barthe2014system} & C & MachIR & CompCert \\
  \citet{DBLP:conf/esorics/BlazyPT17} & C & C\#minor & CompCert \\
  ctverif~\cite{DBLP:conf/uss/AlmeidaBBDE16} & LLVM IR & Boogie IR & SMACK \\
  SideTrail~\cite{athanasiou2018sidetrail} & LLVM IR & Boogie IR & SMACK \\
  \citet{cai2024towards} & LLVM IR & Boogie IR & SMACK \\
  CacheS~\cite{0011BL0ZW19} & Binary & REIL & BINNAVI \\
  \textsc{BinSec}{}/Rel~\cite{DanielBR20} & Binary & DBA & \textsc{BinSec}{} \\
  \textsc{BinSec}{}/Haunted~\cite{DBLP:conf/ndss/DanielBR21} & Binary & DBA & \textsc{BinSec}{} \\
  \bottomrule
  \end{tabular}
\end{table}

\subsection{\textsc{CT-Verif}{}}
\textsc{CT-Verif}{}~\cite{DBLP:conf/uss/AlmeidaBBDE16} is a state-of-the-art CT
analysis tool for LLVM-IR programs. \textsc{CT-Verif}{} is based on a product
program construction, which is sound (non-CT programs are never deemed CT) and relatively complete (under certain conditions, CT programs never deemed non-CT).

\paragraph*{Conversion} \textsc{CT-Verif}{}~\cite{DBLP:conf/uss/AlmeidaBBDE16} internally uses
SMACK~\cite{DBLP:conf/cav/RakamaricE14} to transform an LLVM IR
program $P$ into a Boogie program $\ensuremath{{\llparenthesis\,{P}\,\rrparenthesis}}$ that emulates two
lock-step executions of $P$. The emulation additionally contains
assertions that check for differing leakage in the two executions to detect CT violations.

The SMACK-based transformation from LLVM IR to Boogie performs
standard optimizations at IR level, including Dead Load/Store
Elimination. Unfortunately, Dead Load/Store Elimination is not
transparent (\Cref{lst:dead_load,lst:self_store}).
This makes the results of making it possible to craft
an example of a non-CT \texttt{C} program that is accepted by \textsc{CT-Verif}{}, a soundness bug.

\paragraph*{Experiment}
To confirm our hypothesis, we construct such a program and apply \textsc{CT-Verif}{} 
to analyze it.
The program is shown on the left side of~\Cref{lst:ct-verif}.
It violates CT because the secret input \jazz{secret\_bit} is used as a memory address (Line~3).
However, the CT violation is removed by Dead Code Elimination in SMACK and no longer
exists in the code shown on the right side of~\Cref{lst:ct-verif} 
on which the safety analysis is performed.
Therefore, the transformation does not reflect CT\@, and \textsc{CT-Verif}{}
erroneously reports that the program is constant-time.

\begin{figure}
  \centering
  \begin{jasmincode}[outerpos=t,outerwidth=48ex]
    \jasminindent{0}\jasmincomment{// C Code}\\
    \jasminindent{0}\jasmintype{int} public\_array[\jasminconstant{2}] = \jasminopenbrace{}\jasminconstant{100}, \jasminconstant{101}\jasminclosebrace{};\\
    \jasminindent{0}\jasmintype{int} \jasmindname{func}(\jasmintype{int} secret\_bit) \jasminopenbrace{}\\
    \jasminindent{1}\jasmintype{int} dead\_load = array[secret\_bit];\\
    \jasminindent{1}\jasminkw{return} \jasminconstant{1};\\
    \jasminindent{0}\jasminclosebrace{}
  \end{jasmincode}\hspace{2em}\begin{jasmincode}[outerpos=t,outerwidth=30ex]
    \jasminindent{0}\jasmincomment{; LLVM IR}\\
    \jasminindent{0}\\
    \jasminindent{0}\jasminkw{define} i32 @func(i32) \jasminopenbrace{}\\
    \jasminindent{0}\\
    \jasminindent{1}ret i32 \jasminconstant{1};\\
    \jasminindent{0}\jasminclosebrace{}
  \end{jasmincode}
  \caption{Non-CT program accepted by \textsc{CT-Verif}{}.}\label{lst:ct-verif}
\end{figure}

\subsection{\textsc{BinSec}{}}\label{sec:unsoundness:binsec}
\textsc{BinSec}{}~\cite{DanielBR20} is a state-of-the-art binary-level CT analysis tool based on bounded verification with relational symbolic execution,
which is also sound and relatively complete.

\paragraph*{Conversion} \textsc{BinSec}{} lifts a binary program to an IR called
DBA (for Dynamic Bitvector Automata) before it performs relational
symbolic execution on the DBA program. When lifting a binary program
to DBA IR, \textsc{BinSec}{} converts conditional branches into two
\jazz{goto} statements---one points to the target address and the
other to the fall-through address. However, the lifting also performs
Empty Branch Coalescing (\Cref{lst:empty_branch_coalescing}),
which we have shown to be nontransparent,
a soundness bug in \textsc{BinSec}{}.

\paragraph*{Experiment} We confirm our hypothesis that \textsc{BinSec}{} can
accept non-CT programs by constructing the program shown on the left
side of~\Cref{lst:binsec_empty_branch}. The program compares a secret
input argument \jazz{rdi} to one.  If the two are equal, it
branches to address \jazz{T1}.  Otherwise, it executes the
fall-through code, which is also at \jazz{T1}. This binary
violates the CT policy because it branches on a secret.

\textsc{BinSec}{} converts the program to the DBA IR shown on the right side
of~\Cref{lst:binsec_empty_branch}.  Critically, the DBA program contains an
unconditional \jazz{goto} statement but no longer branches on a secret,
which means that the CT violation is gone.
\textsc{BinSec}{} erroneously reports that the input program is constant-time, a soundness bug.

\begin{figure}
  \centering
  \begin{jasmincode}[outerpos=t,outerwidth=20ex]
    \jasminindent{0}\jasmincomment{; ASM}\\
    \jasminindent{0}\phantom{T1: }cmp rdi, \jasminconstant{1}\\
    \jasminindent{0}\phantom{T1: }je T1\\
    \jasminindent{0}T1: mov rax, \jasminconstant{1}
  \end{jasmincode}\hspace{2em}\begin{jasmincode}[outerpos=t,outerwidth=20ex]
    \jasminindent{0}\jasmincomment{; DBA IR}\\
    \jasminindent{0}\phantom{T1: }cmp rdi, \jasminconstant{1}\\
    \jasminindent{0}\phantom{T1: }goto T1\\
    \jasminindent{0}T1: mov rax, \jasminconstant{1}
  \end{jasmincode}
  \caption{Non-CT program accepted by \textsc{BinSec}{}.}\label{lst:binsec_empty_branch}
\end{figure}

\paragraph*{Discussion} Our finding is unlikely to invalidate the results
of \textsc{BinSec}{} on real-world libraries, as conditionals with two empty
branches are unlikely to exist in the real-world. However, the gap
between theory and practice caused by nontransparent transformations
could be exploited by a malicious agent that could intentionally
introduce such examples into a popular library.

A separate question is whether the synthetic pattern from
\Cref{lst:binsec_empty_branch} could be exploited in practice.  The
answer is that it can, using the BunnyHop
technique~\cite{ZhangTOCGY23}. Specifically, an adversary can monitor
the branch predictor state, which is updated after each branch
execution, to infer whether a secret-dependent branch was taken.  
We confirm the feasibility of such an attack by modifying the artifact 
from~\cite{ZhangTOCGY23} to use our synthetic example as the victim.
The details of the attack and the experimental results are provided 
in \ifAppendix{\cref{appendix:binsec-attack}}\else{Appendix {C}}\fi{}.

\section{Transparency for Speculative Constant-Time}\label{sec:speculation}
Constant-time programs remain vulnerable to Spectre
attacks~\cite{DBLP:conf/sp/KocherHFGGHHLM019}, where an attacker
manipulates speculative execution mechanisms (e.g., triggering branch
mispredictions) to recover secrets from speculative leakage. A common
way to protect programs against Spectre attacks is to follow the
speculative constant-time (SCT)
policy~\cite{DBLP:conf/pldi/CauligiDGTSRB20,highassuranceSpectre},
which extends the constant-time policy to the speculative
setting---the literature presents a few variants of this policy,
e.g.,~\cite{DBLP:conf/sp/GuarnieriKMRS20}.
There exists various
tools~\cite{DBLP:conf/sp/GuarnieriKMRS20,DBLP:conf/pldi/CauligiDGTSRB20,DBLP:journals/pacmpl/VassenaDGCKJTS21,DBLP:conf/ndss/DanielBR21,CauligiDMBS22,DBLP:conf/sp/ShivakumarBGLOPST23} to check SCT---or its variants.  This raises the question
of whether we can lift our results to verify a transformation is SCT
transparent.

This section discusses how to extend our results to SCT transparency.
We first show how to adapt our definitions of preservation,
reflection, and transparency to SCT\@.
We then lift simulation diagrams and their soundness results to SCT
(leaving full details to \ifAppendix{\cref{appendix:sct-bisim}}\else{Appendix {D}}\fi{}).
Finally, we summarize our results analyzing several
standard transformations for SCT transparency.

\subsection{Speculative Constant-Time Transparency}\label{sec:sct:transparency}

Speculative semantics such as \cite{highassuranceSpectre,DBLP:conf/sp/ShivakumarBGLOPST23}
instrument the semantics of languages with \emph{directives}
\ensuremath{{\mathcal{D}}}{}, to capture the effect of branch prediction,
and the fact that the attacker can poison said predictor.  We follow
this direction and adapt our language model from \Cref{sec:setting} to
a speculative semantics with branch prediction.  Each step in the
semantics is guided by a directive $d \in \ensuremath{{\mathcal{D}}}$;
\sem{s}{d}{o}{s'} means that \(s\) steps under
directive~\(d\) to \(s'\) and produces the observation \(o\).
We consider two directives: \ensuremath{{\mathsf{step}}}{} and \ensuremath{{\mathsf{force}}}{}, which model
correct and incorrect predictions, respectively.  We use the \ensuremath{{\mathsf{step}}}{}
directive also for steps that require no prediction, such as
assignments.  We present the semantics in \cref{fig:sct-sem}.

\begin{figure}
  \small
  \begin{mathpar}
    \inferrule[Step]{
      s \step{}{o} s'
    }{
      s \step{\ensuremath{{\mathsf{step}}}}{o} s'
    }

    \inferrule[If]{
      b = \ensuremath{{\llbracket\, e \,\rrbracket_{\rho}}}
    }{
   \ensuremath{{\langle \ensuremath{{\mathcolor{mauve}{\mathtt{if}}~{e}~\mathcolor{mauve}{\mathtt{then}}~\ensuremath{{\mathit{c}}}_{\mathsf{t\kern-0.9ptt}}~\mathcolor{mauve}{\mathtt{else}}~\ensuremath{{\mathit{c}}}_{\mathsf{f\kern-1.1ptf}}}}, \rho, \mu \rangle}}
      \step{\ensuremath{{\mathsf{force}}}}{\ensuremath{{\mathsf{br}\; b}}}
      \ensuremath{{\langle \ensuremath{{\mathit{c}}}_{\neg b}, \rho, \mu \rangle}}
    }

    \inferrule[While]{
      \ensuremath{{\llbracket\, e \,\rrbracket_{\rho}}} = b\\
      c_\mathsf{f\kern-1.1ptf} = \ensuremath{{\mathcolor{mauve}{\mathtt{skip}}}}\\\\
      c_\mathsf{t\kern-0.9ptt} = c ; \ensuremath{{\mathcolor{mauve}{\mathtt{while}}~{e}~\mathcolor{mauve}{\mathtt{do}}~{c}}}
    }{
      \ensuremath{{\langle \ensuremath{{\mathcolor{mauve}{\mathtt{while}}~{e}~\mathcolor{mauve}{\mathtt{do}}~{c}}}, \rho, \mu \rangle}}
      \step{\ensuremath{{\mathsf{force}}}}{\ensuremath{{\mathsf{br}\; b}}}
      \ensuremath{{\langle c_{\neg b}, \rho, \mu \rangle}}
    }
  \end{mathpar}
  \caption{Speculative Semantics.}\label{fig:sct-sem}
\end{figure}

\Cref{def:beh,def:ct,def:ct:transparency} readily generalize to the
speculative setting.
Specifically, the speculative behavior of a program collects all
executions possible under any sequence of directives (this formulation
is inspired by \cite{WallM25}).

\begin{definition}\label{def:sbeh}
  The speculative behavior of \(P\) on an input~\(i\) is
  $
    \ensuremath{{\mathit{SBeh}({P}, {i})}} \triangleq
    \{
      \ensuremath{{(\ensuremath{{\boldsymbol{d}}}, \ensuremath{{\boldsymbol{o}}})}}
    \mid
      \sem*{P(i)}{\ensuremath{{\boldsymbol{d}}}}{\ensuremath{{\boldsymbol{o}}}}{s}
    \}
  $.
\end{definition}

A program is speculative constant-time w.r.t.\ an indistinguishability
relation \phi{}, denoted \ensuremath{{\phi\text{-SCT}}}{}, when for all sequences of directives,
related inputs yield equal observations.

\begin{definition}[\ensuremath{{\phi\text{-SCT}}}]\label{def:sct}
\(P\) is \ensuremath{{\phi\text{-SCT}}}{} if for all inputs \(i_1\)~and~\(i_2\),
  \ensuremath{{i_1 \mathrel{\mathphi} i_2 \implies \ensuremath{{\mathit{SBeh}({P}, {i_1})}} = \ensuremath{{\mathit{SBeh}({P}, {i_2})}}}}.
\end{definition}

Reflection, preservation, and transparency are defined
analogously to \Cref{def:ct:transparency}.

\begin{definition}[Reflection, Preservation, and Transparency for SCT]\label{def:sct:transparency}
  We say that a program transformation \ensuremath{{\ensuremath{{\llparenthesis\,{\cdot}\,\rrparenthesis}} : \ensuremath{{\mathcal{L}}}_s \to
    \ensuremath{{\mathcal{L}}}_t}} between an input language $\ensuremath{{\mathcal{L}}}_s$ and an output language $\ensuremath{{\mathcal{L}}}_t$
  \begin{itemize}
      \item Reflects \ensuremath{{\text{SCT}}}{} if for each \(P\) and $\phi$,
    \ensuremath{{\llparenthesis\,{P}\,\rrparenthesis}} is \ensuremath{{\phi\text{-SCT}}}{} implies \(P\) is \ensuremath{{\phi\text{-SCT}}}{};
\item Preserves \ensuremath{{\text{SCT}}}{} if for each $P$ and $\phi$,
    \(P\) is \ensuremath{{\phi\text{-SCT}}}{} implies \ensuremath{{\llparenthesis\,{P}\,\rrparenthesis}} is \ensuremath{{\phi\text{-SCT}}}{}; and
\item Is \ensuremath{{\text{SCT}}}{} transparent if it both reflects and preserves \ensuremath{{\text{SCT}}}{}.
  \end{itemize}
\end{definition}

\subsection{Proof Techniques}
We can extend the CT simulations from \cref{def:ct:diagram} to the
speculative setting, as well as \cref{thm:ct:soundness},
by accounting for the directives in the speculative semantics.
The key idea is to obtain a bisimulation
by linking the directives of the input program and the output program.
For that, we employ \emph{directive transformers} (introduced in~\cite{jasminPOPL25}),
which function analogously to observation transformers from \Cref{sec:proof-techniques}.
They are functions \ensuremath{{\ensuremath{{T_{\ensuremath{{\mathcal{D}}}}}} : \ensuremath{{\mathcal{PC}}} \to (\ensuremath{{\mathcal{D}}}_t \to \ensuremath{{\mathcal{D}}}_s)}}
that translate the directives for the output program into directives
for the input program and vice versa (\ensuremath{{T_{\ensuremath{{\mathcal{D}}}}^{\ensuremath{{\mathsf{pc}}}}}} is bijective for all $\ensuremath{{\mathsf{pc}}} \in \ensuremath{{\mathcal{PC}}}$).
Given two states in simulation \ensuremath{{s \sim t}} and a step \sem{t}{d}{o}{t'},
the directive transformer computes a directive $\ensuremath{{\ensuremath{{T_{\ensuremath{{\mathcal{D}}}}}}({\ensuremath{{\ensuremath{{\mathsf{pc}}}({s})}}}, {d})}}$
to determine the simulating step from \(s\).
\ifAppendix{\cref{appendix:sct:techniques}}\else{Appendix {D.1}}\fi{} includes the detailed definitions of
these simulations.

\subsection{Summary of Results}

\begin{table}
  \caption{Summary of decompilation passes analyzed for SCT-transparency.
    A~\textcolor{green}{\ding{51}}{} means that we have proven the pass SCT-transparent using an SCT-simulation,
    a~\textcolor{red}{\Lightning}{} that we found a counterexample to \ensuremath{{\text{SCT}}}{} reflection,
    and~\(\bullet\){} means the pass is \ensuremath{{\text{SCT}}}{} preserving, but we omit
    the proof.}\label{tab:passes-sct}
  \small \begin{tabular}{l c c}
    \toprule
    Pass & \multicolumn{1}{l}{SCT-Reflection} & \multicolumn{1}{l}{SCT-Preservation}\\
    \midrule
Constant Folding & \textcolor{green}{\ding{51}}{} & \textcolor{green}{\ding{51}}{}\\
    Structural Analysis & \textcolor{green}{\ding{51}}{} & \textcolor{green}{\ding{51}}{}\\
Dead Branch Elimination & \textcolor{red}{\Lightning}{} & \(\bullet\){}\\
    Dead Assignment Elimination & \textcolor{red}{\Lightning}{} & \(\bullet\){}\\
    Unspilling & \textcolor{red}{\Lightning}{} & \(\bullet\){}\\
\bottomrule
  \end{tabular}
\end{table}

We studied the SCT transparency of five standard transformations 
and summarize the results in \cref{tab:passes-sct}.
\ifAppendix{\cref{appendix:sct:passes}}\else{Appendix {D.2}}\fi{} applies our simulations to Structural
Analysis (as described in \cref{sec:passes}) and Constant Folding (a
special case of Expression Substitution from \cref{sec:passes}) to prove
them SCT transparent.
On the other hand, \ifAppendix{\cref{appendix:sct:counterexamples}}\else{Appendix {D.3}}\fi{} shows that the
other transformations are not SCT transparent, as we summarize next.
Dead Branch Elimination is not SCT transparent 
because the body of a dead branch could be executed speculatively and still leak a secret.
Similarly, in Dead Assignment Elimination, an assignment could be
considered as dead if it affects only instructions within a dead branch.
However, during speculative execution, that branch could be executed
speculatively, leading to leakage of a secret propagated by the dead
assignment.
Unspilling is not SCT transparent because data stored on the 
stack could be speculatively accessed via out-of-bounds access and leaked 
later. Unspilling would move such data from the stack to registers, 
which makes it safe from speculative out-of-bounds accesses.

\section{Related Work}\label{sec:related}
Our work draws on the secure compilation literature to address the
question of whether decompiler transformations undermine the soundness
or accuracy of side-channel analysis.
Accordingly, we divide this section into three parts: first, we overview
side-channel analysis and decompiler transformations;
then, we review related work in the field of secure compilation;
finally, we briefly review related work on decompilation and discuss its
relevance.

\paragraph*{Side-Channel Analysis}
\citet{BarbosaBBBCLP21},
\citet{GeimerVRDBM23}, and
\citet{DBLP:conf/sp/JancarFBSSBFA22} survey a large number
of tools to detect side-channel vulnerabilities.
Several of these tools build on robust theoretical foundations, such as
product programs~\cite{DBLP:conf/uss/AlmeidaBBDE16},
type systems~\cite{10.1145/3133956.3134078},
SMT solvers~\cite{DBLP:conf/uss/BondHKLLPRST17},
abstract interpretation~\cite{DBLP:conf/esorics/BlazyPT17}, and
symbolic execution~\cite{DanielBR20,disselkoen2020finding}.
All of these, and the majority of the ones in the survey, operate on
source code or an IR, which means that applying them to assembly
programs requires decompiler transformations.
\Cref{sec:cttools} overviews different tools and discusses
\textsc{CT-Verif}{}~\cite{DBLP:conf/uss/AlmeidaBBDE16} and
\textsc{BinSec}{}~\cite{DanielBR20} in depth.

\paragraph*{Secure Compilation}
It is well known that secure source programs can be transformed into
insecure binary programs, leading to a dangerous
\emph{compiler security gap}~\cite{DBLP:conf/sp/DSilvaPS15,DBLP:conf/eurosp/SimonCA18,DBLP:conf/uss/XuLDDLWPM23}.
The field of secure compilation aims to close this gap by integrating
security considerations into compiler
development~\cite{DBLP:conf/csfw/AbateB0HPT19,DBLP:journals/toplas/AbateBCDGHPTT21}.
Consequently, we draw on this body of work to recast existing
techniques for preservation of side-channel security into a rigorous
approach for CT transparency.

Preservation of CT was first considered
in~\cite{DBLP:conf/csfw/BartheGL18}. The same work introduces
CT simulations as the primary tool for proving preservation. Later
work~\cite{DBLP:conf/ccs/BartheGLP21} uses the equivalent but simpler
technique of leakage transformers. Both the Jasmin compiler and (a
mild modification of) the CompCert compiler are shown to preserve
constant-time in~\cite{DBLP:journals/pacmpl/BartheBGHLPT20}
and~\cite{DBLP:conf/ccs/BartheGLP21}, respectively. Our notion of
simulation uses a variant of CT simulation that adds the
\ensuremath{{\mathcal{PC}}}{}-injectivity condition on leakage transformers.

\paragraph*{Decompilation in Program Analysis}
A number of prior works study the correctness of
decompilers~\cite{BrumleyLSW13,schulte2018evolving,BurkPKV22,DBLP:conf/pldi/VerbeekBFR22,DBLP:conf/uss/ZouKWGBT24}.
However, they focus on the input/output behavior of programs and
yield no guarantees about the security impact of decompilers.
Previous work explores and, to a large extent, confirms the
potential of source analysis of decompiled programs~\cite{DBLP:conf/sp/LiuYWB22,DBLP:conf/asplos/ZhouYHCZ24,DBLP:conf/asiaccs/MantovaniCSB22}.
Nevertheless, they neither formalize nor provide proof techniques for
secure decompilation.

\citet{DBLP:conf/sp/LiuYWB22} explore the impact of lifters
on pointer analysis, and discriminability analysis in the context of
LLVM lifters. Their approach is empirical, comparing
two LLVM IR representations of a source program. The first is obtained
by compiling from source to LLVM IR, and the second
by compiling the program to a binary and subsequently lifting
the binary to LLVM IR\@.

\citet{DBLP:conf/asiaccs/MantovaniCSB22} perform a
complementary exploration. They consider buffer overflows, integer
overflows, null pointer dereference, double free or use after free, and
division by zero. In addition, they provide several
recommendations for decompiler writers. One of their recommendations
resonates strongly with the findings of our work: departing from the
human-centric view of decompilers in favor of improving the soundness
and completeness of decompilers.

\section{Future Work}\label{sec:future}
In this section, we discuss several possible directions for future work.

\paragraph*{Mitigation Strategies}
In this paper, we mitigate the transparency issue in a real-world 
decompiler by finding and disabling nontransparent transformations.
Besides this approach, it is possible to add marks to force the 
decompiler to not optimize certain parts of the code.
This would be similar to how cryptographic 
developers mark secrets with the \emph{volatile} keyword to hint to
the compiler not to optimize operations that process them~\cite{DBLP:conf/eurosp/SimonCA18, intelmitigation}.
The benefit of this approach is that decompiler optimizations can still 
be run on unmarked code, which reduces the size and 
improves the readability of decompiled code.
The downside is that the users of the decompiler may need to 
manually identify cases that require the marks, which could be time-consuming
and error-prone. Furthermore, ensuring that decompilers support such
marks robustly and reliably is far from trivial.

Another possible mitigation strategy is to extend the
IR/high-level-language with a \emph{leak} primitive,
a no-op instruction whose only purpose is to leak a specific value.
It allows the decompiler to optimize the code as usual while inserting  
the \emph{leak} instruction to indicate that a value could be leaked 
at that point.
Again, implementing such a mitigation strategy in decompilers is nontrivial.
We leave the investigation of these mitigation strategies to future work.

\paragraph*{Generalization}
Since we primarily focus on the \textsc{RetDec}{} decompiler in this paper, 
one natural future research direction is to generalize our findings 
to other popular decompilers.
Specifically, it would be interesting to locate their 
nontransparent transformations and develop patches or extensions to 
control or disable them. The investigation could also be 
extended to the semantic gap between binaries and decompiled code,
and differences in undefined behavior, in order to identify their impact on transparency.
Besides finding and mitigating nontransparent 
transformations in decompilers, it would also be valuable to build a 
more comprehensive benchmark set to empirically evaluate 
the transparency of decompilers and the effectiveness of the mitigations.

\paragraph*{Transparency for More Transformations}
Apart from decompilation, an interesting future direction would be 
to study the transparency of program transformations for hardware-software
contracts~\cite{DBLP:conf/sp/GuarnieriKRV21}, a general framework that
encompasses different leakage and execution models.  More broadly, it
would be interesting to study transparency of program transformations
for other classes of security-related properties, e.g., memory
properties. Finally, it would be interesting to develop techniques to
prevent decompilers to eliminate potentially harmful code---similar to
techniques for preventing compilers to introduce harmful code.

\section{Conclusion}\label{sec:conclusion}
In this paper, we initiate the study of CT transparency.  We prove
that state-of-the-art decompilers remove CT violations and correct
this issue with the tool \textsc{CT-RetDec}{}, which transparently decompiles
our benchmark set of 160 binaries.  We
provide rigorous proof methods to assert the transparency
of transformations, and demonstrate them on common transformations.
Our work also emphasizes the need for developers of constant-time
verification tools to ensure transparency \emph{within} their
tools---specifically in the transformations that convert the input
program into the representation used for analysis. Our recommendations
for CT tool developers are: first, extensively test for transparency
of the initial transformations; second, carefully document the initial
transformations; third, if possible, expose these transformations for
scrutiny, ideally by clearly separating the transformation phases and
the analysis phases in the source code of the tool.

\begin{acks}
We thank the reviewers for their time and insightful feedback.
This research was supported
by the\grantsponsor{
  DFG}{
  \textit{Deutsche Forschungsgemeinschaft} (DFG, German research Foundation)}{}
as part of the Excellence Strategy of the German Federal and State Governments
-- \grantnum{DFG}{EXC 2092 CASA - 390781972}.
\end{acks}

\bibliographystyle{ACM-Reference-Format}

\begin{thebibliography}{74}



\ifx \showCODEN    \undefined \def \showCODEN     #1{\unskip}     \fi
\ifx \showISBNx    \undefined \def \showISBNx     #1{\unskip}     \fi
\ifx \showISBNxiii \undefined \def \showISBNxiii  #1{\unskip}     \fi
\ifx \showISSN     \undefined \def \showISSN      #1{\unskip}     \fi
\ifx \showLCCN     \undefined \def \showLCCN      #1{\unskip}     \fi
\ifx \shownote     \undefined \def \shownote      #1{#1}          \fi
\ifx \showarticletitle \undefined \def \showarticletitle #1{#1}   \fi
\ifx \showURL      \undefined \def \showURL       {\relax}        \fi
\providecommand\bibfield[2]{#2}
\providecommand\bibinfo[2]{#2}
\providecommand\natexlab[1]{#1}
\providecommand\showeprint[2][]{arXiv:#2}

\bibitem[35(2016)]{binaryninja}
\bibfield{author}{\bibinfo{person}{Vector 35}.}
  \bibinfo{year}{2016}\natexlab{}.
\newblock \bibinfo{title}{Binary Ninja}.
\newblock
\urldef\tempurl \url{https://binary.ninja/}
\showURL{\tempurl}


\bibitem[35(2025)]{dogbolt}
\bibfield{author}{\bibinfo{person}{Vector 35}.}
  \bibinfo{year}{2025}\natexlab{}.
\newblock \bibinfo{booktitle}{\emph{Dogbolt}}.
\newblock
\urldef\tempurl \url{https://dogbolt.org/}
\showURL{\tempurl}


\bibitem[Abate et~al\mbox{.}(2021)]{DBLP:journals/toplas/AbateBCDGHPTT21}
\bibfield{author}{\bibinfo{person}{Carmine Abate}, \bibinfo{person}{Roberto
  Blanco}, \bibinfo{person}{{\c{S}}tefan Ciob{\^{a}}c{\u{a}}},
  \bibinfo{person}{Adrien Durier}, \bibinfo{person}{Deepak Garg},
  \bibinfo{person}{Catalin Hri\cb{t}cu}, \bibinfo{person}{Marco Patrignani},
  \bibinfo{person}{{\'{E}}ric Tanter}, {and}
  \bibinfo{person}{J{\'{e}}r{\'{e}}my Thibault}.}
  \bibinfo{year}{2021}\natexlab{}.
\newblock \showarticletitle{An Extended Account of Trace-relating Compiler
  Correctness and Secure Compilation}.
\newblock \bibinfo{journal}{\emph{{ACM} Trans. Program. Lang. Syst.}}
  \bibinfo{volume}{43}, \bibinfo{number}{4} (\bibinfo{year}{2021}),
  \bibinfo{pages}{14:1--14:48}.
\newblock
\href{https://doi.org/10.1145/3460860}{doi:\nolinkurl{10.1145/3460860}}


\bibitem[Abate et~al\mbox{.}(2019)]{DBLP:conf/csfw/AbateB0HPT19}
\bibfield{author}{\bibinfo{person}{Carmine Abate}, \bibinfo{person}{Roberto
  Blanco}, \bibinfo{person}{Deepak Garg}, \bibinfo{person}{Catalin
  Hri\cb{t}cu}, \bibinfo{person}{Marco Patrignani}, {and}
  \bibinfo{person}{J{\'{e}}r{\'{e}}my Thibault}.}
  \bibinfo{year}{2019}\natexlab{}.
\newblock \showarticletitle{Journey Beyond Full Abstraction: Exploring Robust
  Property Preservation for Secure Compilation}. In
  \bibinfo{booktitle}{\emph{32nd {IEEE} Computer Security Foundations
  Symposium, {CSF} 2019, Hoboken, NJ, USA, June 25-28, 2019}}.
  \bibinfo{publisher}{{IEEE}}, \bibinfo{pages}{256--271}.
\newblock
\href{https://doi.org/10.1109/CSF.2019.00025}{doi:\nolinkurl{10.1109/CSF.2019.00025}}


\bibitem[Almeida et~al\mbox{.}(2017)]{10.1145/3133956.3134078}
\bibfield{author}{\bibinfo{person}{Jos\'{e}~Bacelar Almeida},
  \bibinfo{person}{Manuel Barbosa}, \bibinfo{person}{Gilles Barthe},
  \bibinfo{person}{Arthur Blot}, \bibinfo{person}{Benjamin Gr\'{e}goire},
  \bibinfo{person}{Vincent Laporte}, \bibinfo{person}{Tiago Oliveira},
  \bibinfo{person}{Hugo Pacheco}, \bibinfo{person}{Benedikt Schmidt}, {and}
  \bibinfo{person}{Pierre-Yves Strub}.} \bibinfo{year}{2017}\natexlab{}.
\newblock \showarticletitle{Jasmin: High-Assurance and High-Speed
  Cryptography}. In \bibinfo{booktitle}{\emph{Proceedings of the 2017 ACM
  SIGSAC Conference on Computer and Communications Security}} (Dallas, Texas,
  USA) \emph{(\bibinfo{series}{CCS '17})}. \bibinfo{publisher}{Association for
  Computing Machinery}, \bibinfo{address}{New York, NY, USA},
  \bibinfo{pages}{1807–1823}.
\newblock
\showISBNx{9781450349468}
\href{https://doi.org/10.1145/3133956.3134078}{doi:\nolinkurl{10.1145/3133956.3134078}}


\bibitem[Almeida et~al\mbox{.}(2016)]{DBLP:conf/uss/AlmeidaBBDE16}
\bibfield{author}{\bibinfo{person}{Jos{\'{e}}~Bacelar Almeida},
  \bibinfo{person}{Manuel Barbosa}, \bibinfo{person}{Gilles Barthe},
  \bibinfo{person}{Fran{\c{c}}ois Dupressoir}, {and} \bibinfo{person}{Michael
  Emmi}.} \bibinfo{year}{2016}\natexlab{}.
\newblock \showarticletitle{Verifying Constant-Time Implementations}. In
  \bibinfo{booktitle}{\emph{{USENIX} Security}}. \bibinfo{pages}{53--70}.
\newblock
\urldef\tempurl \url{https://www.usenix.org/conference/usenixsecurity16/technical-sessions/presentation/almeida}
\showURL{\tempurl}


\bibitem[Andriesse et~al\mbox{.}(2016)]{DBLP:conf/uss/AndriesseCVSB16}
\bibfield{author}{\bibinfo{person}{Dennis Andriesse}, \bibinfo{person}{Xi
  Chen}, \bibinfo{person}{Victor van~der Veen}, \bibinfo{person}{Asia
  Slowinska}, {and} \bibinfo{person}{Herbert Bos}.}
  \bibinfo{year}{2016}\natexlab{}.
\newblock \showarticletitle{An In-Depth Analysis of Disassembly on Full-Scale
  x86/x64 Binaries}. In \bibinfo{booktitle}{\emph{25th {USENIX} Security
  Symposium, {USENIX} Security 16, Austin, TX, USA, August 10-12, 2016}},
  \bibfield{editor}{\bibinfo{person}{Thorsten Holz} {and}
  \bibinfo{person}{Stefan Savage}} (Eds.). \bibinfo{publisher}{{USENIX}
  Association}, \bibinfo{pages}{583--600}.
\newblock
\urldef\tempurl \url{https://www.usenix.org/conference/usenixsecurity16/technical-sessions/presentation/andriesse}
\showURL{\tempurl}


\bibitem[Arranz-Olmos et~al\mbox{.}(2026)]{secdec-artifact}
\bibfield{author}{\bibinfo{person}{Santiago Arranz-Olmos},
  \bibinfo{person}{Gilles Barthe}, \bibinfo{person}{Lionel Blatter},
  \bibinfo{person}{Youcef Bouzid}, \bibinfo{person}{Sören van~der Wall}, {and}
  \bibinfo{person}{Zhiyuan Zhang}.} \bibinfo{year}{2026}\natexlab{}.
\newblock \bibinfo{booktitle}{\emph{Artifact for Paper Decompiling for
  Constant-Time Analysis}}.
\newblock
\href{https://doi.org/10.5281/zenodo.18749373}{doi:\nolinkurl{10.5281/zenodo.18749373}}


\bibitem[Arranz-Olmos et~al\mbox{.}(2025)]{jasminPOPL25}
\bibfield{author}{\bibinfo{person}{Santiago Arranz-Olmos},
  \bibinfo{person}{Gilles Barthe}, \bibinfo{person}{Lionel Blatter},
  \bibinfo{person}{Benjamin Gr\'{e}goire}, {and} \bibinfo{person}{Vincent
  Laporte}.} \bibinfo{year}{2025}\natexlab{}.
\newblock \showarticletitle{Preservation of Speculative Constant-Time by
  Compilation}.
\newblock \bibinfo{journal}{\emph{Proc. ACM Program. Lang.}}
  \bibinfo{volume}{9}, \bibinfo{number}{POPL}, Article \bibinfo{articleno}{44}
  (\bibinfo{date}{Jan.} \bibinfo{year}{2025}), \bibinfo{numpages}{33}~pages.
\newblock
\href{https://doi.org/10.1145/3704880}{doi:\nolinkurl{10.1145/3704880}}


\bibitem[Athanasiou et~al\mbox{.}(2018)]{athanasiou2018sidetrail}
\bibfield{author}{\bibinfo{person}{Konstantinos Athanasiou},
  \bibinfo{person}{Byron Cook}, \bibinfo{person}{Michael Emmi},
  \bibinfo{person}{Colm MacC{\'a}rthaigh}, \bibinfo{person}{Daniel
  Schwartz-Narbonne}, {and} \bibinfo{person}{Serdar Tasiran}.}
  \bibinfo{year}{2018}\natexlab{}.
\newblock \showarticletitle{Sidetrail: Verifying time-balancing of
  cryptosystems}. In \bibinfo{booktitle}{\emph{VSTTE}}.
  \bibinfo{pages}{215--228}.
\newblock


\bibitem[Barbosa et~al\mbox{.}(2021)]{BarbosaBBBCLP21}
\bibfield{author}{\bibinfo{person}{Manuel Barbosa}, \bibinfo{person}{Gilles
  Barthe}, \bibinfo{person}{Karthik Bhargavan}, \bibinfo{person}{Bruno
  Blanchet}, \bibinfo{person}{Cas Cremers}, \bibinfo{person}{Kevin Liao}, {and}
  \bibinfo{person}{Bryan Parno}.} \bibinfo{year}{2021}\natexlab{}.
\newblock \showarticletitle{SoK: Computer-Aided Cryptography}. In
  \bibinfo{booktitle}{\emph{{SP}}}. \bibinfo{pages}{777--795}.
\newblock
\href{https://doi.org/10.1109/SP40001.2021.00008}{doi:\nolinkurl{10.1109/SP40001.2021.00008}}


\bibitem[Barthe et~al\mbox{.}(2014)]{barthe2014system}
\bibfield{author}{\bibinfo{person}{Gilles Barthe}, \bibinfo{person}{Gustavo
  Betarte}, \bibinfo{person}{Juan Campo}, \bibinfo{person}{Carlos Luna}, {and}
  \bibinfo{person}{David Pichardie}.} \bibinfo{year}{2014}\natexlab{}.
\newblock \showarticletitle{System-level non-interference for constant-time
  cryptography}. In \bibinfo{booktitle}{\emph{CCS}}.
  \bibinfo{pages}{1267--1279}.
\newblock


\bibitem[Barthe et~al\mbox{.}(2020)]{DBLP:journals/pacmpl/BartheBGHLPT20}
\bibfield{author}{\bibinfo{person}{Gilles Barthe}, \bibinfo{person}{Sandrine
  Blazy}, \bibinfo{person}{Benjamin Gr{\'{e}}goire},
  \bibinfo{person}{R{\'{e}}mi Hutin}, \bibinfo{person}{Vincent Laporte},
  \bibinfo{person}{David Pichardie}, {and} \bibinfo{person}{Alix Trieu}.}
  \bibinfo{year}{2020}\natexlab{}.
\newblock \showarticletitle{Formal verification of a constant-time preserving
  {C} compiler}.
\newblock \bibinfo{journal}{\emph{Proc. {ACM} Program. Lang.}}
  \bibinfo{volume}{4}, \bibinfo{number}{{POPL}} (\bibinfo{year}{2020}),
  \bibinfo{pages}{7:1--7:30}.
\newblock
\href{https://doi.org/10.1145/3371075}{doi:\nolinkurl{10.1145/3371075}}


\bibitem[Barthe et~al\mbox{.}(2021a)]{highassuranceSpectre}
\bibfield{author}{\bibinfo{person}{Gilles Barthe}, \bibinfo{person}{Sunjay
  Cauligi}, \bibinfo{person}{Benjamin Gr{\'{e}}goire}, \bibinfo{person}{Adrien
  Koutsos}, \bibinfo{person}{Kevin Liao}, \bibinfo{person}{Tiago Oliveira},
  \bibinfo{person}{Swarn Priya}, \bibinfo{person}{Tamara Rezk}, {and}
  \bibinfo{person}{Peter Schwabe}.} \bibinfo{year}{2021}\natexlab{a}.
\newblock \showarticletitle{High-Assurance Cryptography in the Spectre Era}. In
  \bibinfo{booktitle}{\emph{42nd {IEEE} Symposium on Security and Privacy, {SP}
  2021, San Francisco, CA, USA, 24-27 May 2021}}. \bibinfo{publisher}{{IEEE}},
  \bibinfo{pages}{1884--1901}.
\newblock
\href{https://doi.org/10.1109/SP40001.2021.00046}{doi:\nolinkurl{10.1109/SP40001.2021.00046}}


\bibitem[Barthe et~al\mbox{.}(2018)]{DBLP:conf/csfw/BartheGL18}
\bibfield{author}{\bibinfo{person}{Gilles Barthe}, \bibinfo{person}{Benjamin
  Gr{\'{e}}goire}, {and} \bibinfo{person}{Vincent Laporte}.}
  \bibinfo{year}{2018}\natexlab{}.
\newblock \showarticletitle{Secure Compilation of Side-Channel Countermeasures:
  The Case of Cryptographic "Constant-Time"}. In \bibinfo{booktitle}{\emph{31st
  {IEEE} Computer Security Foundations Symposium, {CSF} 2018, Oxford, United
  Kingdom, July 9-12, 2018}}. \bibinfo{publisher}{{IEEE} Computer Society},
  \bibinfo{pages}{328--343}.
\newblock
\href{https://doi.org/10.1109/CSF.2018.00031}{doi:\nolinkurl{10.1109/CSF.2018.00031}}


\bibitem[Barthe et~al\mbox{.}(2021b)]{DBLP:conf/ccs/BartheGLP21}
\bibfield{author}{\bibinfo{person}{Gilles Barthe}, \bibinfo{person}{Benjamin
  Gr{\'{e}}goire}, \bibinfo{person}{Vincent Laporte}, {and}
  \bibinfo{person}{Swarn Priya}.} \bibinfo{year}{2021}\natexlab{b}.
\newblock \showarticletitle{Structured Leakage and Applications to
  Cryptographic Constant-Time and Cost}. In \bibinfo{booktitle}{\emph{{CCS}
  '21: 2021 {ACM} {SIGSAC} Conference on Computer and Communications Security,
  Virtual Event, Republic of Korea, November 15 - 19, 2021}},
  \bibfield{editor}{\bibinfo{person}{Yongdae Kim}, \bibinfo{person}{Jong Kim},
  \bibinfo{person}{Giovanni Vigna}, {and} \bibinfo{person}{Elaine Shi}} (Eds.).
  \bibinfo{publisher}{{ACM}}, \bibinfo{pages}{462--476}.
\newblock
\href{https://doi.org/10.1145/3460120.3484761}{doi:\nolinkurl{10.1145/3460120.3484761}}


\bibitem[Basque et~al\mbox{.}(2024)]{BasqueBGOMBDS024}
\bibfield{author}{\bibinfo{person}{Zion~Leonahenahe Basque},
  \bibinfo{person}{Ati~Priya Bajaj}, \bibinfo{person}{Wil Gibbs},
  \bibinfo{person}{Jude O'Kain}, \bibinfo{person}{Derron Miao},
  \bibinfo{person}{Tiffany Bao}, \bibinfo{person}{Adam Doup{\'{e}}},
  \bibinfo{person}{Yan Shoshitaishvili}, {and} \bibinfo{person}{Ruoyu Wang}.}
  \bibinfo{year}{2024}\natexlab{}.
\newblock \showarticletitle{Ahoy SAILR! There is No Need to {DREAM} of {C:} {A}
  Compiler-Aware Structuring Algorithm for Binary Decompilation}. In
  \bibinfo{booktitle}{\emph{{USENIX} Security}}.
\newblock
\urldef\tempurl \url{https://www.usenix.org/conference/usenixsecurity24/presentation/basque}
\showURL{\tempurl}


\bibitem[Bernstein(2005)]{bernstein2005cache}
\bibfield{author}{\bibinfo{person}{Daniel~J Bernstein}.}
  \bibinfo{year}{2005}\natexlab{}.
\newblock \showarticletitle{Cache-timing attacks on AES}.
\newblock  (\bibinfo{year}{2005}).
\newblock


\bibitem[Blazy et~al\mbox{.}(2017)]{DBLP:conf/esorics/BlazyPT17}
\bibfield{author}{\bibinfo{person}{Sandrine Blazy}, \bibinfo{person}{David
  Pichardie}, {and} \bibinfo{person}{Alix Trieu}.}
  \bibinfo{year}{2017}\natexlab{}.
\newblock \showarticletitle{Verifying Constant-Time Implementations by Abstract
  Interpretation}. In \bibinfo{booktitle}{\emph{Computer Security - {ESORICS}
  2017 - 22nd European Symposium on Research in Computer Security, Oslo,
  Norway, September 11-15, 2017, Proceedings, Part {I}}}
  \emph{(\bibinfo{series}{Lecture Notes in Computer Science},
  Vol.~\bibinfo{volume}{10492})}, \bibfield{editor}{\bibinfo{person}{Simon~N.
  Foley}, \bibinfo{person}{Dieter Gollmann}, {and} \bibinfo{person}{Einar
  Snekkenes}} (Eds.). \bibinfo{publisher}{Springer}, \bibinfo{pages}{260--277}.
\newblock
\href{https://doi.org/10.1007/978-3-319-66402-6\_16}{doi:\nolinkurl{10.1007/978-3-319-66402-6\_16}}


\bibitem[Bond et~al\mbox{.}(2017)]{DBLP:conf/uss/BondHKLLPRST17}
\bibfield{author}{\bibinfo{person}{Barry Bond}, \bibinfo{person}{Chris
  Hawblitzel}, \bibinfo{person}{Manos Kapritsos}, \bibinfo{person}{K.~Rustan~M.
  Leino}, \bibinfo{person}{Jacob~R. Lorch}, \bibinfo{person}{Bryan Parno},
  \bibinfo{person}{Ashay Rane}, \bibinfo{person}{Srinath T.~V. Setty}, {and}
  \bibinfo{person}{Laure Thompson}.} \bibinfo{year}{2017}\natexlab{}.
\newblock \showarticletitle{Vale: Verifying High-Performance Cryptographic
  Assembly Code}. In \bibinfo{booktitle}{\emph{26th {USENIX} Security
  Symposium, {USENIX} Security 2017, Vancouver, BC, Canada, August 16-18,
  2017}}, \bibfield{editor}{\bibinfo{person}{Engin Kirda} {and}
  \bibinfo{person}{Thomas Ristenpart}} (Eds.). \bibinfo{publisher}{{USENIX}
  Association}, \bibinfo{pages}{917--934}.
\newblock
\urldef\tempurl \url{https://www.usenix.org/conference/usenixsecurity17/technical-sessions/presentation/bond}
\showURL{\tempurl}


\bibitem[Brumley et~al\mbox{.}(2013)]{BrumleyLSW13}
\bibfield{author}{\bibinfo{person}{David Brumley}, \bibinfo{person}{JongHyup
  Lee}, \bibinfo{person}{Edward~J. Schwartz}, {and} \bibinfo{person}{Maverick
  Woo}.} \bibinfo{year}{2013}\natexlab{}.
\newblock \showarticletitle{Native x86 Decompilation Using Semantics-Preserving
  Structural Analysis and Iterative Control-Flow Structuring}. In
  \bibinfo{booktitle}{\emph{{USENIX} Security}}. \bibinfo{pages}{353--368}.
\newblock
\urldef\tempurl \url{https://www.usenix.org/conference/usenixsecurity13/technical-sessions/presentation/schwartz}
\showURL{\tempurl}


\bibitem[Burk et~al\mbox{.}(2022)]{BurkPKV22}
\bibfield{author}{\bibinfo{person}{Kevin Burk}, \bibinfo{person}{Fabio Pagani},
  \bibinfo{person}{Christopher Kruegel}, {and} \bibinfo{person}{Giovanni
  Vigna}.} \bibinfo{year}{2022}\natexlab{}.
\newblock \showarticletitle{Decomperson: How Humans Decompile and What We Can
  Learn From It}. In \bibinfo{booktitle}{\emph{{USENIX} Security}}.
  \bibinfo{pages}{2765--2782}.
\newblock
\urldef\tempurl \url{https://www.usenix.org/conference/usenixsecurity22/presentation/burk}
\showURL{\tempurl}


\bibitem[Cai et~al\mbox{.}(2024)]{cai2024towards}
\bibfield{author}{\bibinfo{person}{Luwei Cai}, \bibinfo{person}{Fu Song}, {and}
  \bibinfo{person}{Taolue Chen}.} \bibinfo{year}{2024}\natexlab{}.
\newblock \showarticletitle{Towards Efficient Verification of Constant-Time
  Cryptographic Implementations}.
\newblock \bibinfo{journal}{\emph{Proceedings of the ACM on Software
  Engineering}} \bibinfo{volume}{1}, \bibinfo{number}{FSE}
  (\bibinfo{year}{2024}), \bibinfo{pages}{1019--1042}.
\newblock


\bibitem[Cao et~al\mbox{.}(2024)]{DBLP:conf/issta/CaoZL024}
\bibfield{author}{\bibinfo{person}{Ying Cao}, \bibinfo{person}{Runze Zhang},
  \bibinfo{person}{Ruigang Liang}, {and} \bibinfo{person}{Kai Chen}.}
  \bibinfo{year}{2024}\natexlab{}.
\newblock \showarticletitle{Evaluating the Effectiveness of Decompilers}. In
  \bibinfo{booktitle}{\emph{Proceedings of the 33rd {ACM} {SIGSOFT}
  International Symposium on Software Testing and Analysis, {ISSTA} 2024,
  Vienna, Austria, September 16-20, 2024}},
  \bibfield{editor}{\bibinfo{person}{Maria Christakis} {and}
  \bibinfo{person}{Michael Pradel}} (Eds.). \bibinfo{publisher}{{ACM}},
  \bibinfo{pages}{491--502}.
\newblock
\href{https://doi.org/10.1145/3650212.3652144}{doi:\nolinkurl{10.1145/3650212.3652144}}


\bibitem[Cauligi et~al\mbox{.}(2022)]{CauligiDMBS22}
\bibfield{author}{\bibinfo{person}{Sunjay Cauligi}, \bibinfo{person}{Craig
  Disselkoen}, \bibinfo{person}{Daniel Moghimi}, \bibinfo{person}{Gilles
  Barthe}, {and} \bibinfo{person}{Deian Stefan}.}
  \bibinfo{year}{2022}\natexlab{}.
\newblock \showarticletitle{SoK: Practical Foundations for Software Spectre
  Defenses}. In \bibinfo{booktitle}{\emph{{SP}}}. \bibinfo{pages}{666--680}.
\newblock
\href{https://doi.org/10.1109/SP46214.2022.9833707}{doi:\nolinkurl{10.1109/SP46214.2022.9833707}}


\bibitem[Cauligi et~al\mbox{.}(2020)]{DBLP:conf/pldi/CauligiDGTSRB20}
\bibfield{author}{\bibinfo{person}{Sunjay Cauligi}, \bibinfo{person}{Craig
  Disselkoen}, \bibinfo{person}{Klaus von Gleissenthall},
  \bibinfo{person}{Dean~M. Tullsen}, \bibinfo{person}{Deian Stefan},
  \bibinfo{person}{Tamara Rezk}, {and} \bibinfo{person}{Gilles Barthe}.}
  \bibinfo{year}{2020}\natexlab{}.
\newblock \showarticletitle{Constant-time foundations for the new spectre era}.
  In \bibinfo{booktitle}{\emph{Proceedings of the 41st {ACM} {SIGPLAN}
  International Conference on Programming Language Design and Implementation,
  {PLDI} 2020, London, UK, June 15-20, 2020}},
  \bibfield{editor}{\bibinfo{person}{Alastair~F. Donaldson} {and}
  \bibinfo{person}{Emina Torlak}} (Eds.). \bibinfo{publisher}{{ACM}},
  \bibinfo{pages}{913--926}.
\newblock
\href{https://doi.org/10.1145/3385412.3385970}{doi:\nolinkurl{10.1145/3385412.3385970}}


\bibitem[Daniel et~al\mbox{.}(2020)]{DanielBR20}
\bibfield{author}{\bibinfo{person}{Lesly{-}Ann Daniel},
  \bibinfo{person}{S{\'{e}}bastien Bardin}, {and} \bibinfo{person}{Tamara
  Rezk}.} \bibinfo{year}{2020}\natexlab{}.
\newblock \showarticletitle{Binsec/Rel: Efficient Relational Symbolic Execution
  for Constant-Time at Binary-Level}. In \bibinfo{booktitle}{\emph{{SP}}}.
  \bibinfo{pages}{1021--1038}.
\newblock
\href{https://doi.org/10.1109/SP40000.2020.00074}{doi:\nolinkurl{10.1109/SP40000.2020.00074}}


\bibitem[Daniel et~al\mbox{.}(2021)]{DBLP:conf/ndss/DanielBR21}
\bibfield{author}{\bibinfo{person}{Lesly{-}Ann Daniel},
  \bibinfo{person}{S{\'{e}}bastien Bardin}, {and} \bibinfo{person}{Tamara
  Rezk}.} \bibinfo{year}{2021}\natexlab{}.
\newblock \showarticletitle{Hunting the Haunter - Efficient Relational Symbolic
  Execution for Spectre with Haunted RelSE}. In
  \bibinfo{booktitle}{\emph{{NDSS}}}.
\newblock
\urldef\tempurl \url{https://www.ndss-symposium.org/ndss-paper/hunting-the-haunter-efficient-relational-symbolic-execution-for-spectre-with-haunted-relse/}
\showURL{\tempurl}


\bibitem[Disselkoen et~al\mbox{.}(2020)]{disselkoen2020finding}
\bibfield{author}{\bibinfo{person}{Craig Disselkoen}, \bibinfo{person}{Sunjay
  Cauligi}, \bibinfo{person}{Dean Tullsen}, {and} \bibinfo{person}{Deian
  Stefan}.} \bibinfo{year}{2020}\natexlab{}.
\newblock \showarticletitle{Finding and eliminating timing side-channels in
  crypto code with pitchfork}. In \bibinfo{booktitle}{\emph{TECHCON}}.
\newblock


\bibitem[Dramko et~al\mbox{.}(2024)]{DramkoLSVG24}
\bibfield{author}{\bibinfo{person}{Luke Dramko}, \bibinfo{person}{Jeremy
  Lacomis}, \bibinfo{person}{Edward~J. Schwartz}, \bibinfo{person}{Bogdan
  Vasilescu}, {and} \bibinfo{person}{Claire {Le Goues}}.}
  \bibinfo{year}{2024}\natexlab{}.
\newblock \showarticletitle{A Taxonomy of {C} Decompiler Fidelity Issues}. In
  \bibinfo{booktitle}{\emph{{USENIX} Security}}.
\newblock
\urldef\tempurl \url{https://www.usenix.org/conference/usenixsecurity24/presentation/dramko}
\showURL{\tempurl}


\bibitem[D'Silva et~al\mbox{.}(2015)]{DBLP:conf/sp/DSilvaPS15}
\bibfield{author}{\bibinfo{person}{Vijay D'Silva}, \bibinfo{person}{Mathias
  Payer}, {and} \bibinfo{person}{Dawn~Xiaodong Song}.}
  \bibinfo{year}{2015}\natexlab{}.
\newblock \showarticletitle{The Correctness-Security Gap in Compiler
  Optimization}. In \bibinfo{booktitle}{\emph{2015 {IEEE} Symposium on Security
  and Privacy Workshops, {SPW} 2015, San Jose, CA, USA, May 21-22, 2015}}.
  \bibinfo{publisher}{{IEEE} Computer Society}, \bibinfo{pages}{73--87}.
\newblock
\href{https://doi.org/10.1109/SPW.2015.33}{doi:\nolinkurl{10.1109/SPW.2015.33}}


\bibitem[Enders et~al\mbox{.}(2022)]{dewolf}
\bibfield{author}{\bibinfo{person}{Steffen Enders},
  \bibinfo{person}{Eva{-}Maria~C. Behner}, \bibinfo{person}{Niklas Bergmann},
  \bibinfo{person}{Mariia Rybalka}, \bibinfo{person}{Elmar Padilla},
  \bibinfo{person}{Er~Xue Hui}, \bibinfo{person}{Henry Low}, {and}
  \bibinfo{person}{Nicholas Sim}.} \bibinfo{year}{2022}\natexlab{}.
\newblock \showarticletitle{dewolf: Improving Decompilation by leveraging User
  Surveys}.
\newblock \bibinfo{journal}{\emph{CoRR}}  \bibinfo{volume}{abs/2205.06719}
  (\bibinfo{year}{2022}).
\newblock
\showeprint[arXiv]{2205.06719}
\href{https://doi.org/10.48550/ARXIV.2205.06719}{doi:\nolinkurl{10.48550/ARXIV.2205.06719}}


\bibitem[Engel et~al\mbox{.}(2011)]{DBLP:conf/scopes/EngelLAFB11}
\bibfield{author}{\bibinfo{person}{Felix Engel}, \bibinfo{person}{Rainer
  Leupers}, \bibinfo{person}{Gerd Ascheid}, \bibinfo{person}{Max Ferger}, {and}
  \bibinfo{person}{Marcel Beemster}.} \bibinfo{year}{2011}\natexlab{}.
\newblock \showarticletitle{Enhanced structural analysis for {C} code
  reconstruction from {IR} code}. In \bibinfo{booktitle}{\emph{14th
  International Workshop on Software and Compilers for Embedded Systems,
  {SCOPES} '11, St. Goar, Germany, June 27-28, 2011}},
  \bibfield{editor}{\bibinfo{person}{Henk Corporaal} {and}
  \bibinfo{person}{Sander Stuijk}} (Eds.). \bibinfo{publisher}{{ACM}},
  \bibinfo{pages}{21--27}.
\newblock
\href{https://doi.org/10.1145/1988932.1988936}{doi:\nolinkurl{10.1145/1988932.1988936}}


\bibitem[Geimer and Maurice(2025)]{geimer2025fun}
\bibfield{author}{\bibinfo{person}{Antoine Geimer} {and}
  \bibinfo{person}{Clementine Maurice}.} \bibinfo{year}{2025}\natexlab{}.
\newblock \showarticletitle{Fun with flags: How Compilers Break and Fix
  Constant-Time Code}.
\newblock \bibinfo{journal}{\emph{arXiv preprint arXiv:2507.06112}}
  (\bibinfo{year}{2025}).
\newblock


\bibitem[Geimer et~al\mbox{.}(2023)]{GeimerVRDBM23}
\bibfield{author}{\bibinfo{person}{Antoine Geimer},
  \bibinfo{person}{Math{\'{e}}o Vergnolle},
  \bibinfo{person}{Fr{\'{e}}d{\'{e}}ric Recoules}, \bibinfo{person}{Lesly{-}Ann
  Daniel}, \bibinfo{person}{S{\'{e}}bastien Bardin}, {and}
  \bibinfo{person}{Cl{\'{e}}mentine Maurice}.} \bibinfo{year}{2023}\natexlab{}.
\newblock \showarticletitle{A Systematic Evaluation of Automated Tools for
  Side-Channel Vulnerabilities Detection in Cryptographic Libraries}. In
  \bibinfo{booktitle}{\emph{{CCS}}}. \bibinfo{publisher}{{ACM}},
  \bibinfo{pages}{1690--1704}.
\newblock
\href{https://doi.org/10.1145/3576915.3623112}{doi:\nolinkurl{10.1145/3576915.3623112}}


\bibitem[Guarnieri et~al\mbox{.}(2020)]{DBLP:conf/sp/GuarnieriKMRS20}
\bibfield{author}{\bibinfo{person}{Marco Guarnieri}, \bibinfo{person}{Boris
  K{\"{o}}pf}, \bibinfo{person}{Jos{\'{e}}~F. Morales}, \bibinfo{person}{Jan
  Reineke}, {and} \bibinfo{person}{Andr{\'{e}}s S{\'{a}}nchez}.}
  \bibinfo{year}{2020}\natexlab{}.
\newblock \showarticletitle{Spectector: Principled Detection of Speculative
  Information Flows}. In \bibinfo{booktitle}{\emph{2020 {IEEE} Symposium on
  Security and Privacy, {SP} 2020, San Francisco, CA, USA, May 18-21, 2020}}.
  \bibinfo{publisher}{{IEEE}}, \bibinfo{pages}{1--19}.
\newblock
\href{https://doi.org/10.1109/SP40000.2020.00011}{doi:\nolinkurl{10.1109/SP40000.2020.00011}}


\bibitem[Guarnieri et~al\mbox{.}(2021)]{DBLP:conf/sp/GuarnieriKRV21}
\bibfield{author}{\bibinfo{person}{Marco Guarnieri}, \bibinfo{person}{Boris
  K{\"{o}}pf}, \bibinfo{person}{Jan Reineke}, {and} \bibinfo{person}{Pepe
  Vila}.} \bibinfo{year}{2021}\natexlab{}.
\newblock \showarticletitle{Hardware-Software Contracts for Secure
  Speculation}. In \bibinfo{booktitle}{\emph{42nd {IEEE} Symposium on Security
  and Privacy, {SP} 2021, San Francisco, CA, USA, 24-27 May 2021}}.
  \bibinfo{publisher}{{IEEE}}, \bibinfo{pages}{1868--1883}.
\newblock
\href{https://doi.org/10.1109/SP40001.2021.00036}{doi:\nolinkurl{10.1109/SP40001.2021.00036}}


\bibitem[Gussoni et~al\mbox{.}(2020)]{DBLP:conf/ccs/GussoniFFA20}
\bibfield{author}{\bibinfo{person}{Andrea Gussoni},
  \bibinfo{person}{Alessandro~Di Federico}, \bibinfo{person}{Pietro Fezzardi},
  {and} \bibinfo{person}{Giovanni Agosta}.} \bibinfo{year}{2020}\natexlab{}.
\newblock \showarticletitle{A Comb for Decompiled {C} Code}. In
  \bibinfo{booktitle}{\emph{{ASIA} {CCS} '20: The 15th {ACM} Asia Conference on
  Computer and Communications Security, Taipei, Taiwan, October 5-9, 2020}},
  \bibfield{editor}{\bibinfo{person}{Hung{-}Min Sun},
  \bibinfo{person}{Shiuh{-}Pyng Shieh}, \bibinfo{person}{Guofei Gu}, {and}
  \bibinfo{person}{Giuseppe Ateniese}} (Eds.). \bibinfo{publisher}{{ACM}},
  \bibinfo{pages}{637--651}.
\newblock
\href{https://doi.org/10.1145/3320269.3384766}{doi:\nolinkurl{10.1145/3320269.3384766}}


\bibitem[Hex-Rays(2023)]{hexrays}
\bibfield{author}{\bibinfo{person}{Hex-Rays}.} \bibinfo{year}{2023}\natexlab{}.
\newblock \bibinfo{title}{Ida Pro}.
\newblock
\urldef\tempurl \url{https://www. hex-rays.com/products/ida}
\showURL{\tempurl}


\bibitem[Intel(2022)]{intelmitigation}
\bibfield{author}{\bibinfo{person}{Intel}.} \bibinfo{year}{2022}\natexlab{}.
\newblock \bibinfo{booktitle}{\emph{Guidelines for Mitigating Timing Side
  Channels Against Cryptographic Implementations}}.
\newblock
\urldef\tempurl \url{https://www.intel.com/content/www/us/en/developer/articles/technical/software-security-guidance/secure-coding/mitigate-timing-side-channel-crypto-implementation.html}
\showURL{\tempurl}
\newblock
\shownote{Accessed: 2026-01-22}.


\bibitem[Intel(2023)]{DOIT}
\bibfield{author}{\bibinfo{person}{Intel}.} \bibinfo{year}{2023}\natexlab{}.
\newblock \bibinfo{booktitle}{\emph{Data Operand Independent Timing
  Instructions}}.
\newblock
\urldef\tempurl \url{https://www.intel.com/content/www/us/en/developer/articles/technical/software-security-guidance/resources/data-operand-independent-timing-instructions.html}
\showURL{\tempurl}
\newblock
\shownote{Accessed: 2025-10-06}.


\bibitem[Jancar et~al\mbox{.}(2022)]{DBLP:conf/sp/JancarFBSSBFA22}
\bibfield{author}{\bibinfo{person}{Jan Jancar}, \bibinfo{person}{Marcel
  Fourn{\'{e}}}, \bibinfo{person}{Daniel De~Almeida Braga},
  \bibinfo{person}{Mohamed Sabt}, \bibinfo{person}{Peter Schwabe},
  \bibinfo{person}{Gilles Barthe}, \bibinfo{person}{Pierre{-}Alain Fouque},
  {and} \bibinfo{person}{Yasemin Acar}.} \bibinfo{year}{2022}\natexlab{}.
\newblock \showarticletitle{"They're not that hard to mitigate": What
  Cryptographic Library Developers Think About Timing Attacks}. In
  \bibinfo{booktitle}{\emph{43rd {IEEE} Symposium on Security and Privacy, {SP}
  2022, San Francisco, CA, USA, May 22-26, 2022}}. \bibinfo{publisher}{{IEEE}},
  \bibinfo{pages}{632--649}.
\newblock
\href{https://doi.org/10.1109/SP46214.2022.9833713}{doi:\nolinkurl{10.1109/SP46214.2022.9833713}}


\bibitem[Kocher et~al\mbox{.}(2019)]{DBLP:conf/sp/KocherHFGGHHLM019}
\bibfield{author}{\bibinfo{person}{Paul Kocher}, \bibinfo{person}{Jann Horn},
  \bibinfo{person}{Anders Fogh}, \bibinfo{person}{Daniel Genkin},
  \bibinfo{person}{Daniel Gruss}, \bibinfo{person}{Werner Haas},
  \bibinfo{person}{Mike Hamburg}, \bibinfo{person}{Moritz Lipp},
  \bibinfo{person}{Stefan Mangard}, \bibinfo{person}{Thomas Prescher},
  \bibinfo{person}{Michael Schwarz}, {and} \bibinfo{person}{Yuval Yarom}.}
  \bibinfo{year}{2019}\natexlab{}.
\newblock \showarticletitle{Spectre Attacks: Exploiting Speculative Execution}.
  In \bibinfo{booktitle}{\emph{{IEEE} {SP}}}. \bibinfo{pages}{1--19}.
\newblock
\href{https://doi.org/10.1109/SP.2019.00002}{doi:\nolinkurl{10.1109/SP.2019.00002}}


\bibitem[Kocher(1996)]{Koc96}
\bibfield{author}{\bibinfo{person}{Paul~C. Kocher}.}
  \bibinfo{year}{1996}\natexlab{}.
\newblock \showarticletitle{Timing Attacks on Implementations of
  {D}iffie-{H}ellman, {RSA}, {DSS}, and Other Systems}. In
  \bibinfo{booktitle}{\emph{Advances in Cryptology -- {CRYPTO'96}}}
  \emph{(\bibinfo{series}{LNCS}, Vol.~\bibinfo{volume}{1109})}.
  \bibinfo{publisher}{SV}, \bibinfo{pages}{104--113}.
\newblock
\urldef\tempurl \url{http://www.cryptography.com/public/pdf/TimingAttacks.pdf}
\showURL{\tempurl}


\bibitem[K{\v{r}}oustek et~al\mbox{.}(2017)]{retdec}
\bibfield{author}{\bibinfo{person}{Jakub K{\v{r}}oustek},
  \bibinfo{person}{Peter Matula}, {and} \bibinfo{person}{Petr Zemek}.}
  \bibinfo{year}{2017}\natexlab{}.
\newblock \showarticletitle{Retdec: An open-source machine-code decompiler}.
\newblock
\urldef\tempurl \url{https://github.com/avast/retdec}
\showURL{\tempurl}


\bibitem[Liu and Wang(2020)]{DBLP:conf/issta/LiuW20}
\bibfield{author}{\bibinfo{person}{Zhibo Liu} {and} \bibinfo{person}{Shuai
  Wang}.} \bibinfo{year}{2020}\natexlab{}.
\newblock \showarticletitle{How far we have come: testing decompilation
  correctness of {C} decompilers}. In \bibinfo{booktitle}{\emph{{ISSTA} '20:
  29th {ACM} {SIGSOFT} International Symposium on Software Testing and
  Analysis, Virtual Event, USA, July 18-22, 2020}},
  \bibfield{editor}{\bibinfo{person}{Sarfraz Khurshid} {and}
  \bibinfo{person}{Corina~S. Pasareanu}} (Eds.). \bibinfo{publisher}{{ACM}},
  \bibinfo{pages}{475--487}.
\newblock
\href{https://doi.org/10.1145/3395363.3397370}{doi:\nolinkurl{10.1145/3395363.3397370}}


\bibitem[Liu et~al\mbox{.}(2022)]{DBLP:conf/sp/LiuYWB22}
\bibfield{author}{\bibinfo{person}{Zhibo Liu}, \bibinfo{person}{Yuanyuan Yuan},
  \bibinfo{person}{Shuai Wang}, {and} \bibinfo{person}{Yuyan Bao}.}
  \bibinfo{year}{2022}\natexlab{}.
\newblock \showarticletitle{SoK: Demystifying Binary Lifters Through the Lens
  of Downstream Applications}. In \bibinfo{booktitle}{\emph{43rd {IEEE}
  Symposium on Security and Privacy, {SP} 2022, San Francisco, CA, USA, May
  22-26, 2022}}. \bibinfo{publisher}{{IEEE}}, \bibinfo{pages}{1100--1119}.
\newblock
\href{https://doi.org/10.1109/SP46214.2022.9833799}{doi:\nolinkurl{10.1109/SP46214.2022.9833799}}


\bibitem[Mantovani et~al\mbox{.}(2022)]{DBLP:conf/asiaccs/MantovaniCSB22}
\bibfield{author}{\bibinfo{person}{Alessandro Mantovani}, \bibinfo{person}{Luca
  Compagna}, \bibinfo{person}{Yan Shoshitaishvili}, {and}
  \bibinfo{person}{Davide Balzarotti}.} \bibinfo{year}{2022}\natexlab{}.
\newblock \showarticletitle{The Convergence of Source Code and Binary
  Vulnerability Discovery - {A} Case Study}. In
  \bibinfo{booktitle}{\emph{{ASIA} {CCS} '22: {ACM} Asia Conference on Computer
  and Communications Security, Nagasaki, Japan, 30 May 2022 - 3 June 2022}},
  \bibfield{editor}{\bibinfo{person}{Yuji Suga}, \bibinfo{person}{Kouichi
  Sakurai}, \bibinfo{person}{Xuhua Ding}, {and} \bibinfo{person}{Kazue Sako}}
  (Eds.). \bibinfo{publisher}{{ACM}}, \bibinfo{pages}{602--615}.
\newblock
\href{https://doi.org/10.1145/3488932.3497764}{doi:\nolinkurl{10.1145/3488932.3497764}}


\bibitem[Mattei et~al\mbox{.}(2022)]{DBLP:conf/acsac/MatteiMKV22}
\bibfield{author}{\bibinfo{person}{James Mattei}, \bibinfo{person}{Madeline
  McLaughlin}, \bibinfo{person}{Samantha Katcher}, {and}
  \bibinfo{person}{Daniel Votipka}.} \bibinfo{year}{2022}\natexlab{}.
\newblock \showarticletitle{A Qualitative Evaluation of Reverse Engineering
  Tool Usability}. In \bibinfo{booktitle}{\emph{Annual Computer Security
  Applications Conference, {ACSAC} 2022, Austin, TX, USA, December 5-9, 2022}}.
  \bibinfo{publisher}{{ACM}}, \bibinfo{pages}{619--631}.
\newblock
\href{https://doi.org/10.1145/3564625.3567993}{doi:\nolinkurl{10.1145/3564625.3567993}}


\bibitem[Molnar et~al\mbox{.}(2005)]{DBLP:conf/icisc/MolnarPSW05}
\bibfield{author}{\bibinfo{person}{David Molnar}, \bibinfo{person}{Matt
  Piotrowski}, \bibinfo{person}{David Schultz}, {and} \bibinfo{person}{David~A.
  Wagner}.} \bibinfo{year}{2005}\natexlab{}.
\newblock \showarticletitle{The Program Counter Security Model: Automatic
  Detection and Removal of Control-Flow Side Channel Attacks}. In
  \bibinfo{booktitle}{\emph{Information Security and Cryptology - {ICISC} 2005,
  8th International Conference, Seoul, Korea, December 1-2, 2005, Revised
  Selected Papers}} \emph{(\bibinfo{series}{Lecture Notes in Computer Science},
  Vol.~\bibinfo{volume}{3935})}, \bibfield{editor}{\bibinfo{person}{Dongho Won}
  {and} \bibinfo{person}{Seungjoo Kim}} (Eds.). \bibinfo{publisher}{Springer},
  \bibinfo{pages}{156--168}.
\newblock
\href{https://doi.org/10.1007/11734727\_14}{doi:\nolinkurl{10.1007/11734727\_14}}


\bibitem[Nasrabadi et~al\mbox{.}(2025)]{abs-2511-11385}
\bibfield{author}{\bibinfo{person}{Faezeh Nasrabadi}, \bibinfo{person}{Robert
  K{\"{u}}nnemann}, {and} \bibinfo{person}{Hamed Nemati}.}
  \bibinfo{year}{2025}\natexlab{}.
\newblock \showarticletitle{Automated Side-Channel Analysis of Cryptographic
  Protocol Implementations}.
\newblock \bibinfo{journal}{\emph{CoRR}}  \bibinfo{volume}{abs/2511.11385}
  (\bibinfo{year}{2025}).
\newblock
\href{https://doi.org/10.48550/ARXIV.2511.11385}{doi:\nolinkurl{10.48550/ARXIV.2511.11385}}


\bibitem[(NSA)(2018)]{ghidra}
\bibfield{author}{\bibinfo{person}{National Security~Agency (NSA)}.}
  \bibinfo{year}{2018}\natexlab{}.
\newblock \bibinfo{title}{Ghidra}.
\newblock
\urldef\tempurl \url{https://www.nsa.gov/resources/everyone/ghidra/}
\showURL{\tempurl}


\bibitem[Purnal(2024)]{clangover}
\bibfield{author}{\bibinfo{person}{Antoon Purnal}.}
  \bibinfo{year}{2024}\natexlab{}.
\newblock \bibinfo{booktitle}{\emph{Clangover CVE}}.
\newblock
\urldef\tempurl \url{https://nvd.nist.gov/vuln/detail/CVE-2024-37880}
\showURL{\tempurl}
\newblock
\shownote{Accessed: 2025-10-03}.


\bibitem[Rakamaric and Emmi(2014)]{DBLP:conf/cav/RakamaricE14}
\bibfield{author}{\bibinfo{person}{Zvonimir Rakamaric} {and}
  \bibinfo{person}{Michael Emmi}.} \bibinfo{year}{2014}\natexlab{}.
\newblock \showarticletitle{{SMACK:} Decoupling Source Language Details from
  Verifier Implementations}. In \bibinfo{booktitle}{\emph{Computer Aided
  Verification - 26th International Conference, {CAV} 2014, Held as Part of the
  Vienna Summer of Logic, {VSL} 2014, Vienna, Austria, July 18-22, 2014.
  Proceedings}} \emph{(\bibinfo{series}{Lecture Notes in Computer Science},
  Vol.~\bibinfo{volume}{8559})}, \bibfield{editor}{\bibinfo{person}{Armin
  Biere} {and} \bibinfo{person}{Roderick Bloem}} (Eds.).
  \bibinfo{publisher}{Springer}, \bibinfo{pages}{106--113}.
\newblock
\href{https://doi.org/10.1007/978-3-319-08867-9\_7}{doi:\nolinkurl{10.1007/978-3-319-08867-9\_7}}


\bibitem[Rodrigues et~al\mbox{.}(2016)]{RodriguesPA16}
\bibfield{author}{\bibinfo{person}{Bruno Rodrigues}, \bibinfo{person}{Fernando
  Magno~Quint{\~{a}}o Pereira}, {and} \bibinfo{person}{Diego~F. Aranha}.}
  \bibinfo{year}{2016}\natexlab{}.
\newblock \showarticletitle{Sparse representation of implicit flows with
  applications to side-channel detection}. In \bibinfo{booktitle}{\emph{{CC}}}.
  \bibinfo{pages}{110--120}.
\newblock
\href{https://doi.org/10.1145/2892208.2892230}{doi:\nolinkurl{10.1145/2892208.2892230}}


\bibitem[Schneider et~al\mbox{.}(2024)]{schneider2024breaking}
\bibfield{author}{\bibinfo{person}{Moritz Schneider}, \bibinfo{person}{Daniele
  Lain}, \bibinfo{person}{Ivan Puddu}, \bibinfo{person}{Nicolas Dutly}, {and}
  \bibinfo{person}{Srdjan Capkun}.} \bibinfo{year}{2024}\natexlab{}.
\newblock \bibinfo{title}{Breaking Bad: How Compilers Break
  Constant-Time~Implementations}.
\newblock
\showeprint[arxiv]{2410.13489}~[cs.CR]
\urldef\tempurl \url{https://arxiv.org/abs/2410.13489}
\showURL{\tempurl}


\bibitem[Schulte et~al\mbox{.}(2018)]{schulte2018evolving}
\bibfield{author}{\bibinfo{person}{Eric Schulte}, \bibinfo{person}{Jason
  Ruchti}, \bibinfo{person}{Matt Noonan}, \bibinfo{person}{David Ciarletta},
  {and} \bibinfo{person}{Alexey Loginov}.} \bibinfo{year}{2018}\natexlab{}.
\newblock \showarticletitle{Evolving exact decompilation}. In
  \bibinfo{booktitle}{\emph{Workshop on Binary Analysis Research (BAR)}}.
\newblock


\bibitem[Schwartz and Lee(2013)]{DBLP:conf/uss/SchwartzL13}
\bibfield{author}{\bibinfo{person}{Edward~J. Schwartz} {and}
  \bibinfo{person}{JongHyup Lee}.} \bibinfo{year}{2013}\natexlab{}.
\newblock \showarticletitle{Decompilation of Binary Programs Using Control-Flow
  Structuring and Semantic-Preserving Transformations}. In
  \bibinfo{booktitle}{\emph{{USENIX} Security}}. \bibinfo{pages}{369--384}.
\newblock
\urldef\tempurl \url{https://www.usenix.org/conference/usenixsecurity13/technical-sessions/presentation/schwartz}
\showURL{\tempurl}


\bibitem[Shivakumar et~al\mbox{.}(2023)]{DBLP:conf/sp/ShivakumarBGLOPST23}
\bibfield{author}{\bibinfo{person}{Basavesh~Ammanaghatta Shivakumar},
  \bibinfo{person}{Gilles Barthe}, \bibinfo{person}{Benjamin Gr{\'{e}}goire},
  \bibinfo{person}{Vincent Laporte}, \bibinfo{person}{Tiago Oliveira},
  \bibinfo{person}{Swarn Priya}, \bibinfo{person}{Peter Schwabe}, {and}
  \bibinfo{person}{Lucas Tabary{-}Maujean}.} \bibinfo{year}{2023}\natexlab{}.
\newblock \showarticletitle{Typing High-Speed Cryptography against Spectre v1}.
  In \bibinfo{booktitle}{\emph{44th {IEEE} Symposium on Security and Privacy,
  {SP} 2023, San Francisco, CA, USA, May 21-25, 2023}}.
  \bibinfo{publisher}{{IEEE}}, \bibinfo{pages}{1094--1111}.
\newblock
\href{https://doi.org/10.1109/SP46215.2023.10179418}{doi:\nolinkurl{10.1109/SP46215.2023.10179418}}


\bibitem[Shoshitaishvili et~al\mbox{.}(2016)]{angr}
\bibfield{author}{\bibinfo{person}{Yan Shoshitaishvili}, \bibinfo{person}{Ruoyu
  Wang}, \bibinfo{person}{Christopher Salls}, \bibinfo{person}{Nick Stephens},
  \bibinfo{person}{Mario Polino}, \bibinfo{person}{Andrew Dutcher},
  \bibinfo{person}{John Grosen}, \bibinfo{person}{Siji Feng},
  \bibinfo{person}{Christophe Hauser}, \bibinfo{person}{Christopher
  Kr{\"{u}}gel}, {and} \bibinfo{person}{Giovanni Vigna}.}
  \bibinfo{year}{2016}\natexlab{}.
\newblock \showarticletitle{{SOK:} (State of) The Art of War: Offensive
  Techniques in Binary Analysis}. In \bibinfo{booktitle}{\emph{{SP}}}.
  \bibinfo{pages}{138--157}.
\newblock
\href{https://doi.org/10.1109/SP.2016.17}{doi:\nolinkurl{10.1109/SP.2016.17}}


\bibitem[Simon et~al\mbox{.}(2018)]{DBLP:conf/eurosp/SimonCA18}
\bibfield{author}{\bibinfo{person}{Laurent Simon}, \bibinfo{person}{David
  Chisnall}, {and} \bibinfo{person}{Ross~J. Anderson}.}
  \bibinfo{year}{2018}\natexlab{}.
\newblock \showarticletitle{What You Get is What You {C:} Controlling Side
  Effects in Mainstream {C} Compilers}. In \bibinfo{booktitle}{\emph{2018
  {IEEE} European Symposium on Security and Privacy, EuroS{\&}P 2018, London,
  United Kingdom, April 24-26, 2018}}. \bibinfo{publisher}{{IEEE}},
  \bibinfo{pages}{1--15}.
\newblock
\href{https://doi.org/10.1109/EuroSP.2018.00009}{doi:\nolinkurl{10.1109/EuroSP.2018.00009}}


\bibitem[Sung et~al\mbox{.}(2018)]{sung2018canal}
\bibfield{author}{\bibinfo{person}{Chungha Sung}, \bibinfo{person}{Brandon
  Paulsen}, {and} \bibinfo{person}{Chao Wang}.}
  \bibinfo{year}{2018}\natexlab{}.
\newblock \showarticletitle{CANAL: a cache timing analysis framework via LLVM
  transformation}. In \bibinfo{booktitle}{\emph{Proceedings of the 33rd
  ACM/IEEE International Conference on Automated Software Engineering}}
  (Montpellier, France) \emph{(\bibinfo{series}{ASE '18})}.
  \bibinfo{publisher}{Association for Computing Machinery},
  \bibinfo{address}{New York, NY, USA}, \bibinfo{pages}{904–907}.
\newblock
\showISBNx{9781450359375}
\href{https://doi.org/10.1145/3238147.3240485}{doi:\nolinkurl{10.1145/3238147.3240485}}


\bibitem[van~der Wall and Meyer(2025)]{WallM25}
\bibfield{author}{\bibinfo{person}{S\"{o}ren van~der Wall} {and}
  \bibinfo{person}{Roland Meyer}.} \bibinfo{year}{2025}\natexlab{}.
\newblock \showarticletitle{SNIP: Speculative Execution and Non-Interference
  Preservation for Compiler Transformations}.
\newblock \bibinfo{journal}{\emph{Proc. ACM Program. Lang.}}
  \bibinfo{volume}{9}, \bibinfo{number}{POPL}, Article \bibinfo{articleno}{51}
  (\bibinfo{date}{Jan.} \bibinfo{year}{2025}), \bibinfo{numpages}{30}~pages.
\newblock
\href{https://doi.org/10.1145/3704887}{doi:\nolinkurl{10.1145/3704887}}


\bibitem[Vassena et~al\mbox{.}(2021)]{DBLP:journals/pacmpl/VassenaDGCKJTS21}
\bibfield{author}{\bibinfo{person}{Marco Vassena}, \bibinfo{person}{Craig
  Disselkoen}, \bibinfo{person}{Klaus von Gleissenthall},
  \bibinfo{person}{Sunjay Cauligi}, \bibinfo{person}{Rami~G{\"{o}}khan Kici},
  \bibinfo{person}{Ranjit Jhala}, \bibinfo{person}{Dean~M. Tullsen}, {and}
  \bibinfo{person}{Deian Stefan}.} \bibinfo{year}{2021}\natexlab{}.
\newblock \showarticletitle{Automatically eliminating speculative leaks from
  cryptographic code with blade}.
\newblock \bibinfo{journal}{\emph{Proc. {ACM} Program. Lang.}}
  \bibinfo{volume}{5}, \bibinfo{number}{{POPL}} (\bibinfo{year}{2021}),
  \bibinfo{pages}{1--30}.
\newblock
\href{https://doi.org/10.1145/3434330}{doi:\nolinkurl{10.1145/3434330}}


\bibitem[Verbeek et~al\mbox{.}(2022)]{DBLP:conf/pldi/VerbeekBFR22}
\bibfield{author}{\bibinfo{person}{Freek Verbeek}, \bibinfo{person}{Joshua~A.
  Bockenek}, \bibinfo{person}{Zhoulai Fu}, {and} \bibinfo{person}{Binoy
  Ravindran}.} \bibinfo{year}{2022}\natexlab{}.
\newblock \showarticletitle{Formally verified lifting of C-compiled x86-64
  binaries}. In \bibinfo{booktitle}{\emph{{PLDI} '22: 43rd {ACM} {SIGPLAN}
  International Conference on Programming Language Design and Implementation,
  San Diego, CA, USA, June 13 - 17, 2022}},
  \bibfield{editor}{\bibinfo{person}{Ranjit Jhala} {and} \bibinfo{person}{Isil
  Dillig}} (Eds.). \bibinfo{publisher}{{ACM}}, \bibinfo{pages}{934--949}.
\newblock
\href{https://doi.org/10.1145/3519939.3523702}{doi:\nolinkurl{10.1145/3519939.3523702}}


\bibitem[Wang et~al\mbox{.}(2019)]{0011BL0ZW19}
\bibfield{author}{\bibinfo{person}{Shuai Wang}, \bibinfo{person}{Yuyan Bao},
  \bibinfo{person}{Xiao Liu}, \bibinfo{person}{Pei Wang},
  \bibinfo{person}{Danfeng Zhang}, {and} \bibinfo{person}{Dinghao Wu}.}
  \bibinfo{year}{2019}\natexlab{}.
\newblock \showarticletitle{Identifying Cache-Based Side Channels through
  Secret-Augmented Abstract Interpretation}. In
  \bibinfo{booktitle}{\emph{{USENIX} Security}}. \bibinfo{pages}{657--674}.
\newblock
\urldef\tempurl \url{https://www.usenix.org/conference/usenixsecurity19/presentation/wang-shuai}
\showURL{\tempurl}


\bibitem[Xu et~al\mbox{.}(2023)]{DBLP:conf/uss/XuLDDLWPM23}
\bibfield{author}{\bibinfo{person}{Jianhao Xu}, \bibinfo{person}{Kangjie Lu},
  \bibinfo{person}{Zhengjie Du}, \bibinfo{person}{Zhu Ding},
  \bibinfo{person}{Linke Li}, \bibinfo{person}{Qiushi Wu},
  \bibinfo{person}{Mathias Payer}, {and} \bibinfo{person}{Bing Mao}.}
  \bibinfo{year}{2023}\natexlab{}.
\newblock \showarticletitle{Silent Bugs Matter: {A} Study of
  Compiler-Introduced Security Bugs}. In \bibinfo{booktitle}{\emph{32nd
  {USENIX} Security Symposium, {USENIX} Security 2023, Anaheim, CA, USA, August
  9-11, 2023}}, \bibfield{editor}{\bibinfo{person}{Joseph~A. Calandrino} {and}
  \bibinfo{person}{Carmela Troncoso}} (Eds.). \bibinfo{publisher}{{USENIX}
  Association}, \bibinfo{pages}{3655--3672}.
\newblock
\urldef\tempurl \url{https://www.usenix.org/conference/usenixsecurity23/presentation/xu-jianhao}
\showURL{\tempurl}


\bibitem[Yakdan et~al\mbox{.}(2016)]{yakdan2016helping}
\bibfield{author}{\bibinfo{person}{Khaled Yakdan}, \bibinfo{person}{Sergej
  Dechand}, \bibinfo{person}{Elmar Gerhards-Padilla}, {and}
  \bibinfo{person}{Matthew Smith}.} \bibinfo{year}{2016}\natexlab{}.
\newblock \showarticletitle{Helping johnny to analyze malware: A
  usability-optimized decompiler and malware analysis user study}. In
  \bibinfo{booktitle}{\emph{2016 IEEE Symposium on Security and Privacy (SP)}}.
  IEEE, \bibinfo{pages}{158--177}.
\newblock


\bibitem[Yakdan et~al\mbox{.}(2015)]{YakdanEGS15}
\bibfield{author}{\bibinfo{person}{Khaled Yakdan}, \bibinfo{person}{Sebastian
  Eschweiler}, \bibinfo{person}{Elmar Gerhards{-}Padilla}, {and}
  \bibinfo{person}{Matthew Smith}.} \bibinfo{year}{2015}\natexlab{}.
\newblock \showarticletitle{No More Gotos: Decompilation Using
  Pattern-Independent Control-Flow Structuring and Semantic-Preserving
  Transformations}. In \bibinfo{booktitle}{\emph{{NDSS}}}.
\newblock
\urldef\tempurl \url{https://www.ndss-symposium.org/ndss2015/no-more-gotos-decompilation-using-pattern-independent-control-flow-structuring-and-semantics}
\showURL{\tempurl}


\bibitem[Zhang and Barthe(2025)]{cryptoeprint:2025/338}
\bibfield{author}{\bibinfo{person}{Zhiyuan Zhang} {and} \bibinfo{person}{Gilles
  Barthe}.} \bibinfo{year}{2025}\natexlab{}.
\newblock \bibinfo{title}{{CT}-{LLVM}: Automatic Large-Scale Constant-Time
  Analysis}.
\newblock \bibinfo{howpublished}{Cryptology {ePrint} Archive, Paper 2025/338}.
\newblock
\urldef\tempurl \url{https://eprint.iacr.org/2025/338}
\showURL{\tempurl}


\bibitem[Zhang et~al\mbox{.}(2023)]{ZhangTOCGY23}
\bibfield{author}{\bibinfo{person}{Zhiyuan Zhang}, \bibinfo{person}{Mingtian
  Tao}, \bibinfo{person}{Sioli O'Connell}, \bibinfo{person}{Chitchanok
  Chuengsatiansup}, \bibinfo{person}{Daniel Genkin}, {and}
  \bibinfo{person}{Yuval Yarom}.} \bibinfo{year}{2023}\natexlab{}.
\newblock \showarticletitle{BunnyHop: Exploiting the Instruction Prefetcher}.
  In \bibinfo{booktitle}{\emph{{USENIX} Security}}.
  \bibinfo{pages}{7321--7337}.
\newblock
\urldef\tempurl \url{https://www.usenix.org/conference/usenixsecurity23/presentation/zhang-zhiyuan-bunnyhop}
\showURL{\tempurl}


\bibitem[Zhou et~al\mbox{.}(2024b)]{DBLP:conf/asplos/ZhouYHCZ24}
\bibfield{author}{\bibinfo{person}{Anshunkang Zhou}, \bibinfo{person}{Chengfeng
  Ye}, \bibinfo{person}{Heqing Huang}, \bibinfo{person}{Yuandao Cai}, {and}
  \bibinfo{person}{Charles Zhang}.} \bibinfo{year}{2024}\natexlab{b}.
\newblock \showarticletitle{Plankton: Reconciling Binary Code and Debug
  Information}. In \bibinfo{booktitle}{\emph{Proceedings of the 29th {ACM}
  International Conference on Architectural Support for Programming Languages
  and Operating Systems, Volume 2, {ASPLOS} 2024, La Jolla, CA, USA, 27 April
  2024- 1 May 2024}}, \bibfield{editor}{\bibinfo{person}{Rajiv Gupta},
  \bibinfo{person}{Nael~B. Abu{-}Ghazaleh}, \bibinfo{person}{Madan Musuvathi},
  {and} \bibinfo{person}{Dan Tsafrir}} (Eds.). \bibinfo{publisher}{{ACM}},
  \bibinfo{pages}{912--928}.
\newblock
\href{https://doi.org/10.1145/3620665.3640382}{doi:\nolinkurl{10.1145/3620665.3640382}}


\bibitem[Zhou et~al\mbox{.}(2024a)]{ZhouDZ24}
\bibfield{author}{\bibinfo{person}{Quan Zhou}, \bibinfo{person}{Sixuan Dang},
  {and} \bibinfo{person}{Danfeng Zhang}.} \bibinfo{year}{2024}\natexlab{a}.
\newblock \showarticletitle{CtChecker: A Precise, Sound and Efficient Static
  Analysis for Constant-Time Programming}. In
  \bibinfo{booktitle}{\emph{{ECOOP}}} \emph{(\bibinfo{series}{LIPIcs},
  Vol.~\bibinfo{volume}{313})}. \bibinfo{pages}{46:1--46:26}.
\newblock
\href{https://doi.org/10.4230/LIPICS.ECOOP.2024.46}{doi:\nolinkurl{10.4230/LIPICS.ECOOP.2024.46}}


\bibitem[Zou et~al\mbox{.}(2024)]{DBLP:conf/uss/ZouKWGBT24}
\bibfield{author}{\bibinfo{person}{Muqi Zou}, \bibinfo{person}{Arslan Khan},
  \bibinfo{person}{Ruoyu Wu}, \bibinfo{person}{Han Gao},
  \bibinfo{person}{Antonio Bianchi}, {and} \bibinfo{person}{Dave~(Jing) Tian}.}
  \bibinfo{year}{2024}\natexlab{}.
\newblock \showarticletitle{D-Helix: {A} Generic Decompiler Testing Framework
  Using Symbolic Differentiation}. In \bibinfo{booktitle}{\emph{33rd {USENIX}
  Security Symposium, {USENIX} Security 2024, Philadelphia, PA, USA, August
  14-16, 2024}}, \bibfield{editor}{\bibinfo{person}{Davide Balzarotti} {and}
  \bibinfo{person}{Wenyuan Xu}} (Eds.). \bibinfo{publisher}{{USENIX}
  Association}.
\newblock
\urldef\tempurl \url{https://www.usenix.org/conference/usenixsecurity24/presentation/zou}
\showURL{\tempurl}


\end{thebibliography}

\appendix
\newcommand*{\ellexit}{\ell_{\mathtt{exit}}}

\section{Proof Details for the Transformations in \Cref{sec:prog-optimizations}}\label{appendix:proofs-prog-optimizations}

We provide more detailed definitions and proofs for a selection of the transformations in \Cref{sec:prog-optimizations}.
To avoid repetition we cut the details for the other transformations, and we chose our selection according to their uniqueness.

\subsection{Expression Substitution}
We define $\ensuremath{{\llparenthesis\,{\cdot}\,\rrparenthesis}}$ inductively via
\begin{gather*}
    \begin{align*}
        \ensuremath{{\llparenthesis\,{\ensuremath{{\mathcolor{mauve}{\mathtt{skip}}}}}\,\rrparenthesis}} & \;=\; \ensuremath{{\mathcolor{mauve}{\mathtt{skip}}}}
                     &
        \ensuremath{{\llparenthesis\,{\ensuremath{{{x}\mathcolor{cyan}{=}{\ensuremath{{\{{e_s}\mathbin{\text{\normalfont\guilsinglright}}{e_t}\}}}e}}}}\,\rrparenthesis}} & \;=\;  \ensuremath{{{x}\mathcolor{cyan}{=}{\ensuremath{{{e}[{e_s}\mathbin{\text{\normalfont\guilsinglright}}{e_t}]}}}}}&
        \\
        \ensuremath{{\llparenthesis\,{\ensuremath{{{x}\mathcolor{cyan}{=}{\left[\ensuremath{{\{{e_s}\mathbin{\text{\normalfont\guilsinglright}}{e_t}\}}}e\right]}}}}\,\rrparenthesis}} & \;=\;  \ensuremath{{{x}\mathcolor{cyan}{=}{\left[\ensuremath{{{e}[{e_s}\mathbin{\text{\normalfont\guilsinglright}}{e_t}]}}\right]}}}
                                                &
        \ensuremath{{\llparenthesis\,{\ensuremath{{{\left[\ensuremath{{\{{e_s}\mathbin{\text{\normalfont\guilsinglright}}{e_t}\}}}e\right]}\mathcolor{cyan}{=}{x}}}}\,\rrparenthesis}} & \;=\;  \ensuremath{{{\ensuremath{{{e}[{e_s}\mathbin{\text{\normalfont\guilsinglright}}{e_t}]}}}\mathcolor{cyan}{=}{\left[x\right]}}}&
    \end{align*}
    \\
    \begin{align*}
        \ensuremath{{\llparenthesis\,{\ensuremath{{\mathcolor{mauve}{\mathtt{if}}~{\ensuremath{{\{{e_s}\mathbin{\text{\normalfont\guilsinglright}}{e_t}\}}}e}~\mathcolor{mauve}{\mathtt{then}}~c_\mathsf{t\kern-0.9ptt}~\mathcolor{mauve}{\mathtt{else}}~c_\mathsf{f\kern-1.1ptf}}}}\,\rrparenthesis}} &\;=\;  \ensuremath{{\mathcolor{mauve}{\mathtt{if}}~{\ensuremath{{{e}[{e_s}\mathbin{\text{\normalfont\guilsinglright}}{e_t}]}}}~\mathcolor{mauve}{\mathtt{then}}~\ensuremath{{\llparenthesis\,{c_\mathsf{t\kern-0.9ptt}}\,\rrparenthesis}}~\mathcolor{mauve}{\mathtt{else}}~\ensuremath{{\llparenthesis\,{c_\mathsf{f\kern-1.1ptf}}\,\rrparenthesis}}}}&
        \\
        \ensuremath{{\llparenthesis\,{\ensuremath{{\mathcolor{mauve}{\mathtt{while}}~{\ensuremath{{\{{e_s}\mathbin{\text{\normalfont\guilsinglright}}{e_t}\}}}e}~\mathcolor{mauve}{\mathtt{do}}~{c}}}}\,\rrparenthesis}} &\;=\;  \ensuremath{{\mathcolor{mauve}{\mathtt{while}}~{\ensuremath{{{e}[{e_s}\mathbin{\text{\normalfont\guilsinglright}}{e_t}]}}}~\mathcolor{mauve}{\mathtt{do}}~{\ensuremath{{\llparenthesis\,{c_\mathsf{t\kern-0.9ptt}}\,\rrparenthesis}}}}}&
        \\
        \ensuremath{{\llparenthesis\,{c_1 \mathop{;} c_2}\,\rrparenthesis}} &\;=\;  \ensuremath{{\llparenthesis\,{c_1}\,\rrparenthesis}} \mathop{;} \ensuremath{{\llparenthesis\,{c_2}\,\rrparenthesis}} &
    \end{align*}
\end{gather*}
where $\ensuremath{{{e}[{e_s}\mathbin{\text{\normalfont\guilsinglright}}{e_t}]}}$ is standard substitution of every occurrence of $e_s$ in $e$ by $e_t$.

We defined $\sim$ to hold only at $\ensuremath{{\langle c, \ensuremath{{\rho}}, \ensuremath{{\mu}} \rangle}} \sim \ensuremath{{\langle \ensuremath{{\llparenthesis\,{c}\,\rrparenthesis}}, \ensuremath{{\rho}}, \ensuremath{{\mu}} \rangle}}$
and the leakage transformer as \ensuremath{{\Tobs{}{o_s} = o_s}}.

\begin{proof}[Proof of \Cref{thm:ct:lock-step-soundness:simulation}]
    Consider $s \sim t$, i.e., $s = \ensuremath{{\langle c, \ensuremath{{\rho}}, \ensuremath{{\mu}} \rangle}}$ and $t = \ensuremath{{\langle \ensuremath{{\llparenthesis\,{c}\,\rrparenthesis}}, \ensuremath{{\rho}}, \ensuremath{{\mu}} \rangle}}$,
    and a step $t\step{}{o_t}t' = \ensuremath{{\langle c', \ensuremath{{\rho}}', \ensuremath{{\mu}}' \rangle}}$.
    We prove that $s \step{}{o_t} s'$ with $s' \sim t'$ by induction on $c$ ($\Tobs{}{o} = o$).
    \begin{enumerate}[label=$\blacktriangleright$,leftmargin=*]
        \item \emph{\refassign}\hskip 1.2em Then, $c \;=\; \ensuremath{{{x}\mathcolor{cyan}{=}{\ensuremath{{\{{e_s}\mathbin{\text{\normalfont\guilsinglright}}{e_t}\}}}e}}}$ and $\ensuremath{{\llparenthesis\,{c}\,\rrparenthesis}} \;= \;\ensuremath{{{x}\mathcolor{cyan}{=}{\ensuremath{{{e}[{e_s}\mathbin{\text{\normalfont\guilsinglright}}{e_t}]}}}}}$.
        By definition, $\ensuremath{{\mu}} = \ensuremath{{\mu}}'$, and $\ensuremath{{\rho}}' = \ensuremath{{{\ensuremath{{\rho}}}[{x}\mathbin{\text{\normalfont\guilsinglright}}{\ensuremath{{\llbracket\, \ensuremath{{{e}[{e_s}\mathbin{\text{\normalfont\guilsinglright}}{e_t}]}} \,\rrbracket_{\ensuremath{{\rho}}}}}}]}}$ and $o_t = \ensuremath{{\bullet}}$.
        Due to correctness of the annotation, \Cref{prop:expr-subst-guarantee}, we have that $\ensuremath{{\llbracket\, e \,\rrbracket_{\ensuremath{{\rho}}}}} = \ensuremath{{\llbracket\, \ensuremath{{{e}[{e_s}\mathbin{\text{\normalfont\guilsinglright}}{e_t}]}} \,\rrbracket_{\ensuremath{{\rho}}}}}$.
        Thus, $s \step{}{\ensuremath{{\bullet}}} \ensuremath{{\langle c', \ensuremath{{\rho}}', \ensuremath{{\mu}}' \rangle}} = s'$, and $s' \sim t'$ as required.
        \item \emph{\refload}\hskip 1.2em Then, $c \;=\; \ensuremath{{{x}\mathcolor{cyan}{=}{\left[\ensuremath{{\{{e_s}\mathbin{\text{\normalfont\guilsinglright}}{e_t}\}}}e\right]}}}$ and $\ensuremath{{\llparenthesis\,{c}\,\rrparenthesis}} \;= \;\ensuremath{{{x}\mathcolor{cyan}{=}{\left[\ensuremath{{{e}[{e_s}\mathbin{\text{\normalfont\guilsinglright}}{e_t}]}}\right]}}}$.
        By definition, $\ensuremath{{\mu}} = \ensuremath{{\mu}}'$, $a = \ensuremath{{\llbracket\, \ensuremath{{{e}[{e_s}\mathbin{\text{\normalfont\guilsinglright}}{e_t}]}} \,\rrbracket_{\ensuremath{{\rho}}}}}$ and $\ensuremath{{\rho}}' = \ensuremath{{{\ensuremath{{\rho}}}[{x}\mathbin{\text{\normalfont\guilsinglright}}{\ensuremath{{\mu}}(a)}]}}$ and $o_t = \ensuremath{{\mathsf{adr}\; a}}$.
        Due to \Cref{prop:expr-subst-guarantee}, we have that $\ensuremath{{\llbracket\, e \,\rrbracket_{\ensuremath{{\rho}}}}} = \ensuremath{{\llbracket\, \ensuremath{{{e}[{e_s}\mathbin{\text{\normalfont\guilsinglright}}{e_t}]}} \,\rrbracket_{\ensuremath{{\rho}}}}} = a$.
        Thus, $s \step{}{\ensuremath{{\mathsf{adr}\; a}}} \ensuremath{{\langle c', \ensuremath{{\rho}}', \ensuremath{{\mu}}' \rangle}} = s'$, and $s' \sim t'$.
        \item \emph{\refstore}\hskip 1.2em Then, $c \;=\; \ensuremath{{{\left[\ensuremath{{\{{e_s}\mathbin{\text{\normalfont\guilsinglright}}{e_t}\}}}e\right]}\mathcolor{cyan}{=}{x}}}$ and $\ensuremath{{\llparenthesis\,{c}\,\rrparenthesis}} \;= \;\ensuremath{{{\left[\ensuremath{{{e}[{e_s}\mathbin{\text{\normalfont\guilsinglright}}{e_t}]}}\right]}\mathcolor{cyan}{=}{x}}}$.
        By definition, $\ensuremath{{\rho}} = \ensuremath{{\rho}}'$, $a = \ensuremath{{\llbracket\, \ensuremath{{{e}[{e_s}\mathbin{\text{\normalfont\guilsinglright}}{e_t}]}} \,\rrbracket_{\ensuremath{{\rho}}}}}$ and $\ensuremath{{\mu}}' = \ensuremath{{{\ensuremath{{\mu}}}[{a}\mathbin{\text{\normalfont\guilsinglright}}{\ensuremath{{\rho}}(x)}]}}$ and $o_t = \ensuremath{{\mathsf{adr}\; a}}$.
        Due to \Cref{prop:expr-subst-guarantee}, we have that $\ensuremath{{\llbracket\, e \,\rrbracket_{\ensuremath{{\rho}}}}} = \ensuremath{{\llbracket\, \ensuremath{{{e}[{e_s}\mathbin{\text{\normalfont\guilsinglright}}{e_t}]}} \,\rrbracket_{\ensuremath{{\rho}}}}} = a$.
        Thus, $s \step{}{\ensuremath{{\mathsf{adr}\; a}}} \ensuremath{{\langle c', \ensuremath{{\rho}}', \ensuremath{{\mu}}' \rangle}} = s'$, and $s' \sim t'$.
        \item \emph{\refif}\hskip 1.2em Then, $c \;=\; \ensuremath{{\mathcolor{mauve}{\mathtt{if}}~{\ensuremath{{\{{e_s}\mathbin{\text{\normalfont\guilsinglright}}{e_t}\}}}e}~\mathcolor{mauve}{\mathtt{then}}~c_\mathsf{t\kern-0.9ptt}~\mathcolor{mauve}{\mathtt{else}}~c_\mathsf{f\kern-1.1ptf}}}$ and $\ensuremath{{\llparenthesis\,{c}\,\rrparenthesis}} \;= \;\ensuremath{{\mathcolor{mauve}{\mathtt{if}}~{\ensuremath{{{e}[{e_s}\mathbin{\text{\normalfont\guilsinglright}}{e_t}]}}}~\mathcolor{mauve}{\mathtt{then}}~\ensuremath{{\llparenthesis\,{c_\mathsf{t\kern-0.9ptt}}\,\rrparenthesis}}~\mathcolor{mauve}{\mathtt{else}}~\ensuremath{{\llparenthesis\,{c_\mathsf{f\kern-1.1ptf}}\,\rrparenthesis}}}}$.
        By definition, $\ensuremath{{\rho}} = \ensuremath{{\rho}}'$, $\ensuremath{{\mu}} = \ensuremath{{\mu}}'$, and $b = \ensuremath{{\llbracket\, \ensuremath{{{e}[{e_s}\mathbin{\text{\normalfont\guilsinglright}}{e_t}]}} \,\rrbracket_{\ensuremath{{\rho}}}}}$ chooses $c'' = \ensuremath{{\llparenthesis\,{c_b}\,\rrparenthesis}}$ and $o_t = \ensuremath{{\mathsf{br}\; b}}$.
        Due to \Cref{prop:expr-subst-guarantee}, we have that $\ensuremath{{\llbracket\, e \,\rrbracket_{\ensuremath{{\rho}}}}} = \ensuremath{{\llbracket\, \ensuremath{{{e}[{e_s}\mathbin{\text{\normalfont\guilsinglright}}{e_t}]}} \,\rrbracket_{\ensuremath{{\rho}}}}} = b$.
        Thus, $s \step{}{\ensuremath{{\mathsf{br}\; b}}} \ensuremath{{\langle c_b, \ensuremath{{\rho}}, \ensuremath{{\mu}} \rangle}} = s'$, and $s' \sim t'$.
        \item \emph{\refwhile}\hskip 1.2em Then, $c \;=\; \ensuremath{{\mathcolor{mauve}{\mathtt{while}}~{\ensuremath{{\{{e_s}\mathbin{\text{\normalfont\guilsinglright}}{e_t}\}}}e}~\mathcolor{mauve}{\mathtt{do}}~{c_\mathsf{t\kern-0.9ptt}}}}$ and $\ensuremath{{\llparenthesis\,{c}\,\rrparenthesis}} \;= \;\ensuremath{{\mathcolor{mauve}{\mathtt{while}}~{\ensuremath{{{e}[{e_s}\mathbin{\text{\normalfont\guilsinglright}}{e_t}]}}}~\mathcolor{mauve}{\mathtt{do}}~{\ensuremath{{\llparenthesis\,{c_\mathsf{t\kern-0.9ptt}}\,\rrparenthesis}}}}}$.
        By definition, $\ensuremath{{\rho}} = \ensuremath{{\rho}}'$, $\ensuremath{{\mu}} = \ensuremath{{\mu}}'$, and $b = \ensuremath{{\llbracket\, \ensuremath{{{e}[{e_s}\mathbin{\text{\normalfont\guilsinglright}}{e_t}]}} \,\rrbracket_{\ensuremath{{\rho}}}}}$ and $o_t = \ensuremath{{\mathsf{br}\; b}}$.
        Due to \Cref{prop:expr-subst-guarantee}, we have that $\ensuremath{{\llbracket\, e \,\rrbracket_{\ensuremath{{\rho}}}}} = \ensuremath{{\llbracket\, \ensuremath{{{e}[{e_s}\mathbin{\text{\normalfont\guilsinglright}}{e_t}]}} \,\rrbracket_{\ensuremath{{\rho}}}}} = b$.
        If $b = \mathsf{t\kern-0.9ptt}$ chooses $c' = \ensuremath{{\llparenthesis\,{c_\mathsf{t\kern-0.9ptt}}\,\rrparenthesis}} \mathop{;} \ensuremath{{\mathcolor{mauve}{\mathtt{while}}~{\ensuremath{{{e}[{e_s}\mathbin{\text{\normalfont\guilsinglright}}{e_t}]}}}~\mathcolor{mauve}{\mathtt{do}}~{\ensuremath{{\llparenthesis\,{c_\mathsf{t\kern-0.9ptt}}\,\rrparenthesis}}}}}$, then
        Thus, $s \step{}{\ensuremath{{\mathsf{br}\; \mathsf{t\kern-0.9ptt}}}} \ensuremath{{\langle c_\mathsf{t\kern-0.9ptt}\mathop{;}\ensuremath{{\mathcolor{mauve}{\mathtt{while}}~{\ensuremath{{\{{e_s}\mathbin{\text{\normalfont\guilsinglright}}{e_t}\}}}e}~\mathcolor{mauve}{\mathtt{do}}~{c}}}{c_\mathsf{t\kern-0.9ptt}}, \ensuremath{{\rho}}, \ensuremath{{\mu}} \rangle}} = s'$, and $s' \sim t'$.
        Similarly, if $b = \mathsf{f\kern-1.1ptf}$ chooses $c' = \ensuremath{{\mathcolor{mauve}{\mathtt{skip}}}}$, then 
        $s \step{}{\ensuremath{{\mathsf{br}\; \mathsf{f\kern-1.1ptf}}}} \ensuremath{{\langle \ensuremath{{\mathcolor{mauve}{\mathtt{skip}}}}, \ensuremath{{\rho}}, \ensuremath{{\mu}} \rangle}} = s'$, and $s' \sim t'$.
        \item \emph{\refseq}\hskip 1.2em Then, $c = c_1 \mathop{;} c_2$ and $\ensuremath{{\llparenthesis\,{c}\,\rrparenthesis}} = \ensuremath{{\llparenthesis\,{c_1}\,\rrparenthesis}}{c_2}$ and $\ensuremath{{\langle \ensuremath{{\llparenthesis\,{c_1}\,\rrparenthesis}}, \ensuremath{{\rho}}, \ensuremath{{\mu}} \rangle}} \step{}{o_t} \ensuremath{{\langle c_1'', \ensuremath{{\rho}}', \ensuremath{{\mu}}' \rangle}}$.
        By induction hypothesis, $\ensuremath{{\langle c_1, \ensuremath{{\rho}}, \ensuremath{{\mu}} \rangle}} \step{}{o_t} \ensuremath{{\langle c_1', \ensuremath{{\rho}}', \ensuremath{{\mu}}' \rangle}}$ so that $\ensuremath{{\langle c_1', \ensuremath{{\rho}}', \ensuremath{{\mu}}' \rangle}} \sim \ensuremath{{\langle c_1'', \ensuremath{{\rho}}', \ensuremath{{\mu}}' \rangle}}$, i.e., $c_1'' = \ensuremath{{\llparenthesis\,{c_1'}\,\rrparenthesis}}$.
        Finally, we have~$s' = \ensuremath{{\langle c_1' \mathop{;} c_2, \ensuremath{{\rho}}', \ensuremath{{\mu}}' \rangle}} \sim \ensuremath{{\langle \ensuremath{{\llparenthesis\,{c_1'}\,\rrparenthesis}}\mathop{;}\ensuremath{{\llparenthesis\,{c_2}\,\rrparenthesis}}, \ensuremath{{\rho}}', \ensuremath{{\mu}}' \rangle}} = t'$. \qedhere
    \end{enumerate}
\end{proof}
\subsection{Dead Branch Elimination}
\label{appendix:dead-branch-elimination}

We define $\ensuremath{{\llparenthesis\,{\cdot}\,\rrparenthesis}}$ formally, where $e' \notin \ensuremath{{\{\mathsf{t\kern-0.9ptt}, \mathsf{f\kern-1.1ptf}\}}}$ is no constant boolean, whereas $b \in \ensuremath{{\{\mathsf{t\kern-0.9ptt}, \mathsf{f\kern-1.1ptf}\}}}$ is:
\begin{gather*}
    \begin{align*}
        \ensuremath{{\llparenthesis\,{\ensuremath{{\mathcolor{mauve}{\mathtt{skip}}}}}\,\rrparenthesis}} & \;=\; \ensuremath{{\mathcolor{mauve}{\mathtt{skip}}}}
                     &
        \ensuremath{{\llparenthesis\,{\ensuremath{{{x}\mathcolor{cyan}{=}{e}}}}\,\rrparenthesis}} & \;=\;  \ensuremath{{{x}\mathcolor{cyan}{=}{e}}}&
        \\
        \ensuremath{{\llparenthesis\,{\ensuremath{{{x}\mathcolor{cyan}{=}{\left[e\right]}}}}\,\rrparenthesis}} & \;=\;  \ensuremath{{{x}\mathcolor{cyan}{=}{\left[e\right]}}}
                                                &
        \ensuremath{{\llparenthesis\,{\ensuremath{{{\left[e\right]}\mathcolor{cyan}{=}{x}}}}\,\rrparenthesis}} & \;=\;  \ensuremath{{{e}\mathcolor{cyan}{=}{\left[x\right]}}}&
    \end{align*}
    \\
    \begin{align*}
        \ensuremath{{\llparenthesis\,{\ensuremath{{\mathcolor{mauve}{\mathtt{if}}~{b}~\mathcolor{mauve}{\mathtt{then}}~c_\mathsf{t\kern-0.9ptt}~\mathcolor{mauve}{\mathtt{else}}~c_\mathsf{f\kern-1.1ptf}}}}\,\rrparenthesis}} &\;=\;  \ensuremath{{\llparenthesis\,{c_b}\,\rrparenthesis}}&
        \\
        \ensuremath{{\llparenthesis\,{\ensuremath{{\mathcolor{mauve}{\mathtt{if}}~{e'}~\mathcolor{mauve}{\mathtt{then}}~c_\mathsf{t\kern-0.9ptt}~\mathcolor{mauve}{\mathtt{else}}~c_\mathsf{f\kern-1.1ptf}}}}\,\rrparenthesis}} &\;=\;  \ensuremath{{\mathcolor{mauve}{\mathtt{if}}~{e}~\mathcolor{mauve}{\mathtt{then}}~\ensuremath{{\llparenthesis\,{c_\mathsf{t\kern-0.9ptt}}\,\rrparenthesis}}~\mathcolor{mauve}{\mathtt{else}}~\ensuremath{{\llparenthesis\,{c_\mathsf{f\kern-1.1ptf}}\,\rrparenthesis}}}}&
        \\
        \ensuremath{{\llparenthesis\,{\ensuremath{{\mathcolor{mauve}{\mathtt{while}}~{e}~\mathcolor{mauve}{\mathtt{do}}~{c}}}}\,\rrparenthesis}} &\;=\;  \ensuremath{{\mathcolor{mauve}{\mathtt{while}}~{e}~\mathcolor{mauve}{\mathtt{do}}~{\ensuremath{{\llparenthesis\,{c}\,\rrparenthesis}}}}}&
        \\
        \ensuremath{{\llparenthesis\,{c_1 \mathop{;} c_2}\,\rrparenthesis}} &\;=\;  \ensuremath{{\llparenthesis\,{c_1}\,\rrparenthesis}} \mathop{;} \ensuremath{{\llparenthesis\,{c_2}\,\rrparenthesis}} &
    \end{align*}
\end{gather*}

We defined $\sim$ to hold for $\ensuremath{{\langle c, \ensuremath{{\rho}}, \ensuremath{{\mu}} \rangle}} \sim \ensuremath{{\langle \ensuremath{{\llparenthesis\,{c}\,\rrparenthesis}}, \ensuremath{{\rho}}, \ensuremath{{\mu}} \rangle}}$,
i.e., variable and memory coincides.
Transformer $\Tobsname{}$, number-of-steps function $\ensuremath{{\mathsf{ns}}}$, and suffixes $\ensuremath{{\mathsf{sf}}}$ are defined in \Cref{fig:sim-dbe}.

\begin{proof}[Proof of \Cref{thm:ct:soundness:simulation}]
    Let $s = \ensuremath{{\langle c, \ensuremath{{\rho}}, \ensuremath{{\mu}} \rangle}} \sim \ensuremath{{\langle \ensuremath{{\llparenthesis\,{c}\,\rrparenthesis}}, \ensuremath{{\rho}}, \ensuremath{{\mu}} \rangle}} = t$.

    For the first requirement of \Cref{def:ct:diagram}, let further $t \step{}{o_t} t' = \ensuremath{{\langle c'', \ensuremath{{\rho}}', \ensuremath{{\mu}}' \rangle}}$.
    We apply case distinction on $c$,
    which is either $c \neq \ensuremath{{\mathcolor{mauve}{\mathtt{if}}~{b}~\mathcolor{mauve}{\mathtt{then}}~c_\mathsf{t\kern-0.9ptt}~\mathcolor{mauve}{\mathtt{else}}~c_\mathsf{f\kern-1.1ptf}}}$,
    or it is $c = \ensuremath{{\mathcolor{mauve}{\mathtt{if}}~{b}~\mathcolor{mauve}{\mathtt{then}}~c_\mathsf{t\kern-0.9ptt}~\mathcolor{mauve}{\mathtt{else}}~c_\mathsf{f\kern-1.1ptf}}}$ with $\ensuremath{{\llparenthesis\,{c}\,\rrparenthesis}} = \ensuremath{{\llparenthesis\,{c_b}\,\rrparenthesis}}$.

    \begin{enumerate}[label=$\blacktriangleright$,leftmargin=*]
        \item \emph{$c \neq \ensuremath{{\mathcolor{mauve}{\mathtt{if}}~{b}~\mathcolor{mauve}{\mathtt{then}}~c_\mathsf{t\kern-0.9ptt}~\mathcolor{mauve}{\mathtt{else}}~c_\mathsf{f\kern-1.1ptf}}}$}\hskip 1.2em then, by definition of $\ensuremath{{\llparenthesis\,{\cdot}\,\rrparenthesis}}$, $s$ executes the same instruction,
        i.e., $s \step{}{o_t} \ensuremath{{\langle c', \ensuremath{{\rho}}', \ensuremath{{\mu}}' \rangle}} = s'$ and $\ensuremath{{\llparenthesis\,{c'}\,\rrparenthesis}} = c''$, so $s' \sim t'$.
        \item \emph{$c = \ensuremath{{\mathcolor{mauve}{\mathtt{if}}~{b}~\mathcolor{mauve}{\mathtt{then}}~c_\mathsf{t\kern-0.9ptt}~\mathcolor{mauve}{\mathtt{else}}~c_\mathsf{f\kern-1.1ptf}}}$}\hskip 1.2em we have $s \step{}{\ensuremath{{\mathsf{br}\; b}}} \ensuremath{{\langle c_b, \ensuremath{{\rho}}, \ensuremath{{\mu}} \rangle}}$.
        Due to $\ensuremath{{\llparenthesis\,{c_b}\,\rrparenthesis}} = \ensuremath{{\llparenthesis\,{c}\,\rrparenthesis}}$, and thus $\ensuremath{{\langle c_b, \ensuremath{{\rho}}, \ensuremath{{\mu}} \rangle}} \sim t$, we can rely on induction to provide a sequence $\ensuremath{{\langle c_b, \ensuremath{{\rho}}, \ensuremath{{\mu}} \rangle}} \step*{}{\ensuremath{{\boldsymbol{o}}}} s'$
        with $s' \sim t'$ and $\Tobs{c_b}{\ensuremath{{\boldsymbol{o}}}} = o_t$.
        We add the first step to obtain $s \step*{}{\ensuremath{{\mathsf{br}\; b}} \mathop{\cdot} \ensuremath{{\boldsymbol{o}}}} s'$ and, by definition of $\Tobsname{}$, $\Tobs{c}{\ensuremath{{\mathsf{br}\; b}} \mathop{\cdot} \ensuremath{{\boldsymbol{o}}}} = o_t$.
    \end{enumerate}

    For the second requirement of \Cref{def:ct:diagram}, let $t$ be final, i.e.,
    $t = \ensuremath{{\langle \ensuremath{{\mathcolor{mauve}{\mathtt{skip}}}}, \ensuremath{{\rho}}, \ensuremath{{\mu}} \rangle}}$.
    We do case distinction on $c$ as well, where either $c = \ensuremath{{\mathcolor{mauve}{\mathtt{skip}}}}$ as well,
    or $c = \ensuremath{{\mathcolor{mauve}{\mathtt{if}}~{b}~\mathcolor{mauve}{\mathtt{then}}~c_\mathsf{t\kern-0.9ptt}~\mathcolor{mauve}{\mathtt{else}}~c_\mathsf{f\kern-1.1ptf}}}$ and again $\ensuremath{{\llparenthesis\,{c_b}\,\rrparenthesis}} = \ensuremath{{\mathcolor{mauve}{\mathtt{skip}}}}$.

    \begin{enumerate}[label=$\blacktriangleright$,leftmargin=*]
        \item \emph{$c = \ensuremath{{\mathcolor{mauve}{\mathtt{skip}}}}$}\hskip 1.2em Trivial.
        \item \emph{$c = \ensuremath{{\mathcolor{mauve}{\mathtt{if}}~{b}~\mathcolor{mauve}{\mathtt{then}}~c_\mathsf{t\kern-0.9ptt}~\mathcolor{mauve}{\mathtt{else}}~c_\mathsf{f\kern-1.1ptf}}}$ and $\ensuremath{{\llparenthesis\,{c_b}\,\rrparenthesis}} = \ensuremath{{\llparenthesis\,{c}\,\rrparenthesis}}$}\hskip 1.2em
        Again, we have $s \step{}{\ensuremath{{\mathsf{br}\; b}}} \ensuremath{{\langle c_b, \ensuremath{{\rho}}, \ensuremath{{\mu}} \rangle}}$ and $\ensuremath{{\langle c_b, \ensuremath{{\rho}}, \ensuremath{{\mu}} \rangle}} \sim t$.
        Further, $\ensuremath{{\ensuremath{{\mathsf{sf}}}({c})}} = \ensuremath{{\mathsf{br}\; b}} \mathop{\cdot} \ensuremath{{\ensuremath{{\mathsf{sf}}}({c_b})}}$.
        Induction yields $\ensuremath{{\langle c_b, \ensuremath{{\rho}}, \ensuremath{{\mu}} \rangle}} \step*{}{\ensuremath{{\ensuremath{{\mathsf{sf}}}({c_b})}}} s'$ with $s' = \ensuremath{{\langle \ensuremath{{\mathcolor{mauve}{\mathtt{skip}}}}, \ensuremath{{\rho}}, \ensuremath{{\mu}} \rangle}}$.
        We stitch them together for $s \step*{}{\ensuremath{{\mathsf{br}\; b}}\mathop{\cdot}\ensuremath{{\ensuremath{{\mathsf{sf}}}({c_b})}}} s'$, and by definition $\ensuremath{{\ensuremath{{\mathsf{sf}}}({c})}} = \ensuremath{{\mathsf{br}\; b}}\mathop{\cdot}\ensuremath{{\ensuremath{{\mathsf{sf}}}({c_b})}}$.\qedhere
    \end{enumerate}
\end{proof}
\subsection{Dead Assignment Elimination}
\label{appendix:dead-assignment-elimination}

\newcommand{\annot}[1]{\{\,#1\,\}}
\newcommand{\cnilannot}[1]{\annot{#1}\ensuremath{{\mathcolor{mauve}{\mathtt{skip}}}}}
\newcommand{\iassignannot}[3]{\annot{#3}\ensuremath{{{#1}\mathcolor{cyan}{=}{#2}}}}
\newcommand{\iloadannot}[3]{\annot{#3}\ensuremath{{{#1}\mathcolor{cyan}{=}{\left[#2\right]}}}}
\newcommand{\istoreannot}[3]{\annot{#3}\ensuremath{{{\left[#1\right]}\mathcolor{cyan}{=}{#2}}}}
\newcommand{\iifannot}[4]{\annot{#4}\ensuremath{{\mathcolor{mauve}{\mathtt{if}}~{#1}~\mathcolor{mauve}{\mathtt{then}}~#2~\mathcolor{mauve}{\mathtt{else}}~#3}}}
\newcommand{\iwhileannot}[3]{\annot{#3}\ensuremath{{\mathcolor{mauve}{\mathtt{while}}~{#1}~\mathcolor{mauve}{\mathtt{do}}~{#2}}}}
\newcommand{\headinst}[1]{\mathit{hd}(#1)}
\newcommand{\cfreach}[1]{\mathit{cfr}(#1)}
As a matter of fact, the analysis of the input program must label every program point of the program with the analysis result.
An annotated input program thus stems from the syntax:
\begin{align}
    a &\Coloneqq \iassignannot{x}{e}{D} \mid \iloadannot{x}{e}{D} \mid \istoreannot{e}{x}{D} \\
    c &\Coloneqq \cnilannot{D} \mid c \mathop{;} c \mid a \mid \iifannot{e}{c_\mathsf{t\kern-0.9ptt}}{c_\mathsf{f\kern-1.1ptf}}{D} \mid \iwhileannot{e}{c}{D},
\end{align}
where $a$ are instructions as before and $D \subseteq \ensuremath{{\mathit{Var}}}\cup\ensuremath{{\mathit{Adr}}}$ is the set of annotated dead variables and memory addresses.
The annotations do not modify the semantics.

To express the correctness guarantee of Dead Assignment Elimination, we need to define the head instruction $\headinst{c}$ of a program $c$.
It is the next instruction that will be executed. 
\begin{align*}
    \headinst{c_1\mathop{;} c_2} &= \headinst{c_1} & \headinst{c} &= c \quad\text{for all $c \neq c_1 \mathop{;} c_2$}\,.
\end{align*}
We write $\annot{D}c$ for a command $c$, where $\headinst{c}$ is annotated by $D$.
Further, we define $=_D$, which expresses equality of two states up to a set of dead variables and memory addresses.
\begin{align*}
    \ensuremath{{\rho}} &=_D \ensuremath{{\rho}}' & \text{when } & \forall x\in\ensuremath{{\mathit{Var}}} \setminus D. \ensuremath{{\rho}}(x) = \ensuremath{{\rho}}'(x) \\
    \ensuremath{{\mu}} &=_D \ensuremath{{\mu}}' & \text{when } & \forall n\in\ensuremath{{\mathit{Adr}}} \setminus D. \ensuremath{{\mu}}(n) = \ensuremath{{\mu}}'(n)\,.
\end{align*}
At last, we define, when a program point is control flow reachable in program $P$.
Intuitively, $c'$ is control flow reachable from $c$ by unfolding $c$ and removing instructions from the front.
Formally, we define the control flow reachable set $\cfreach{c}$ as the smallest set, so that:
\begin{align*}
    &&&&c & \in \cfreach{c} \\
    \ensuremath{{\mathcolor{mauve}{\mathtt{if}}~{e}~\mathcolor{mauve}{\mathtt{then}}~c_{\mathsf{t\kern-0.9ptt}}~\mathcolor{mauve}{\mathtt{else}}~c_{\mathsf{f\kern-1.1ptf}}}} & \in \cfreach{c} &&\implies & c_\mathsf{t\kern-0.9ptt}, c_\mathsf{f\kern-1.1ptf} & \in \cfreach{c} \\
    \ensuremath{{\mathcolor{mauve}{\mathtt{while}}~{e}~\mathcolor{mauve}{\mathtt{do}}~{c'}}} & \in \cfreach{c} &&\implies & c'& \in \cfreach{c} \\
    c_1 \mathop{;} c_2 & \in \cfreach{c} &&\implies & \cfreach{c_1} \mathop{;} c_2 & \subseteq \cfreach{c} \land \cfreach{c_2} \subseteq \cfreach{c}\,.
\end{align*}

The input program annotations need to satisfy the following correctness guarantee:
\begin{proposition}
    For all $\annot{D_1}c_1 \in \cfreach{P}$, when
    $\ensuremath{{\langle \annot{D_1}c_1, \ensuremath{{\rho}}_1, \ensuremath{{\mu}}_1 \rangle}} \step{}{o} \ensuremath{{\langle \annot{D_2}c_2, \ensuremath{{\rho}}_2, \ensuremath{{\mu}}_2 \rangle}}$ holds,
    then for any $\ensuremath{{\rho}}_1' =_{D_1} \ensuremath{{\rho}}_1$, $\ensuremath{{\mu}}_1' =_{D_1} \ensuremath{{\mu}}_1$,
    we have $\ensuremath{{\langle \annot{D_1}c_1, \ensuremath{{\rho}}_1', \ensuremath{{\mu}}_1' \rangle}} \step{}{o} \ensuremath{{\langle \annot{D_2}c_2, \ensuremath{{\rho}}_2', \ensuremath{{\mu}}_2' \rangle}}$
    with $\ensuremath{{\rho}}_2 =_{D_2} \ensuremath{{\rho}}_2'$, $\ensuremath{{\mu}}_2 =_{D_2} \ensuremath{{\mu}}_2'$.
\end{proposition}

We define $\ensuremath{{\llparenthesis\,{\cdot}\,\rrparenthesis}}$.
Note, that to eliminate an assignment, the variable needs to be dead \emph{after} the assignment.
To simplify the presentation, we only eliminate assignments which are at the left of $\mathop{;}$, where the succeeding set of dead variables is easily attainable.
\begin{gather*}
    \begin{align*}
        \ensuremath{{\llparenthesis\,{\cnilannot{D}}\,\rrparenthesis}} & \;=\; \ensuremath{{\mathcolor{mauve}{\mathtt{skip}}}}
                     &
        \ensuremath{{\llparenthesis\,{\iassignannot{x}{e}{D}}\,\rrparenthesis}} & = \ensuremath{{{x}\mathcolor{cyan}{=}{e}}}
        \\
            \ensuremath{{\llparenthesis\,{\iloadannot{x}{e}{D}}\,\rrparenthesis}} & \;=\;  \ensuremath{{{x}\mathcolor{cyan}{=}{\left[e\right]}}}
                                                &
            \ensuremath{{\llparenthesis\,{\istoreannot{e}{x}{D}}\,\rrparenthesis}} & \;=\;  \ensuremath{{{e}\mathcolor{cyan}{=}{\left[x\right]}}}&
    \end{align*}
    \\
    \begin{align*}
        \ensuremath{{\llparenthesis\,{\iifannot{e}{c_\mathsf{t\kern-0.9ptt}}{c_\mathsf{f\kern-1.1ptf}}{D}}\,\rrparenthesis}} &\;=\;  \ensuremath{{\mathcolor{mauve}{\mathtt{if}}~{e}~\mathcolor{mauve}{\mathtt{then}}~c_\mathsf{t\kern-0.9ptt}~\mathcolor{mauve}{\mathtt{else}}~c_\mathsf{f\kern-1.1ptf}}}&
        \\
        \ensuremath{{\llparenthesis\,{\iwhileannot{e}{c}{D}}\,\rrparenthesis}} &\;=\;  \ensuremath{{\mathcolor{mauve}{\mathtt{while}}~{e}~\mathcolor{mauve}{\mathtt{do}}~{\ensuremath{{\llparenthesis\,{c}\,\rrparenthesis}}}}} \\
        \ensuremath{{\llparenthesis\,{c_1 \mathop{;} c_2}\,\rrparenthesis}} &\;=\;  \ensuremath{{\llparenthesis\,{c_1}\,\rrparenthesis}} \mathop{;} \ensuremath{{\llparenthesis\,{c_2}\,\rrparenthesis}} \qquad\text{where $c_1 \neq \iassignannot{x}{e}{D}$}\\
        \ensuremath{{\llparenthesis\,{\iassignannot{x}{e}{D}\mathop{;}\annot{D'}{c}}\,\rrparenthesis}} & \;=\; 
        \begin{cases}
                \ensuremath{{{x}\mathcolor{cyan}{=}{e}}}\mathop{;}\ensuremath{{\llparenthesis\,{c}\,\rrparenthesis}}& x \notin D'\\
                \ensuremath{{\llparenthesis\,{c}\,\rrparenthesis}}& x\in D'
            \end{cases}&
    \end{align*}
\end{gather*}

We define $\sim$ to hold at
\begin{align*}
    \ensuremath{{\langle \annot{D}c, \ensuremath{{\rho}}_s, \ensuremath{{\mu}}_s \rangle}} &\sim \ensuremath{{\langle \ensuremath{{\llparenthesis\,{c}\,\rrparenthesis}}, \ensuremath{{\rho}}_t, \ensuremath{{\mu}}_t \rangle}} &&\text{when} && \ensuremath{{\rho}}_s =_D \ensuremath{{\rho}}_t \land \ensuremath{{\mu}}_s =_D \ensuremath{{\mu}}_t \land \annot{D}c\in\cfreach{P}\,.
\end{align*}

Transformer $\Tobsname{}$, number-of-steps function $\ensuremath{{\mathsf{ns}}}$, and suffixes $\ensuremath{{\mathsf{sf}}}$ are defined in \Cref{fig:sim-dae}.

\begin{proof}[Proof of \Cref{thm:ct:soundness:simulation}] 
Let $s = \ensuremath{{\langle \annot{D}c, \ensuremath{{\rho}}_s, \ensuremath{{\mu}}_s \rangle}} \sim \ensuremath{{\langle \ensuremath{{\llparenthesis\,{c}\,\rrparenthesis}}, \ensuremath{{\rho}}_t, \ensuremath{{\mu}}_t \rangle}} = t$.

For the first requirement of \Cref{def:ct:diagram}, let further $t \step{}{o_t} t' = \ensuremath{{\langle c'', \ensuremath{{\rho}}_t', \ensuremath{{\mu}}_t' \rangle}}$.
We apply case distinction on $c$,
which is either $c = \iassignannot{x}{e}{D}\mathop{;} \annot{D'}c'$ with $x\in D'$,
or not, in which case $\ensuremath{{\llparenthesis\,{c}\,\rrparenthesis}} = \ensuremath{{\llparenthesis\,{c'}\,\rrparenthesis}}$, or $\headinst{c}$ is equal to $\headinst{\ensuremath{{\llparenthesis\,{c}\,\rrparenthesis}}}$ except that there are no annotations and the embedded subcommands are compiled as well.
The first case can only happen a finite number of times, before $\headinst{c'}$ is not a dead assignment.
We apply induction using the second case as the base case.
\begin{enumerate}[label=$\blacktriangleright$,leftmargin=*]
    \item \emph{$c \neq \iassignannot{x}{e}{D}\mathop{;}\annot{D'}{c'}$ with $x\in D'$}\hskip 1.2em
    It is straightforward to see that the equal head instruction yields $s \step{}{o_s} s'$
    where $s' = \ensuremath{{\langle \annot{D'''}c''', \ensuremath{{\rho}}_s', \ensuremath{{\mu}}_s' \rangle}}$ and $c'' =\ensuremath{{\llparenthesis\,{c'''}\,\rrparenthesis}}$.
    We have $\ensuremath{{\rho}}_s =_{D} \ensuremath{{\rho}}_t$ and $\ensuremath{{\mu}}_s =_D \ensuremath{{\mu}}_t$.
    Due to the correctness guarantee, we have $\ensuremath{{\langle c, \ensuremath{{\rho}}_t, \ensuremath{{\mu}}_t \rangle}} \step{}{o_t} \ensuremath{{\langle \annot{D'''}c''', \ensuremath{{\rho}}_t', \ensuremath{{\mu}}_t' \rangle}}$ with $\ensuremath{{\rho}}_s' =_{D'''} \ensuremath{{\rho}}_t'$ and $\ensuremath{{\mu}}_s' =_{D'''} \ensuremath{{\mu}}_t'$.
    That is $s' \sim t'$. Moreover, $o_s = o_t = \Tobs{c}{o_s}$.
    \item \emph{$c = \iassignannot{x}{e}{D}\mathop{;}\annot{D'}c'$ with $x\in D'$}\hskip 1.2em We have $\ensuremath{{\langle c, \ensuremath{{\rho}}_s, \ensuremath{{\mu}}_s \rangle}} \step{}{\ensuremath{{\bullet}}} \ensuremath{{\langle \annot{D'}c', \ensuremath{{\rho}}_s'', \ensuremath{{\mu}}_s \rangle}}$ with $\ensuremath{{\rho}}_s'' = \ensuremath{{{\ensuremath{{\rho}}_s}[{x}\mathbin{\text{\normalfont\guilsinglright}}{\ensuremath{{\llbracket\, e \,\rrbracket_{\ensuremath{{\rho}}_s}}}}]}}$.
    We show that $\ensuremath{{\rho}}_s'' =_{D'} \ensuremath{{\rho}}_t$.
    The correctness guarantee implies the standard kill/gen inequality $(D \setminus \ensuremath{{\{x\}}}) \cup \mathit{vars}(e) \subseteq D'$.
    But $x \in D'$, so $D \subseteq D'$.
    And because $\ensuremath{{\rho}}_s$ differs from $\ensuremath{{\rho}}_s''$ only on $x \in D'$, we have that $\ensuremath{{\rho}}_s'' =_{D'} \ensuremath{{\rho}}_t$.
    This establishes $\ensuremath{{\langle c', \ensuremath{{\rho}}_s'', \ensuremath{{\mu}}_s \rangle}} \sim \ensuremath{{\langle \ensuremath{{\llparenthesis\,{c}\,\rrparenthesis}}, \ensuremath{{\rho}}_t, \ensuremath{{\rho}}_t' \rangle}}$.
    We rely on induction to provide a sequence $\ensuremath{{\langle c', \ensuremath{{\rho}}_s'', \ensuremath{{\mu}} \rangle}} \step*{}{\ensuremath{{\boldsymbol{o}}}} s'$ with $s' \sim t'$ and $\Tobs{c}{\ensuremath{{\boldsymbol{o}}}} = o_t$.
    We add the first step to obtain $s \step*{}{\ensuremath{{\bullet}} \mathop{\cdot} \ensuremath{{\boldsymbol{o}}}} s'$ and, by definition of $\Tobsname{}$, $\Tobs{c}{\ensuremath{{\bullet}} \mathop{\cdot} \ensuremath{{\boldsymbol{o}}}} = o_t$.
\end{enumerate}

For the second requirement of \Cref{def:ct:diagram}, let $t$ be final, i.e.,
$t = \ensuremath{{\langle \ensuremath{{\mathcolor{mauve}{\mathtt{skip}}}}, \ensuremath{{\rho}}, \ensuremath{{\mu}} \rangle}}$.
We do case distinction on $c$ as well, where either $c = \annot{D}\ensuremath{{\mathcolor{mauve}{\mathtt{skip}}}}$ as well,
or $c = \iassignannot{x}{e}{D}\mathop{;} \annot{D'}c$, $x\in D'$ and again $\ensuremath{{\llparenthesis\,{\annot{D'}c}\,\rrparenthesis}} = \ensuremath{{\mathcolor{mauve}{\mathtt{skip}}}}$.
Again, we use the first case as base case for an induction.

\begin{enumerate}[label=$\blacktriangleright$,leftmargin=*]
    \item \emph{$c = \ensuremath{{\mathcolor{mauve}{\mathtt{skip}}}}$}\hskip 1.2em Trivial.
    \item \emph{$c = \iassignannot{x}{e}{D}\mathop{;} \annot{D'}c$ with $c \in D'$}\hskip 1.2em
        As before, $\ensuremath{{\langle c, \ensuremath{{\rho}}_s, \ensuremath{{\mu}}_s \rangle}} \step{}{\ensuremath{{\bullet}}} \ensuremath{{\langle \annot{D'}c', \ensuremath{{\rho}}_s'', \ensuremath{{\mu}}_s \rangle}}$ with $\ensuremath{{\rho}}_s'' = \ensuremath{{{\ensuremath{{\rho}}_s}[{x}\mathbin{\text{\normalfont\guilsinglright}}{\ensuremath{{\llbracket\, e \,\rrbracket_{\ensuremath{{\rho}}_s}}}}]}}$.
        As before, $D \subseteq D'$ yields $\ensuremath{{\rho}}_s'' =_{D'} \ensuremath{{\rho}}_t$, thus $\ensuremath{{\langle c', \ensuremath{{\rho}}_s'', \ensuremath{{\mu}}_s \rangle}} \sim \ensuremath{{\langle \ensuremath{{\llparenthesis\,{c}\,\rrparenthesis}}, \ensuremath{{\rho}}_t, \ensuremath{{\rho}}_t' \rangle}}$.
        We rely on induction to provide a sequence $\ensuremath{{\langle c', \ensuremath{{\rho}}_s'', \ensuremath{{\mu}} \rangle}} \step*{}{\ensuremath{{\boldsymbol{o}}}} s'$ with $s'$ final and $s' \sim t'$ and $\ensuremath{{\ensuremath{{\mathsf{sf}}}({c'})}} = \ensuremath{{\boldsymbol{o}}}$.
        We add the first step to obtain $s \step*{}{\ensuremath{{\bullet}} \mathop{\cdot} \ensuremath{{\boldsymbol{o}}}} s'$ and, by definition of $\Tobsname{}$, $\ensuremath{{\ensuremath{{\mathsf{sf}}}({c})}} =\ensuremath{{\bullet}} \mathop{\cdot} \ensuremath{{\boldsymbol{o}}}$. \qedhere
\end{enumerate}
\end{proof}

\subsection{Structural analysis}\label{appendix:structural-analysis}

\newcommand{\sgraph}[1]{\mathit{SG}(#1)}
\newcommand{\nentry}{n_\mathit{entry}}
We begin by presenting the construction of $\ensuremath{{\mathit{struct}_{G}}} : L_{G} \to \ensuremath{{\mathit{Struc}}}$, that takes a label of $G$ to produce the corresponding structured program that sits at the same instruction.
When done, it also provides $\ensuremath{{\llparenthesis\,{G}\,\rrparenthesis}} = \ensuremath{{\ensuremath{{\mathit{struct}_{G}}}(\ell_\ensuremath{{\mathit{init}}})}}$.
As discussed in the main paper, we only search for simple loop and branch patterns in $G$ (\Cref{fig:example-patterns}).
For that, we first define the successor graph $\sgraph{G} = (L_G,E_G)$ of $G$.
The set of nodes $L_G$ is the set of labels in $G$ and $E_G$ is the set of successor edges, i.e.,
\begin{align*}
    L_G &= \{\ell, \ell'\mid\ensuremath{{\ell: i \mathrel{\text{\normalfont\guilsinglright}}\ell'}}\} \cup \{\ell, \ell_{\mathsf{t\kern-0.9ptt}}, \ell_{\mathsf{f\kern-1.1ptf}}\mid\ensuremath{{\ell: e \mathrel{\text{\normalfont\guilsinglright}}\ell_{\mathsf{t\kern-0.9ptt}}}}{\ell_\mathsf{f\kern-1.1ptf}}\},
    \\
    E_G &= \{ (\ell, \ell') \mid\ensuremath{{\ell: i \mathrel{\text{\normalfont\guilsinglright}}\ell'}}\} \cup \{ (\ell, \ell_{\mathsf{t\kern-0.9ptt}}), (\ell, \ell_{\mathsf{f\kern-1.1ptf}})\mid\ensuremath{{\ell: e \mathrel{\text{\normalfont\guilsinglright}}\ell_{\mathsf{t\kern-0.9ptt}}}}{\ell_\mathsf{f\kern-1.1ptf}}\}.
\end{align*}
We require that an analysis algorithm has annotated the labels $\ell \in L_G$, that are conditional branchings $\ensuremath{{\ell: e \mathrel{\text{\normalfont\guilsinglright}}\ell_{\mathsf{t\kern-0.9ptt}}}}{\ell_{\mathsf{f\kern-1.1ptf}}}$.
The annotations identify regions $L \subseteq L_G$ that correspond to control structures.
There are two types of annotations for $\ensuremath{{\ell: e \mathrel{\text{\normalfont\guilsinglright}}\ell_{\mathsf{t\kern-0.9ptt}}}}{\ell_{\mathsf{f\kern-1.1ptf}}}$:
\[
   \ensuremath{{\ell: \ensuremath{{\mathcolor{mauve}{\mathtt{while}}~{\ensuremath{{\mathit{e}}}}~\mathcolor{mauve}{\mathtt{do}}~{L_w}}} \mathrel{\text{\normalfont\guilsinglright}}\ell_{\mathsf{t\kern-0.9ptt}},\ell_{\mathsf{f\kern-1.1ptf}}}},
   \qquad
   \ensuremath{{\ell: \ensuremath{{\mathcolor{mauve}{\mathtt{if}}~{\ensuremath{{\mathit{e}}}}~\mathcolor{mauve}{\mathtt{then}}~L_{\mathsf{t\kern-0.9ptt}}~\mathcolor{mauve}{\mathtt{else}}~L_{\mathsf{f\kern-1.1ptf}}}} \mathrel{\text{\normalfont\guilsinglright}}\ell_{\mathsf{t\kern-0.9ptt}},\ell_{\mathsf{f\kern-1.1ptf}}\mathop{;}\ell'}}.
\]
The first asserts that $\ell$ is the entry and exit condition to a while loop, where $L_w$ is the loop body region, a strongly connected component including $\ell \in L_w$.
The requirement is, that the loop can only be entered and exited through $\ell$, i.e.,
\[
    E_G \cap (L_w \times (L_G \setminus L_w)) = \ensuremath{{\{(\ell, \ell_{\mathsf{f\kern-1.1ptf}})\}}}
    \qquad\text{and}\qquad
    E_G \cap ((L_G \setminus L_w) \times L_w) = E_G \cap ((L_G \setminus L_w) \times \ensuremath{{\{\ell\}}}).
\]
The second asserts that $\ell$ is a conditional branch, where $L_\mathsf{t\kern-0.9ptt}$ and $L_\mathsf{f\kern-1.1ptf}$ are the disjoint regions that form the branch bodies.
The requirement is, that the regions can only be entered through $\ell$,
i.e.,
\[
    E_G \cap ((L_G \setminus L_\mathsf{t\kern-0.9ptt}) \times L_\mathsf{t\kern-0.9ptt}) = \ensuremath{{\{(\ell, \ell_{\mathsf{t\kern-0.9ptt}})\}}}
    \qquad\text{and}\qquad
    E_G \cap ((L_G \setminus L_\mathsf{f\kern-1.1ptf}) \times L_\mathsf{f\kern-1.1ptf}) = \ensuremath{{\{(\ell, \ell_{\mathsf{f\kern-1.1ptf}})\}}},
\]
and that they exit to the common join point $\ell'$,
i.e.,
\[
    E_G \cap ( L_\mathsf{t\kern-0.9ptt} \times (L_G \setminus L_\mathsf{t\kern-0.9ptt})) = E_G \cap ( L_\mathsf{t\kern-0.9ptt} \times \ensuremath{{\{\ell'\}}})
    \qquad\text{and}\qquad
    E_G \cap ( L_\mathsf{t\kern-0.9ptt} \times (L_G \setminus L_\mathsf{f\kern-1.1ptf})) = E_G \cap ( L_\mathsf{f\kern-1.1ptf} \times \ensuremath{{\{\ell'\}}}).
\]
Lastly, we require that all regions annotated are well-nested.
That means, that for all annotated regions $L_1$, $L_2$ throughout all annotations,
either $L_1 \subseteq L_2$, or $L_2 \subseteq L_1$, or $L_1 \cap L_2 = \varnothing$.

Provided the analysis, we strive to define $\ensuremath{{\mathit{struct}_{G}}}$.
To do so, we define an intermediate function $C_{L,\ellexit} : L \cup \ensuremath{{\{\ellexit\}}} \to \ensuremath{{\mathit{Struc}}}$.
It is parametric in an annotated control region $L \subseteq L_G$ with the region's single exit point $\ellexit$ 
(When $L$ is a while-region, then $\ellexit \in L$ is the entry and exit point, the loop condition;
when $L$ belongs to a conditional branch, then $\ellexit \notin L$ is the join point of both branches).
It recursively translates each control region $L$ with exit point $\ellexit$ into a structured program that terminates when reaching the exit point.
Then, it stitches the obtained subprograms together to obtain the structured program for the full program $P$:
\begin{align*}
    C_{L,\ell'}(\ell) & = 
    \begin{cases}
        \ensuremath{{\mathcolor{mauve}{\mathtt{skip}}}} & \ell = \ell' \\
        \ensuremath{{\mathit{a}}}\mathop{;}C_{L,\ell'}(\ell_{s}) 
              & \ensuremath{{\ell: \ensuremath{{\mathit{a}}} \mathrel{\text{\normalfont\guilsinglright}}\ell_{s}}} \\
        \ensuremath{{\mathcolor{mauve}{\mathtt{while}}~{\ensuremath{{\mathit{e}}}}~\mathcolor{mauve}{\mathtt{do}}~{C_{L_w,\ell}(\ell_\mathsf{t\kern-0.9ptt})}}}  \mathop{;} C_{L,\ell'}(\ell_{\mathsf{f\kern-1.1ptf}})
              & \ensuremath{{\ell: \ensuremath{{\mathcolor{mauve}{\mathtt{while}}~{\ensuremath{{\mathit{e}}}}~\mathcolor{mauve}{\mathtt{do}}~{L_w}}} \mathrel{\text{\normalfont\guilsinglright}}\ell_{\mathsf{t\kern-0.9ptt}},\ell_{\mathsf{f\kern-1.1ptf}}}}\\
        \ensuremath{{\mathcolor{mauve}{\mathtt{if}}~{\ensuremath{{\mathit{e}}}}~\mathcolor{mauve}{\mathtt{then}}~C_{L_\mathsf{t\kern-0.9ptt},\ell''}(\ell_{\mathsf{t\kern-0.9ptt}})~\mathcolor{mauve}{\mathtt{else}}~C_{L_\mathsf{f\kern-1.1ptf},\ell''}(\ell_\mathsf{f\kern-1.1ptf})}} \mathop{;} C_{L,\ell'}(\ell'')
              & \ensuremath{{\ell: \ensuremath{{\mathcolor{mauve}{\mathtt{if}}~{\ensuremath{{\mathit{e}}}}~\mathcolor{mauve}{\mathtt{then}}~L_{\mathsf{t\kern-0.9ptt}}~\mathcolor{mauve}{\mathtt{else}}~L_{\mathsf{f\kern-1.1ptf}}}} \mathrel{\text{\normalfont\guilsinglright}}\ell_{\mathsf{t\kern-0.9ptt}},\ell_{\mathsf{f\kern-1.1ptf}},\ell''}}\,.
    \end{cases}
\end{align*}
At the exit point of $L$ no program is left to execute within $L$.
Basic instructions $\ensuremath{{\mathit{a}}}$ are just prepended to the translation of their successor.
Encountering an annotated control structure first recursively translates the body of the structure before appending its successor's translation.
Structural analysis transforms the prorgram via $\ensuremath{{\llparenthesis\,{P}\,\rrparenthesis}} = C_{L_G,\ell_\ensuremath{{\mathit{ret}}}}(\ell_\ensuremath{{\mathit{init}}})$,
i.e., the command corresponding to $\ell_\ensuremath{{\mathit{init}}}$ in the full region $L_G$ with single exit point being the final label $\ell_\ensuremath{{\mathit{ret}}}$.

The desired function $\ensuremath{{\mathit{struct}_{G}}}$ is more advanced.
It also constructs the structured programs encountered during execution of $\ensuremath{{\llparenthesis\,{G}\,\rrparenthesis}}$ (i.e., the $c \in \cfreach{\ensuremath{{\llparenthesis\,{G}\,\rrparenthesis}}}$.
This is important, because loop bodies are unrolled in the structured language, and compiler transformation aims to only translate full control structures.
\[
    \ensuremath{{\ensuremath{{\mathit{struct}_{G}}}(\ell)}} = \begin{cases}
        C_{L,\ell'}(\ell)\mathop{;} \ensuremath{{\ensuremath{{\mathit{struct}_{G}}}(\ell')}} & \begin{aligned}
            &\text{smallest annotated $L$ with $\ell \in L$, and $\ell' \neq \ell$, and }\\
            &\ensuremath{{\ell': \ensuremath{{\mathcolor{mauve}{\mathtt{while}}~{\ensuremath{{\mathit{e}}}}~\mathcolor{mauve}{\mathtt{do}}~{L}}} \mathrel{\text{\normalfont\guilsinglright}}\ell_{\mathsf{t\kern-0.9ptt}},\ell_{\mathsf{f\kern-1.1ptf}}}} \end{aligned}\\
        C_{L_G,\ell_\ensuremath{{\mathit{ret}}}}(\ell) & \text{if none exists}
    \end{cases}
\]

We prove the following Lemma, which links the head instruction of $\ensuremath{{\ensuremath{{\mathit{struct}_{G}}}(\ell)}}$ to $\ell$.
\begin{lemma}
    Given a label $\ell \in L_G$, we have that 
    \[
        \ensuremath{{\ensuremath{{\mathit{struct}_{G}}}(\ell)}} = \begin{cases}
            a \mathop{;} \ensuremath{{\ensuremath{{\mathit{struct}_{G}}}(\ell')}} & \ensuremath{{\ell: a \mathrel{\text{\normalfont\guilsinglright}}\ell'}} \\
            \ensuremath{{\mathcolor{mauve}{\mathtt{while}}~{e}~\mathcolor{mauve}{\mathtt{do}}~{C_{L,\ell_\mathsf{f\kern-1.1ptf}}(\ell_{\mathsf{t\kern-0.9ptt}})}}}\mathop{;} \ensuremath{{\ensuremath{{\mathit{struct}_{G}}}(\ell_\mathsf{f\kern-1.1ptf})}} & \ensuremath{{\ell: \ensuremath{{\mathcolor{mauve}{\mathtt{while}}~{e}~\mathcolor{mauve}{\mathtt{do}}~{L}}} \mathrel{\text{\normalfont\guilsinglright}}\ell_{\mathsf{t\kern-0.9ptt}},\ell_{\mathsf{f\kern-1.1ptf}}}} \\
            \begin{aligned}
                & \ensuremath{{\mathcolor{mauve}{\mathtt{if}}~{e}~\mathcolor{mauve}{\mathtt{then}}~C_{L_{\mathsf{t\kern-0.9ptt}},\ell'}(\ell_{\mathsf{t\kern-0.9ptt}})~\mathcolor{mauve}{\mathtt{else}}~C_{L_{\mathsf{f\kern-1.1ptf}},\ell'}(\ell_{\mathsf{f\kern-1.1ptf}})}} \vphantom{\ell'}\\
                & \hspace{3.5cm}\mathop{;}\smash{\ensuremath{{\ensuremath{{\mathit{struct}_{G}}}(\ell')}}}
            \end{aligned}
            & \ensuremath{{\ell: \ensuremath{{\mathcolor{mauve}{\mathtt{if}}~{e}~\mathcolor{mauve}{\mathtt{then}}~L_\mathsf{t\kern-0.9ptt}~\mathcolor{mauve}{\mathtt{else}}~L_{\mathsf{f\kern-1.1ptf}}}} \mathrel{\text{\normalfont\guilsinglright}}\ell_{\mathsf{t\kern-0.9ptt}},\ell_{\mathsf{f\kern-1.1ptf}}\mathop{;}\ell'}}
                                          \end{cases}
    \]
\end{lemma}

\begin{lemma}
    $\ensuremath{{\langle \ell, \ensuremath{{\rho}}, \ensuremath{{\mu}} \rangle}} \step{}{o} \ensuremath{{\langle \ell', \ensuremath{{\rho}}', \ensuremath{{\mu}}' \rangle}}$ if and only if $\ensuremath{{\langle \ensuremath{{\ensuremath{{\mathit{struct}_{G}}}(\ell)}}, \ensuremath{{\rho}}, \ensuremath{{\mu}} \rangle}} \step{}{o} \ensuremath{{\langle \ensuremath{{\ensuremath{{\mathit{struct}_{G}}}(\ell')}}, \ensuremath{{\rho}}', \ensuremath{{\mu}}' \rangle}}$
\end{lemma}
\begin{proof}
    We do case distinction on the node of $\ell$.
    \begin{enumerate}[label=$\blacktriangleright$,leftmargin=*]
        \item \emph{$\ensuremath{{\ell: a \mathrel{\text{\normalfont\guilsinglright}}\ell'}}$}\hskip 1.2em Then, $\ensuremath{{\ensuremath{{\mathit{struct}_{G}}}(\ell)}} = a \mathop{;} \ensuremath{{\ensuremath{{\mathit{struct}_{G}}}(\ell')}}$. The result follows from the fact that $a$ transforms $\ensuremath{{\rho}}$ and $\ensuremath{{\mu}}$ equally in both semantics.
        \item \emph{$\ensuremath{{\ell: \ensuremath{{\mathcolor{mauve}{\mathtt{while}}~{e}~\mathcolor{mauve}{\mathtt{do}}~{L}}} \mathrel{\text{\normalfont\guilsinglright}}\ell_{\mathsf{t\kern-0.9ptt}},\ell_{\mathsf{f\kern-1.1ptf}}}}$}\hskip 1.2em
        Then, $\ensuremath{{\ensuremath{{\mathit{struct}_{G}}}(\ell)}} = \ensuremath{{\mathcolor{mauve}{\mathtt{while}}~{e}~\mathcolor{mauve}{\mathtt{do}}~{C_{L,\ell_\mathsf{f\kern-1.1ptf}}(\ell_{\mathsf{t\kern-0.9ptt}})}}}\mathop{;} \ensuremath{{\ensuremath{{\mathit{struct}_{G}}}(\ell_\mathsf{f\kern-1.1ptf})}}$.
        Again, for both semantics, $\ensuremath{{\llbracket\, e \,\rrbracket_{\ensuremath{{\rho}}}}}$ is either $\mathsf{t\kern-0.9ptt}$, or $\mathsf{f\kern-1.1ptf}$.
        If it is $\mathsf{t\kern-0.9ptt}$, then
        \[
            \ensuremath{{\langle \ensuremath{{\ensuremath{{\mathit{struct}_{G}}}(\ell)}}, \ensuremath{{\rho}}, \ensuremath{{\mu}} \rangle}}
            \step{}{\ensuremath{{\mathsf{br}\; \mathsf{t\kern-0.9ptt}}}}
            \ensuremath{{\langle C_{L,\ell_{\mathsf{f\kern-1.1ptf}}}(\ell_{\mathsf{t\kern-0.9ptt}}) \mathop{;} \ensuremath{{\mathcolor{mauve}{\mathtt{while}}~{e}~\mathcolor{mauve}{\mathtt{do}}~{C_{L,\ell_{\mathsf{f\kern-1.1ptf}}}(\ell_{\mathsf{t\kern-0.9ptt}})}}} \mathop{;} \ensuremath{{\ensuremath{{\mathit{struct}_{G}}}(\ell_{\mathsf{f\kern-1.1ptf}})}}, \ensuremath{{\rho}}, \ensuremath{{\mu}} \rangle}},
        \]
        Indeed, 
        \[
            \ensuremath{{\ensuremath{{\mathit{struct}_{G}}}(\ell_{\mathsf{t\kern-0.9ptt}})}} =C_{L,\ell_{\mathsf{f\kern-1.1ptf}}}(\ell_{\mathsf{t\kern-0.9ptt}}) \mathop{;} \ensuremath{{\ensuremath{{\mathit{struct}_{G}}}(\ell)}} = C_{L,\ell_{\mathsf{f\kern-1.1ptf}}}(\ell_{\mathsf{t\kern-0.9ptt}}) \mathop{;} \ensuremath{{\mathcolor{mauve}{\mathtt{while}}~{e}~\mathcolor{mauve}{\mathtt{do}}~{C_{L,\ell_{\mathsf{f\kern-1.1ptf}}}(\ell_{\mathsf{t\kern-0.9ptt}})}}} \mathop{;} \ensuremath{{\ensuremath{{\mathit{struct}_{G}}}(\ell_{\mathsf{f\kern-1.1ptf}})}},
        \]
        because $\ell_{\mathsf{t\kern-0.9ptt}} \in L$, and $L$ is the smallest while-body region annotated to an $\ell'' \neq \ell_{\mathsf{t\kern-0.9ptt}}$ containing $\ell_{\mathsf{t\kern-0.9ptt}}$.

        If it is $\mathsf{f\kern-1.1ptf}$, then $\ensuremath{{\langle \ensuremath{{\ensuremath{{\mathit{struct}_{G}}}(\ell)}}, \ensuremath{{\rho}}, \ensuremath{{\mu}} \rangle}} \step{}{\ensuremath{{\mathsf{br}\; \mathsf{f\kern-1.1ptf}}}} \ensuremath{{\langle \ensuremath{{\ensuremath{{\mathit{struct}_{G}}}(\ell_{\mathsf{f\kern-1.1ptf}})}}, \ensuremath{{\rho}}, \ensuremath{{\mu}} \rangle}}$ as required.
        \item \emph{$\ensuremath{{\ell: \ensuremath{{\mathcolor{mauve}{\mathtt{if}}~{e}~\mathcolor{mauve}{\mathtt{then}}~L_\mathsf{t\kern-0.9ptt}~\mathcolor{mauve}{\mathtt{else}}~L_{\mathsf{f\kern-1.1ptf}}}} \mathrel{\text{\normalfont\guilsinglright}}\ell_{\mathsf{t\kern-0.9ptt}},\ell_{\mathsf{f\kern-1.1ptf}}\mathop{;}\ell'}}$}\hskip 1.2em
        Then, 
        \[
            \ensuremath{{\ensuremath{{\mathit{struct}_{G}}}(\ell)}} = (\ensuremath{{\mathcolor{mauve}{\mathtt{if}}~{e}~\mathcolor{mauve}{\mathtt{then}}~C_{L_{\mathsf{t\kern-0.9ptt}},\ell'}(\ell_{\mathsf{t\kern-0.9ptt}})~\mathcolor{mauve}{\mathtt{else}}~C_{L_{\mathsf{f\kern-1.1ptf}},\ell'}(\ell_{\mathsf{f\kern-1.1ptf}})}}) \mathop{;} \ensuremath{{\ensuremath{{\mathit{struct}_{G}}}(\ell')}}\,.
        \]
        Wlog, let $\ensuremath{{\llbracket\, e \,\rrbracket_{\ensuremath{{\rho}}}}} = \mathsf{t\kern-0.9ptt}$.
        We have $\ensuremath{{\langle \ensuremath{{\ensuremath{{\mathit{struct}_{G}}}(\ell)}}, \ensuremath{{\rho}}, \ensuremath{{\mu}} \rangle}} \step{}{\ensuremath{{\mathsf{br}\; \mathsf{t\kern-0.9ptt}}}} \ensuremath{{\langle C_{L_{\mathsf{t\kern-0.9ptt}},\ell'}(\ell_{\mathsf{t\kern-0.9ptt}})\mathop{;}\ensuremath{{\ensuremath{{\mathit{struct}_{G}}}(\ell')}}, \ensuremath{{\rho}}, \ensuremath{{\mu}} \rangle}}$.
        We inspect the latter, to see $\ensuremath{{\ensuremath{{\mathit{struct}_{G}}}(\ell_{\mathsf{t\kern-0.9ptt}})}} = C_{L_{\mathsf{t\kern-0.9ptt}},\ell'}(\ell_{\mathsf{t\kern-0.9ptt}})\mathop{;}\ensuremath{{\ensuremath{{\mathit{struct}_{G}}}(\ell')}}$. \qedhere
    \end{enumerate}
\end{proof}

We define $\ensuremath{{\langle \ell, \ensuremath{{\rho}}, \ensuremath{{\mu}} \rangle}} \sim \ensuremath{{\langle \ensuremath{{\ensuremath{{\mathit{struct}_{G}}}(\ell)}}, \ensuremath{{\rho}}, \ensuremath{{\mu}} \rangle}}$ and $\Tobs{}{o} = o$.
\Cref{thm:ct:lock-step-soundness:simulation} follows from the previous lemma as an immediate consequence.

\subsection{Loop Rotation}

We formalize Loop rotation in more detail.
A loop $(\ell_{\textit{pred}}, \ell_{\textit{entry}}, L)$ with $\ell_{\textit{pred}}, \ell_{\textit{entry}} \in L_G$ and $L \subseteq L_G$ has the following properties:
\begin{enumerate*}[label=\textbf{(\arabic*)}]
    \item $L$ is a strongly connected component in $\sgraph{G}$,
    \item $\ell_{\textit{pred}}$ has only one successor in $\sgraph{G}$, namely
    \item $(\ell_{\textit{pred}},\ell_{\textit{entry}}) \in E_G$ which is the only edge going into $L$, i.e., $((L_G \setminus L) \times L) \cap E_G = \ensuremath{{\{(\ell_{\textit{pred}},\ell_{\textit{entry}})\}}}$.
\end{enumerate*}
Given a loop $(\ell_{\textit{pred}}, \ell_{\textit{entry}}, L)$ of $G$ to rotate, we can define the transformation $\ensuremath{{\llparenthesis\,{\cdot}\,\rrparenthesis}}$:
\[
    \ensuremath{{\llparenthesis\,{G}\,\rrparenthesis}} = (G \setminus \ensuremath{{\{(\ell_{\textit{pred}},i, \ell_{\textit{entry}})\}}}) \cup \ensuremath{{\{(\ell_{\textit{pred}},i,\ell_{\textit{copy}})\}}} \cup
    \begin{cases}
        \ensuremath{{\{(\ell_{\textit{copy}}, i, \ell')\}}} & (\ell_{\textit{entry}}, i, \ell') \in G \\
        \ensuremath{{\{(\ell_{\textit{copy}}, e, \ell_\mathsf{t\kern-0.9ptt}, \ell_\mathsf{f\kern-1.1ptf})\}}} & (\ell_{\textit{entry}}, e, \ell_\mathsf{t\kern-0.9ptt}, \ell_\mathsf{f\kern-1.1ptf}) \in G 
    \end{cases}
\]
where $\ell_{\textit{copy}} \notin L_G$ is fresh.

With the defined simulation 
\[
    \ensuremath{{\langle \ell, \ensuremath{{\rho}}, \ensuremath{{\mu}} \rangle}} \sim \ensuremath{{\langle \ell', \ensuremath{{\rho}}, \ensuremath{{\mu}} \rangle}} \quad\text{when}\quad \ell = \ell' \lor (\ell = \ell_{\textit{entry}} \land \ell' = \ell_{\textit{copy}})
\]
and the observation transformer $\Tobs{}{o} = o$ we prove \Cref{thm:ct:lock-step-soundness:simulation}:
Let $s = \ensuremath{{\langle \ell, \ensuremath{{\rho}}, \ensuremath{{\mu}} \rangle}} \sim \ensuremath{{\langle \ell', \ensuremath{{\rho}}, \ensuremath{{\mu}} \rangle}} = t$, and $t \step{}{o} t'$.

When $\ell = \ell' \neq \ell_{\textit{pred}}$, then $s = t$ and the nodes of $\ell$ are the same in $G$ and $\ensuremath{{\llparenthesis\,{G}\,\rrparenthesis}}$, thus $s \step{}{o} s' = t'$ and $s' \sim t'$.
Similarly, when $\ell = \ell' = \ell_{\textit{pred}}$, then $t' = (\ell_{\textit{copy}}, \ensuremath{{\rho}}', \ensuremath{{\mu}}')$.
$\ell_{\textit{pred}}$ has the same instruction in $G$, but the successor is $\ell_{\textit{entry}}$, thus $s \step{}{o} s' = \ensuremath{{\langle \ell_{\textit{entry}}, \ensuremath{{\rho}}', \ensuremath{{\mu}}' \rangle}}$ and thus $s' \sim t'$.

If, on the other hand, $\ell = \ell_{\textit{entry}}$ and $\ell' = \ell_{\textit{copy}}$, then instruction/branching condition and successors are identical for $\ell_{\textit{entry}}$ in $G$ and $\ell_{\textit{copy}}$ in $\ensuremath{{\llparenthesis\,{G}\,\rrparenthesis}}$,
so $s \step{}{o} s' = t'$.

\newcommand*{\tableone}[0]{
  \footnotesize
  \begin{tabular}{@{}lcc@{}}
    \toprule
    \textbf{\makecell{Code \\ Simplification/Elimination\\ Passes}} & \textbf{Theoretical} & \textbf{Empirical} \\
    \midrule
    adce &  \textcolor{red}{\Lightning}{}  &  \\
    dse  &  \textcolor{red}{\Lightning}{} &  \\

    globalopt &  & \textcolor{red}{\Lightning}{} \\
    gvn       &  & \textcolor{red}{\Lightning}{} \\

    instcombine &  & \textcolor{red}{\Lightning}{} \\
    early-cse   & \textcolor{red}{\Lightning}{} &  \\
    reassociate &  & \textcolor{red}{\Lightning}{} \\
    bdce & \textcolor{red}{\Lightning}{} &  \\

    globaldce &  & \textcolor{green}{\ding{51}}  \\ strip-dead-prototypes & \textcolor{green}{\ding{51}} &  \\
    \bottomrule
  \end{tabular}
}

\newcommand*{\tabletwo}[0]{
  \footnotesize
  \begin{tabular}{@{}lcc@{}}
    \toprule
    \textbf{\makecell{Loop \\ Optimizations \\ passes}} & \textbf{Theoretical} & \textbf{Empirical} \\
    \midrule
    loops &  &  \textcolor{green}{\ding{51}} \\
    loop-simplify &  & \textcolor{green}{\ding{51}} \\ lcssa &  &  \textcolor{green}{\ding{51}} \\ loop-rotate & \textcolor{green}{\ding{51}} &  \\ 
    licm &  &  \textcolor{green}{\ding{51}} \\ loop-load-elim &  \textcolor{red}{\Lightning}{} &  \\ indvars &  & \textcolor{green}{\ding{51}} \\ loop-idiom &  &  \textcolor{green}{\ding{51}} \\ loop-deletion & \textcolor{red}{\Lightning}{} &  \\ scalar-evolution &  & \textcolor{green}{\ding{51}} \\ \bottomrule
  \end{tabular}

}

\newcommand*{\tablethree}[0]{
  \footnotesize
  \begin{tabular}{@{}lcc@{}}
    \toprule
    \textbf{\makecell{LLVM \\Analysis \\Passes}} & \textbf{Theoretical} & \textbf{Empirical} \\
    \midrule
    tbaa & \textcolor{green}{\ding{51}} &  \\
    basicaa & \textcolor{green}{\ding{51}} &  \\
    aa &  \textcolor{green}{\ding{51}} &  \\
    loop-accesses & \textcolor{green}{\ding{51}} &  \\
    verify &  \textcolor{green}{\ding{51}} &  \\
    lazy-value-info & \textcolor{green}{\ding{51}} &  \\
    \bottomrule
  \end{tabular}
}

\newcommand*{\tablefour}[0]{
  \footnotesize
  \begin{tabular}{@{}lcc@{}}
    \toprule
    \textbf{\makecell{Expression\\ Substitution\\ Passes}} & \textbf{Theoretical} & \textbf{Empirical} \\
    \midrule
    constprop & \textcolor{green}{\ding{51}} &  \\
    sccp & \textcolor{green}{\ding{51}} &  \\ correlated-propagation & \textcolor{green}{\ding{51}} &  \\ constmerge & \textcolor{green}{\ding{51}} &  \\ \bottomrule
  \end{tabular}
}

\newcommand*{\tablefive}[0]{
  \footnotesize
  \begin{tabular}{@{}lcc@{}}
    \toprule
    \textbf{\makecell{Control Flow \\Simplification\\ Passes}} & \textbf{Theoretical} & \textbf{Empirical} \\
    \midrule
    simplifycfg & \textcolor{red}{\Lightning}{} &  \\
    loop-simplifycfg &  & \textcolor{green}{\ding{51}} \\ jump-threading &  & \textcolor{green}{\ding{51}} \\ sink & \textcolor{red}{\Lightning}{} &   \\ \bottomrule
  \end{tabular}
}

\newcommand*{\tablesix}[0]{
  \footnotesize
  \begin{tabular}{@{}lcc@{}}
    \toprule
    \textbf{\makecell{Memory and \\Stack Optimizations \\ passes}} & \textbf{Theoretical} & \textbf{Empirical} \\
    \midrule
    mem2reg & \textcolor{green}{\ding{51}} &  \\ \bottomrule
  \end{tabular}
}

\newcommand*{\tableseven}[0]{
  \footnotesize
  \begin{tabular}{@{}lcc@{}}
    \toprule
    \textbf{RetDec Utility Passes} & \textbf{Theoretical} & \textbf{Empirical} \\
    \midrule
    retdec-provider-init &  & \textcolor{green}{\ding{51}} \\
    retdec-decoder &  & \textcolor{green}{\ding{51}} \\
    retdec-write-ll &  & \textcolor{green}{\ding{51}} \\
    retdec-write-bc &  & \textcolor{green}{\ding{51}} \\
    retdec-llvmir2hll &  & \textcolor{green}{\ding{51}} \\
    retdec-x86-addr-spaces &  & \textcolor{green}{\ding{51}}  \\
    retdec-x87-fpu &  &  \textcolor{green}{\ding{51}} \\
    retdec-syscalls &  &  \textcolor{green}{\ding{51}} \\
    retdec-simple-types &  & \textcolor{green}{\ding{51}} \\
    retdec-param-return &  & \textcolor{green}{\ding{51}} \\
    retdec-select-fncs &  &  \textcolor{green}{\ding{51}} \\
    retdec-class-hierarchy &  & \textcolor{green}{\ding{51}} \\
    retdec-unreachable-funcs & & \textcolor{green}{\ding{51}} \\
    retdec-main-detection &  & \textcolor{green}{\ding{51}} \\
    retdec-idioms &  & \textcolor{green}{\ding{51}} \\
    retdec-idioms-libgcc &  & \textcolor{green}{\ding{51}} \\
    retdec-remove-phi &  & \textcolor{green}{\ding{51}} \\
    retdec-inst-opt &  &  \textcolor{green}{\ding{51}} \\
    retdec-inst-opt-rda &  & \textcolor{red}{\Lightning}{} \\
    retdec-value-protect &  & \textcolor{green}{\ding{51}} \\
    retdec-write-dsm &  &  \textcolor{green}{\ding{51}} \\
    retdec-cond-branch-opt &  & \textcolor{green}{\ding{51}} \\ retdec-constants &  &  \textcolor{green}{\ding{51}} \\ retdec-stack &  &  \textcolor{green}{\ding{51}} \\
    retdec-stack-ptr-op-remove &  &  \textcolor{green}{\ding{51}} \\
    retdec-register-localization &  &  \textcolor{green}{\ding{51}} \\
    retdec-value-protect &  &  \textcolor{green}{\ding{51}} \\
    \bottomrule
  \end{tabular}
}

\section{The Passes of RetDec}
\label{appendix:retdec-passes}

\begin{tabular}{cc}
    \tableone{}
  &
    \tabletwo{}
  \\
  & \\
    \tableseven{}
  &
    \begin{tabular}{c}
      \tablethree{}
      \\
      \\
      \tablefour{}
      \\
      \\
      \tablefive{}
      \\
      \\
      \tablesix{}
    \end{tabular}
\end{tabular}

\section{BunnyHop PoC Attack}\label{appendix:binsec-attack}
To
demonstrate the feasibility of the BunnyHop attack on
\Cref{lst:binsec_empty_branch}, we adapt the artifact
from~\cite{ZhangTOCGY23} and use our example as the victim.  We
conduct experiments running our victim on an Intel Core i7-10710U CPU,
and run the attack 2000 times.  Half of the time we set the secret bit
\texttt{rdi=1} and the other half, we set \texttt{rdi=0}.  Every time,
the attack runs the victim 100 times following the \textsc{Prime+Probe}
approach: before running the victim, we prime the branch target
buffer; then, after executing the victim, we probe the branch target
buffer to check whether the primed branches were evicted (resulting in
probe misses).  This indicates that the victim took the
secret-dependent branch.  The results of our experiments are shown in
\Cref{fig:binsec_exploit}, plotting the distribution of the probing
miss-rate, i.e., how many of the 100 probes missed, throughout our
attacks performed.  As we can see, the distributions for
\texttt{rdi=1} (the branch in the victim program is taken) leads to
significantly more probe misses than when \texttt{rdi=0} (the branch
is not taken), effectively leaking the value of the secret
\texttt{rdi} register.

\begin{figure}
  \centering
  \includegraphics[width=0.9\linewidth]{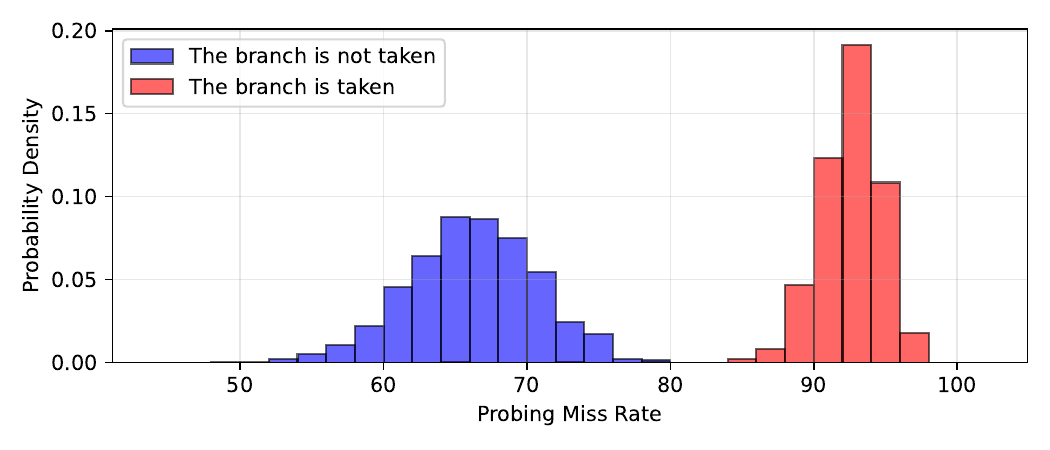}
  \caption{Attacking the synthetic example presented in 
  \Cref{lst:binsec_empty_branch} with the BunnyHop technique~\cite{ZhangTOCGY23}.
  The distribution of probe misses is distinguishable and can tell if 
  a secret-dependent branch was taken.}\label{fig:binsec_exploit}
\end{figure}

\section{Speculative Constant-Time Transparency}\label{appendix:sct-bisim}

In this section, we present a method to prove
that a program transformation is \ensuremath{{\text{SCT}}}{} transparent,
and apply this method to transformation from
\cref{sec:prog-optimizations} or provide counterexamples.

\subsection{Proof Techniques}\label{appendix:sct:techniques}

To prove \ensuremath{{\text{SCT}}}{} tranparency we aim to establish a bisimulation that
links source and target directives in a one-to-one manner:
We require that $\ensuremath{{T_{\ensuremath{{\mathcal{D}}}}^{\ensuremath{{\mathsf{pc}}}}}}$ is a bijective function that links the
directives between source states $s$ at $\ensuremath{{\ensuremath{{\mathsf{pc}}}({s})}}  = \ensuremath{{\mathsf{pc}}}$ and
target states $t$ when $s \sim t$.
We require that the directive transformer only inspects the program point
of source state in order to map the directives.
The notion $\ensuremath{{d_s \mathrel{\ensuremath{{T_{\ensuremath{{\mathcal{D}}}}^{\ensuremath{{\mathsf{pc}}}}}}} d_t}}$ states that the bijective $\ensuremath{{T_{\ensuremath{{\mathcal{D}}}}^{\ensuremath{{\mathsf{pc}}}}}}$
maps $d_s$ to $d_t$ and its inverse maps $d_t$ to $d_s$.

To allow for non-lockstep transformations, we split the program points of
the source program into two partitions $\ensuremath{{\mathcal{PC}}}_{\mathsf{sim}}$ and $\ensuremath{{\mathcal{PC}}}_{\mathsf{int}}$.
Program points belonging to $\ensuremath{{\mathsf{pc}}} \in \ensuremath{{\mathcal{PC}}}_{\mathsf{sim}}$ are those where
$\ensuremath{{T_{\ensuremath{{\mathcal{D}}}}^{\ensuremath{{\mathsf{pc}}}}}}$ links directives of source and target states so they step
forward together. Program points from $\ensuremath{{\mathsf{pc}}} \in \ensuremath{{\mathcal{PC}}}_{\mathsf{int}}$, however, are
intermediate steps, where only the source program takes a step to catch up
with the target program. The intermediate directives $\ensuremath{{d_{\ensuremath{{\mathsf{pc}}}}}}$ and
observations $\ensuremath{{o_{\ensuremath{{\mathsf{pc}}}}}}$ also depend only on $\ensuremath{{\mathsf{pc}}}$.

\begin{definition}[Speculative bisimulation diagram]\label{def:sct:diagram}
  A relation \ensuremath{{{\sim} \subseteq \ensuremath{{\mathcal{S}}}_s \times \ensuremath{{\mathcal{S}}}_t}}
  satisfies a \emph{simulation diagram} w.r.t.\
  a family of bijective directive transformers
  $\ensuremath{{T_{\ensuremath{{\mathcal{D}}}}^{\ensuremath{{\mathsf{pc}}}}}}: \ensuremath{{\mathcal{D}}}_t \leftrightarrow \ensuremath{{\mathcal{D}}}_s$,
  an observation transformer
  $\Tobsname : \ensuremath{{\mathcal{O}}}_s \to \ensuremath{{\mathcal{O}}}_t$, and
  intermediate directives $\ensuremath{{d_{\ensuremath{{\mathsf{pc}}}}}}$ and observations \ensuremath{{o_{\ensuremath{{\mathsf{pc}}}}}},
  if for every pair of related states \ensuremath{{s \sim t}} with $\ensuremath{{\mathsf{pc}}} = \ensuremath{{\ensuremath{{\mathsf{pc}}}({s})}}$ the following hold:
  \begin{requirelist}
      \item if $\ensuremath{{\mathsf{pc}}} \in \ensuremath{{\mathcal{PC}}}_{\mathsf{sim}}$ then for every $\ensuremath{{d_s \mathrel{\ensuremath{{T_{\ensuremath{{\mathcal{D}}}}^{\ensuremath{{\mathsf{pc}}}}}}} d_t}}$
          the input program steps $\sem{s}{d_s}{o_s}{s'}$
          if and only if
          the output program steps $\sem{t}{d_t}{o_t}{t'}$,
          and if so, \ensuremath{{s' \sim t'}} and $o_t = \Tobs{}{o_s}$; and
        \item if $\ensuremath{{\mathsf{pc}}} \in \ensuremath{{\mathcal{PC}}}_{\mathsf{int}}$, then the only step the input program can
          perform is $\sem{s}{\ensuremath{{d_{\ensuremath{{\mathsf{pc}}}}}}}{\ensuremath{{o_{\ensuremath{{\mathsf{pc}}}}}}}{s'}$ where $s' \sim t$.
  \end{requirelist}
\end{definition}

The following result recasts \cref{thm:ct:soundness} in the speculative
setting.
\begin{theorem}[Soundness of speculative simulations]\label{thm:sct:soundness}
    A program transformation
    ${\ensuremath{{\llparenthesis\,{\cdot}\,\rrparenthesis}}} : \ensuremath{{\mathcal{L}}}_s \to \ensuremath{{\mathcal{L}}}_t$
    is \ensuremath{{\phi\text{-SCT}}}{} transparent if for every input program $P$
    there exist a relation $\sim$,
    directive transformers $\ensuremath{{T_{\ensuremath{{\mathcal{D}}}}^{\ensuremath{{\mathsf{pc}}}}}}$,
    an observation transformer $\Tobsname{}$,
    and intermediate directives $\ensuremath{{d_{\ensuremath{{\mathsf{pc}}}}}}$ observations $\ensuremath{{o_{\ensuremath{{\mathsf{pc}}}}}}$,
    such that
    \begin{requirelist}
    \item $\sim$ is a speculative bisimulation w.r.t.\ \ensuremath{{T_{\ensuremath{{\mathcal{D}}}}^{\ensuremath{{\mathsf{pc}}}}}},
        \Tobsname{}, \ensuremath{{d_{\ensuremath{{\mathsf{pc}}}}}}, and \ensuremath{{o_{\ensuremath{{\mathsf{pc}}}}}};\label{thm:sct:soundness:simulation}
    \item initial states are related: \ensuremath{{P(i) \sim Q(i)}} for every $i$; and\label{thm:sct:soundness:initial}
    \item \Tobsname{} is \ensuremath{{\ensuremath{{\mathcal{PC}}}\text{-injective}}}{}.\label{thm:sct:soundness:injectivity}
    \end{requirelist}
\end{theorem}
\begin{proof}
    We prove only reflection.
    For a proof of preservation, \Cref{thm:sct:soundness:injectivity} is not
    needed and the directive transformer needs not be bijective
    (see one of \cite{WallM25,jasminPOPL25}).
    Let us assume that \(Q\) is \ensuremath{{\phi\text{-SCT}}}{}, we show that $P$ is \ensuremath{{\phi\text{-SCT}}}{} as well.
    Take two related inputs $i_1 \mathrel{\mathphi} i_2$
    and any two executions $\sem*{P(i_1)}{\ensuremath{{\boldsymbol{d}}}_s}{\ensuremath{{\boldsymbol{o}}}_1}{s_1}$,
    $\sem*{P(i_2)}{\ensuremath{{\boldsymbol{d}}}_s}{\ensuremath{{\boldsymbol{o}}}_2}{s_2}$ on the same sequence of directives $\ensuremath{{\boldsymbol{d}}}_s$.
    Notice that any states $t_1$, $t_2$ reachable from $Q(i_1)$, $Q(i_2)$
    under the same directives $\ensuremath{{\boldsymbol{d}}}_t$ will produce the same observations when
    executed. We denote this property with $t_1 =_\ensuremath{{\mathcal{O}}} t_2$.

    We will show a generalization of our theorem: for any directives
    $\ensuremath{{\boldsymbol{d}}}_s$, observations $\ensuremath{{\boldsymbol{o}}}_1$, $\ensuremath{{\boldsymbol{o}}}_2$,
    input program states $s_1$, $s_2$ and $s_1'$, $s_2'$,
    and output program states $t_1$, $t_2$, if
    $\ensuremath{{\ensuremath{{\mathsf{pc}}}({s_1})}} = \ensuremath{{\mathsf{pc}}} = \ensuremath{{\ensuremath{{\mathsf{pc}}}({s_2})}}$,
    $s_k \sim t_k$,
    $\sem*{s_k}{\ensuremath{{\boldsymbol{d}}}_s}{\ensuremath{{\boldsymbol{o}}}_k}{s_k'}$, and
    $t_1 =_\ensuremath{{\mathcal{O}}} t_2$,
    then $\ensuremath{{\boldsymbol{o}}}_1 = \ensuremath{{\boldsymbol{o}}}_2$.

    We proceed by induction on $\ensuremath{{\boldsymbol{d}}}_s$.
    The base case is trivial since $\ensuremath{{\boldsymbol{o}}}_1 = \ensuremath{{\epsilon}} = \ensuremath{{\boldsymbol{o}}}_2$.
    In the inductive case, we have that
    $\ensuremath{{\ensuremath{{\mathsf{pc}}}({s_1})}} = \ensuremath{{\ensuremath{{\mathsf{pc}}}({s_2})}}$,
    $s_k \sim t_k$,
    $s_k \step{d_s}{o_k} s_k' \step*{\ensuremath{{\boldsymbol{d}}}_s}{\ensuremath{{\boldsymbol{o}}}_k} s_k''$, and
    $t_1 =_\ensuremath{{\mathcal{O}}} t_2$.
    We show that $o_1 = o_2$ and use the inductive hypothesis for $\ensuremath{{\boldsymbol{o}}}_1 = \ensuremath{{\boldsymbol{o}}}_2$.
    Applying the diagram on the first step of the input program,
    we get one of two cases:

    \paragraph{\textbf{Case:} {$\ensuremath{{\mathsf{pc}}} \in \ensuremath{{\mathcal{PC}}}_{\mathsf{int}}$}}
        We have $o_1 = \ensuremath{{o_{\ensuremath{{\mathsf{pc}}}}}} = o_2$ and $s_k' \sim t_k$.

    \paragraph{\textbf{Case:} {$\ensuremath{{\mathsf{pc}}} \in \ensuremath{{\mathcal{PC}}}_{\mathsf{sim}}$}}
        In this case we have $\sem{s_k}{d_s}{o_k'}{s_k'}$ with $s_k' \sim t'$, and
        $\ensuremath{{d_s \mathrel{\ensuremath{{T_{\ensuremath{{\mathcal{D}}}}^{\ensuremath{{\mathsf{pc}}}}}}} d_t}}$.
We can apply the diagram to get $\sem{t_k}{d_t}{o_{t,k}}{t_k'}$, $s_k' \sim t_k'$,
        $o_{t,k} = \Tobs{}{o_k'}$.
As $t_1 =_\ensuremath{{\mathcal{O}}} t_2$, we know that $o_{t,1} = o_{t,2}$.
Since \Tobsname{} is \ensuremath{{\ensuremath{{\mathcal{PC}}}\text{-injective}}}{}, we get $o_1 = o_2$.

    Due to leakage of $\ensuremath{{\mathsf{pc}}}$, we have that $\ensuremath{{\ensuremath{{\mathsf{pc}}}({s_1'})}} = \ensuremath{{\ensuremath{{\mathsf{pc}}}({s_2'})}}$ and $\ensuremath{{\ensuremath{{\mathsf{pc}}}({t_1'})}} = \ensuremath{{\ensuremath{{\mathsf{pc}}}({t_2'})}}$.
    Further, $t_1' =_{\ensuremath{{\mathcal{O}}}} t_2'$.
    We can now apply the induction hypothesis to obtain that $\ensuremath{{\boldsymbol{o}}}_1 = \ensuremath{{\boldsymbol{o}}}_2$.
\end{proof}

\subsection{Transparent Transformations}\label{appendix:sct:passes}

For each transformation, we define a simulation $\sim$,
the corresponding sets $\ensuremath{{\mathcal{PC}}}_{\mathsf{sim}}$ and $\ensuremath{{\mathcal{PC}}}_{\mathsf{int}}$,
the observation transformer $\Tobsname : \ensuremath{{\mathcal{O}}}_s \to \ensuremath{{\mathcal{O}}}_t$,
the directive transformers $\ensuremath{{T_{\ensuremath{{\mathcal{D}}}}^{\ensuremath{{\mathsf{pc}}}}}}$ for each $\ensuremath{{\mathsf{pc}}} \in \ensuremath{{\mathcal{PC}}}_{\mathsf{sim}}$,
and the intermediate directives $\ensuremath{{d_{\ensuremath{{\mathsf{pc}}}}}}$ and observations $\ensuremath{{o_{\ensuremath{{\mathsf{pc}}}}}}$ for
intermediate program points $\ensuremath{{\mathsf{pc}}} \in \ensuremath{{\mathcal{PC}}}_{\mathsf{int}}$,
so that $\sim$ forms a bisimulation according to \Cref{def:sct:diagram} and
\Cref{thm:sct:soundness}.

\paragraph*{Structural Analysis}
The simulation $\sim$ and the injective observation transformer $\Tobsname$ are
the same as in \Cref{sec:passes}.
The set $\ensuremath{{\mathcal{PC}}}_{\mathsf{int}}$ is empty, so every program point belongs to $\ensuremath{{\mathcal{PC}}}_{\mathsf{sim}}$.
The directive transformer is the identity $\ensuremath{{T_{\ensuremath{{\mathcal{D}}}}^{\ensuremath{{\mathsf{pc}}}}}} = \mathit{id}$ for every $\ensuremath{{\mathsf{pc}}}$.
Since $\ensuremath{{\mathcal{PC}}}_{\mathsf{int}}$ is empty, $\ensuremath{{d_{\ensuremath{{\mathsf{pc}}}}}}$ and $\ensuremath{{o_{\ensuremath{{\mathsf{pc}}}}}}$ need not be defined.

\paragraph*{Constant Folding}
Let $\ensuremath{{\mathcal{T}}}$ be the transformation on the code.
The simulation $\sim$ requires equal memory and register contents,
\[
  \ensuremath{{\langle \ensuremath{{\mathit{c}}}, \rho, \mu \rangle}} \sim \ensuremath{{\langle \ensuremath{{\mathit{c}}}', \rho, \mu \rangle}} \quad
  \textit{if and only if} \quad \ensuremath{{\ensuremath{{\mathit{c}}}\mathrel{\ensuremath{{\mathcal{T}}}}\ensuremath{{\mathit{c}}}'}}\,.
\]
Again, the set $\ensuremath{{\mathcal{PC}}}_{\mathsf{int}}$ is empty, so every program point belongs to $\ensuremath{{\mathcal{PC}}}_{\mathsf{sim}}$.
Also the instructions did not change between $\ensuremath{{\mathit{c}}}$ and $\ensuremath{{\mathit{c}}}'$,
so the directive transformer is once more the identity $\ensuremath{{T_{\ensuremath{{\mathcal{D}}}}^{\ensuremath{{\mathsf{pc}}}}}} = \mathit{id}$ for every $\ensuremath{{\mathsf{pc}}}$.
Again, since $\ensuremath{{\mathcal{PC}}}_{\mathsf{int}}$ is empty, $\ensuremath{{d_{\ensuremath{{\mathsf{pc}}}}}}$ and $\ensuremath{{o_{\ensuremath{{\mathsf{pc}}}}}}$ need not be defined.

\subsection{Nontransparent Transformations}\label{appendix:sct:counterexamples}
Most of the transformations we discussed in \cref{sec:passes} are not
SCT transparent.
Since passes that are not CT transparent are immediately not SCT
transparent either, we focus on counterexamples for SCT transparency in
three of the passes we proved CT transparent in \cref{sec:passes}.

\paragraph*{Dead Branch Elimination}

\begin{figure}
  \centering
  \begin{jasmincode}[outerpos=t,outerwidth=35ex]
    \jasminindent{0}\jasminkw{if} (\jasminconstant{false}) \jasminopenbrace{} x = [secret]; \jasminclosebrace{}\\
    \jasminindent{0}\jasminkw{else} \jasminopenbrace{} x = \jasminconstant{42}; \jasminclosebrace{}
  \end{jasmincode}\hspace{2em}\begin{jasmincode}[outerpos=t,outerwidth=12ex]
    \jasminindent{0}\\
    \jasminindent{0}x = \jasminconstant{42};
  \end{jasmincode}
  \caption{Dead Branch Elimination breaks SCT transparency.}
  \label{lst:dead_branch}
\end{figure}

This transformation is not SCT transparent because dead branches may
execute speculatively.
This situation occurs whenever the compiler can prove a branch is dead,
for instance the \(n\)th iteration in a loop of \(n - 1\) iterations.
In \cref{lst:dead_branch}, the conditional branch at Line~1 is always
false, and, therefore, it is functionally correct to remove it.
Unfortunately, the input program exhibits a speculative constant-time
violation, as the branch in Line~1 may be executed under misspeculation.
Inside the branch, using \texttt{secret} as an address leaks its value,
even if only speculatively.
Thus, this pass is not SCT transparent, since its output program no
longer contains the SCT violation.

\paragraph*{Dead Assignment Elimination}

\begin{figure}
  \centering
  \begin{jasmincode}[outerpos=t,outerwidth=18ex]
    \jasminindent{0}x = secret;\\
    \jasminindent{0}\jasminkw{if} (\jasminconstant{false}) \jasminopenbrace{}\\
    \jasminindent{1}[x] = \jasminconstant{0};\\
    \jasminindent{0}\jasminclosebrace{}\\
    \jasminindent{0}x = \jasminconstant{0};
  \end{jasmincode}\hspace{2em}\begin{jasmincode}[outerpos=t,outerwidth=18ex]
    \jasminindent{0}\\
    \jasminindent{0}\jasminkw{if} (\jasminconstant{false}) \jasminopenbrace{}\\
    \jasminindent{1}[x] = \jasminconstant{0};\\
    \jasminindent{0}\jasminclosebrace{}\\
    \jasminindent{0}x = \jasminconstant{0};
  \end{jasmincode}
  \caption{Dead Assignment Elimination breaks SCT transparency.}
  \label{lst:sct:dead-assignment}
\end{figure}

In \cref{lst:sct:dead-assignment}, the transformation infers that the
assignment in Line~1 is unnecessary, because the conditional branch is
never taken.
Under speculative execution, however, the input program on the left
violates SCT by leaking the value of \texttt{secret},
because Line~3 can be executed speculatively.
This violation no longer exists after the transformation, which means
that Dead Assignment Elimination does not reflect \ensuremath{{\phi\text{-SCT}}}{}.

\paragraph*{Unspilling}

\begin{figure}
  \centering
  \begin{jasmincode}[outerpos=t,outerwidth=36ex]
    \jasminindent{0}[sp + \jasminconstant{4}] = secret; \jasmincomment{// spill}\\
    \jasminindent{0}\jasminkw{if} (i < \jasminconstant{5}) \jasminopenbrace{}\\
    \jasminindent{1}x = [p + i];\\
    \jasminindent{0}\jasminclosebrace{}\\
    \jasminindent{0}[x] = \jasminconstant{0};\\
    \jasminindent{0}secret = [sp + \jasminconstant{4}]; \jasmincomment{// unspill}
  \end{jasmincode}\hspace{2em}\begin{jasmincode}[outerpos=t,outerwidth=20ex]
    \jasminindent{0}tmp = secret;\\
    \jasminindent{0}\jasminkw{if} (i < \jasminconstant{5}) \jasminopenbrace{}\\
    \jasminindent{1}x = [p + i];\\
    \jasminindent{0}\jasminclosebrace{}\\
    \jasminindent{0}[x] = \jasminconstant{0};\\
    \jasminindent{0}secret = tmp;
  \end{jasmincode}
  \caption{Unspilling breaks SCT transparency.}
  \label{lst:sct:unspilling}
\end{figure}

Unspilling does not reflect SCT because speculatively executed
out-of-bounds loads may load spilled secrets.
\Cref{lst:sct:unspilling} presents an example of Unspilling.
On the left side, the value of \texttt{secret} is temporarily stored on
the stack at \texttt{sp+4}.
The memory access in Line~3 may speculatively read out-of-bounds and
load the value from \texttt{secret} at \texttt{sp+4}.
In Line~5, the program leaks the loaded value, which is an SCT
violation, and, in Line~6, \texttt{secret} is loaded back to a register.
On the right side of \cref{lst:sct:unspilling} we have the result of
applying the Unspilling transformation to this program.
It introduces a temporary variable \texttt{tmp}, and replaces the load
and store of \texttt{secret} with copies to \texttt{tmp}.
The output program no longer contains the SCT violation above: since
\texttt{secret} is not on the stack, the out-of-bounds access can not
load it.

\end{document}